\newtheorem{theorem}{Theorem} 
\newtheorem{lemma}{Lemma}
\newtheorem{corollary}{Corollary}
\newtheorem{claim}{Claim}
\newtheorem{definition}{Definition}
\newtheorem{fact}{Fact}
\newtheorem{proposition}{Proposition}
\newtheorem{oq}{Open Question}
\newenvironment{lemma-repeat}[1]{\begin{trivlist}
\item[\hspace{\labelsep}{\bf\noindent Lemma \ref{#1} }]\em }%
{\end{trivlist}}
\newenvironment{theorem-repeat}[1]{\begin{trivlist}
\item[\hspace{\labelsep}{\bf\noindent Theorem \ref{#1} }]\em }%
{\end{trivlist}}
\newcommand{\remove}[1]{}
\DeclareMathOperator{\MIS}{MIS}
\DeclareMathOperator{\MaxIS}{MaxIS}
\def \poly {\text{poly}}
\newcommand{\ORL}{\textsc{Boppanna}}
\newcommand{\SORL}{\textsc{SeqBoppanna}}
\newcommand{\AIS}{\mathcal A}
\newcommand{\LMIS}{\textsc{RandMIS}}
\date{}
\begin{document}

\newcommand{\ThmMain}
{
	
		Given a weighted graph $G_w=(V,E,w)$, there is a constant $c>1$ and a $\poly(\log\log n)$-round algorithm in the CONGEST model that finds, with high probability, an independent set of weight at least $\frac{w(V)}{c\Delta}$.
	
}

\newcommand{\ThmMainAppEx}
{
	
	There is a randomized $(\poly(\log\log n)/\epsilon)$-round algorithm in the CONGEST model that finds, with high probability, a $(1+\epsilon)\Delta$-approximation for maximum-weight independent set.
	
}

\newcommand{\ThmMainD}
{
	
		Given a weighted graph $G_w=(V,E,w)$, there is an $O(\MIS(n,\Delta))$-round algorithm that finds an independent set of weight at least $\frac{w(V)}{4(\Delta+1)}$, in the CONGEST model. Whether the algorithm is deterministic or randomized depends on the $\MIS$ algorithm that is used as a black-box.
	
}

\newcommand{\ThmMainDApp}
{
	
	There is an $O(\MIS(n,\Delta)/\epsilon)$-round algorithm in the CONGEST model that finds a $(1+\epsilon)\Delta$-approximation for maximum-weight independent set. Whether the algorithm is deterministic or randomized, depends on the $\MIS$ algorithm that is run as a black-box.
	
}

\newcommand{\ThmMainArb}
{
	
	There is a randomized $O(\log n\cdot \poly\log\log n/\epsilon)$-round algorithm in the CONGEST model that finds, with high probability, an $8(1+\epsilon)\alpha$-approximation for maximum-weight independent set.
	
}

\begin{titlepage}
	\title{Improved Distributed Approximations for Maximum-Weight Independent Set}
	\author{Ken-ichi Kawarabayashi\thanks{NII, Japan, \texttt{kkeniti@nii.ac.jp}}
		\and Seri Khoury\thanks{University of California, Berkeley,  \text{seri\_khoury@berkeley.edu}}\and Aaron Schild\thanks{University of California, Berkeley, \texttt{aschild@berkeley.edu}}\and Gregory Schwartzman\thanks{NII, Japan, \texttt{greg@nii.ac.jp}}}
	
		\maketitle

	\begin{abstract}
		We present improved results for approximating maximum-weight independent set ($\MaxIS$) in the CONGEST and LOCAL models of distributed computing. Given an input graph, let $n$ and $\Delta$ be the number of nodes and maximum degree, respectively, and let $\MIS(n,\Delta)$ be the the running time of finding a \emph{maximal} independent  set ($\MIS$) in the CONGEST model. Bar-Yehuda et al. [PODC 2017] showed that there is an algorithm in the CONGEST model that finds a $\Delta$-approximation for $\MaxIS$ in $O(\MIS(n,\Delta)\log W)$ rounds, where $W$ is the maximum weight of a node in the graph, which can be as high as $\poly (n)$. Whether their algorithm is deterministic or randomized depends on the $\MIS$ algorithm that is used as a black-box. Our results:
		\begin{enumerate}
			\item  A deterministic $O(\MIS(n,\Delta)/\epsilon)$-round algorithm that finds a $(1+\epsilon)\Delta$-approximation for $\MaxIS$ in the CONGEST model.
			\item A randomized $(\poly(\log\log n)/\epsilon)$-round algorithm that finds, with high probability, a $(1+\epsilon)\Delta$-approximation for $\MaxIS$ in the CONGEST model. That is, by sacrificing only a tiny fraction of the approximation guarantee, we achieve an \emph{exponential} speed-up in the running time over the previous best known result. 		Due to a lower bound of $\Omega(\sqrt{\log n/\log \log n})$ that was given by Kuhn, Moscibroda and Wattenhofer [JACM, 2016] on the number of rounds for any (possibly randomized) algorithm that finds a maximal independent set (even in the LOCAL model) this result implies that finding a $(1+\epsilon)\Delta$-approximation for $\MaxIS$ is exponentially easier than $\MIS$. 
			\item A randomized $O(\log n\cdot \poly(\log\log n)/\epsilon)$-round algorithm that finds, with high probability, a $8(1+\epsilon)\alpha$-approximation for $\MaxIS$ in the CONGEST model, where $\alpha$ is the arboricity of the graph. For graphs of arboricity $\alpha<\Delta/(8(1+\epsilon))$, this result improves upon the previous best known result in both the approximation factor and the running time. 
		\end{enumerate}
		
		One may wonder whether it is possible to approximate $\MaxIS$ in fewer than $\poly(\log\log n)$ rounds. We show that this is possible for unweighted graphs of maximum degree $\Delta\leq n/\log n$. For such graphs, we give a randomized $O(1/\epsilon)$-round algorithm in the CONGEST model that, with high probability, finds an independent set of size at least $\frac{n}{(1+\epsilon)(\Delta+1)}$, which is a $(1+\epsilon)(\Delta+1)$-approximation to the optimal solution. This result cannot be extended to very high degree graphs, as we show a lower bound of $\Omega(\log^*n)$ rounds for any (possibly randomized) algorithm that finds an independent set of size $\Omega(n/\Delta)$ in unweighted graphs, even in the LOCAL model. The hard instances that we use to prove our lower bound are graphs of maximum degree $\Delta=\Omega(n/\log^*n)$.
	\end{abstract}
\thispagestyle{empty}
\end{titlepage}


\newpage

\section{Introduction and Related Work}

One of the most fundamental problems in distributed graph algorithms is the \emph{maximal independent set} problem ($\MIS$), where given an input graph, we need to find a maximal subset of the nodes such that no two nodes in the subset are adjacent. This problem has received a tremendous amount of attention in various distributed models (see for example~\cite{Ghaffari19,Ghaffari2016,GhaffariGKMR18,BarenboimEK14,BarenboimEPS12,FirstBarenboimEPS16,AlonBI86,luby1986simple,LenzenW11,KuhnMW16,RozhonGhaffari,Balliu0HORS19,AwerbuchGLP89,PanconesiR01,PanconesiS96,linial1992locality,Naor91}). It is considered one of the four classic problems of local distributed
algorithms, along with edge coloring, vertex coloring, and maximal matching~\cite{2013Barenboim,PanconesiR01,FischerGK17}.

Independent sets have many applications in practical and theoretical computer science. Especially independent sets of large size. These include applications in economics~\cite{bodino1962economic}, computational biology~\cite{Butenko05clique-detectionmodels,Barticle}, coding theory~\cite{Bomze99themaximum,ButenkoPSSS02}, and experimental design~\cite{BalasY86}. In unweighted graphs, a \emph{maximum} independent set is an independent set of maximum size. In weighted graphs, a \emph{maximum-weight} independent set ($\MaxIS$) is an independent set of maximum total weight, where by total we mean the sum of weights of nodes in the independent set.

In an unweighted graph, any $\MIS$ constitutes a $\Delta$-approximation for $\MaxIS$, where $\Delta$ is the maximum degree of a node in the graph. This implies that $\MIS$ cannot be easier than $\Delta$-approximation for $\MaxIS$ in unweighted graphs, regardless of the computational model. This leaves a natural question of whether $\Delta$-approximation for $\MaxIS$ is easier than $\MIS$. In the classical sequential setting, finding an $\MIS$ has the same complexity as finding a $\Delta$-approximation for $\MaxIS$ (even in weighted graphs) as both problems admit simple linear-time greedy algorithms. In this work, we show that in the distributed setting, finding a $(1+\epsilon)\Delta$-approximation for $\MaxIS$ is exponentially easier than $\MIS$.

\subsection*{Distributed Computing and Our Results}

The major two models of distributed graph algorithms are the LOCAL and CONGEST models. In the LOCAL model~\cite{linial1992locality}, there is a synchronized communication network of $n$ computationally-unbounded nodes, where each node has a unique $O(\log n)$-bit identifier. In each communication round, each node can send an unbounded-size message to each of its neighbors. The task of the nodes is to compute some function of the network (e.g., its diameter, the value of a maximum independent set, etc.), while minimizing the number of communication rounds. The CONGEST model~\cite{Peleg:2000} is similar to the LOCAL model, where the only difference is that the message-size is bounded by $O(\log n)$ bits.

In this work we study the problem of approximating $\MaxIS$, which has been studied in both the LOCAL and CONGEST models~\cite{GhaffariKM17,lenzen2008leveraging, czygrinow2008fast,Censor-HillelKP17,bodlaenderbrief,HalldorssonK18,BachrachCDELP19}. In unweighted graphs, one can find a $\Delta$-approximation for $\MaxIS$ by finding an $\MIS$. In recent years, our understanding of the complexity of $\MIS$ has been substantially improving\cite{Ghaffari19,Ghaffari2016,FirstBarenboimEPS16,RozhonGhaffari}, leading to a recent remarkable breakthrough by Rozhon and Ghaffari~\cite{RozhonGhaffari}, where they show a deterministic $\poly(\log n)$-round algorithm for finding an $\MIS$, even in the CONGEST model. This result also implies a randomized algorithm that finds an $\MIS$ with high probability in $O(\log\Delta)+\poly(\log\log n)$ rounds, in the CONGEST model\footnote{We say that an algorithm succeeds with high probability if it succeeds with probability $1-1/n^c$ for an arbitrary constant $c>1$.}~\cite{RozhonGhaffari,Ghaffari19,GhaffariPersonal,Censor-HillelPS17}. 

In a weighted graph, an $\MIS$ doesn't necessarily constitute a $\Delta$-approximation for $\MaxIS$. For the weighted case, Bar-Yehuda et al.~\cite{Bar-YehudaCGS17} showed a $\Delta$-approximation algorithm in the CONGEST model that takes $O(\MIS(n,\Delta)\cdot \log W)$ rounds, where $\MIS(n,\Delta)$ is the running time for finding an $\MIS$ in graphs with $n$ nodes and maximum degree $\Delta$,  and $W$ is the maximum weight of a node in the graph (which can be as high as $\poly(n)$). Whether their algorithm is deterministic or randomized, depends on the $\MIS$ algorithm that is used as a black-box.

In this work we present faster algorithms compared to~\cite{Bar-YehudaCGS17}, by paying only a $(1+\epsilon)$ multiplicative overhead in the approximation factor. Our main result (Theorem~\ref{thm:Wmain}) is a randomized algorithm that achieves an exponential speed-up compared to~\cite{Bar-YehudaCGS17}. Our results:

\begin{theorem}\label{thm:main}

	\ThmMainDApp
\end{theorem}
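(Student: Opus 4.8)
The plan is to obtain the bound as a distributed emulation of the textbook sequential greedy algorithm for weighted independent set, sped up by cashing in the $(1+\epsilon)$ slack. First I would normalize the weights: round every weight down to the nearest integer power of $(1+\epsilon)$. This rescales the weight of every independent set, and of the optimum, by at most a factor of $(1+\epsilon)$, so it suffices to output a $\Delta$-approximation with respect to the rounded weights $\tilde w$. Partition the vertices into \emph{weight classes} by the common value of $\tilde w$, listed in non-increasing weight order; each vertex learns its own class for free, and neighbors can compare their classes in one round.

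The engine is the sequential rule ``scan the vertices in non-increasing $\tilde w$-order and take a vertex whenever none of its already-taken neighbors precede it.'' This is an \emph{exact} $\Delta$-approximation with respect to $\tilde w$: when the rule takes $v$ it blocks at most $\deg(v)\le\Delta$ later vertices, each of weight at most $\tilde w(v)$, so charging every optimum vertex to itself (if taken) or to the taken neighbor that blocked it gives $\mathrm{OPT}\le \Delta\cdot \tilde w(\text{greedy})$. Distributively one processes the classes from heaviest to lightest: once the choices in the heavier classes are fixed, a \emph{maximal} independent set among the currently available vertices of the present class (those with no taken neighbor) is exactly what the rule takes there, and it is delivered by one black-box $\MIS$ call on the induced subgraph (at most $n$ vertices, degree at most $\Delta$). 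Doing this for every class reproduces the $O(\MIS(n,\Delta)\cdot(\text{number of classes}))$ bound of Bar-Yehuda et al., but the number of classes is $O(\log_{1+\epsilon} W)=O(\epsilon^{-1}\log n)$, so this alone is too slow.

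The crux, and the step I expect to be the main obstacle, is to cut the number of $\MIS$ invocations down to $O(1/\epsilon)$ while still charging correctly against the optimum. The approach I would pursue is to process the classes in $O(1/\epsilon)$ \emph{batches}, one $\MIS$ call per batch, and to prove that resolving a whole batch with a single (weight-oblivious) maximal independent set---rather than class by class inside it---loses only a $(1-\Omega(\epsilon))$ fraction of the weight that batch could contribute, so that the per-batch losses compose to a single $(1+\epsilon)$ factor overall instead of a factor per class. Making this rigorous is where the work lies: one needs either (a) to choose the batch boundaries together with a charging scheme so that every ``out-of-order'' blocking inside a batch (a taken vertex blocking a strictly heavier vertex of the same batch) is paid for, up to a $(1+\Omega(\epsilon))$ factor, by weight actually kept in the solution; or (b) to feed the $\MIS$ black box not the raw induced subgraph of a batch but a bounded-blowup auxiliary graph on its vertices whose structure forces any maximal independent set to respect the $\tilde w$-order within a $(1+\epsilon)$ factor, so that one oblivious $\MIS$ call simulates $\Theta(1/\epsilon)$ greedy class-rounds at once. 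Either way the analysis must also verify that each $\MIS$ call still runs on a graph with $\poly(n)$ vertices and $O(\Delta)$ maximum degree---so it costs $O(\MIS(n,\Delta))$ rounds and can be simulated in the CONGEST model with $O(\log n)$-bit messages---that the $O(1/\epsilon)$ rounds of local bookkeeping between batches are negligible, and that in the randomized instantiation a union bound over the $O(1/\epsilon)$ calls keeps the success probability high. Since everything other than the $\MIS$ subroutine is deterministic, the whole algorithm is deterministic precisely when the plugged-in $\MIS$ algorithm is, which yields the claimed $O(\MIS(n,\Delta)/\epsilon)$-round bound.
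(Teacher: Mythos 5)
There is a genuine gap, and you have located it yourself: everything up to and including the class-by-class greedy emulation only reproduces the $O(\MIS(n,\Delta)\cdot\epsilon^{-1}\log W)$ bound of Bar-Yehuda et al., and the step that would cut this to $O(1/\epsilon)$ invocations is left as two speculative directions, neither of which is carried out. Worse, direction (a) as stated would fail: with only $O(1/\epsilon)$ batches, each batch spans $\Theta(\log_{1+\epsilon} W\cdot\epsilon)=\Theta(\log W)$ weight classes, i.e.\ a weight range of $(1+\epsilon)^{\Theta(\log W)}=W^{\Theta(\epsilon)}$, which is polynomial in $n$; a weight-oblivious maximal independent set inside such a batch can select only the lightest vertices and block all the heavy ones, losing almost the entire weight of the batch rather than a $(1-\Omega(\epsilon))$ fraction. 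Direction (b) is not specified concretely enough to evaluate, but forcing a black-box $\MIS$ to respect a polynomial weight range via an auxiliary graph is essentially the original problem again.

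The paper closes the gap by abandoning the weight-class decomposition entirely. First, a single $\MIS$ call on the subgraph of \emph{locally heavy} nodes (those $v$ with $w(v)\geq\frac{1}{2(\delta(v)+1)}\sum_{u\in N^+(v)}w(u)$) yields an independent set of weight at least $\frac{w(V)}{4(\Delta+1)}$ --- crucially, a fixed fraction of the \emph{total} weight, not merely of $OPT$ (Theorem~\ref{thm:mainO1}). Second, this subroutine is boosted by local-ratio (Theorem~\ref{thm:amplification}): for $t=c/\epsilon$ phases one runs the subroutine on the current weight function, pushes the returned set onto a stack, and subtracts the selected weights from all closed neighborhoods; the stack property (Proposition~\ref{prop:stack}) guarantees the final greedily-unwound set $I$ satisfies $w(I)\geq\sum_i w_i(I_i)$. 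The dichotomy that replaces your batching argument is: either the residual weight after $t$ phases is at most $\frac{\epsilon}{1+\epsilon}OPT$, in which case the local-ratio decomposition certifies $w(I)\geq\frac{OPT}{(1+\epsilon)\Delta}$, or every phase still had total weight at least $\frac{\epsilon}{1+\epsilon}OPT$ and hence contributed at least $\frac{\epsilon}{(1+\epsilon)c\Delta}OPT$ to the stack, so $t$ phases suffice. It is exactly the total-weight (rather than $OPT$-relative) guarantee of the subroutine that makes the second branch of this dichotomy go through with only $O(1/\epsilon)$ phases; your proposal has no analogue of it.
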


\begin{theorem}\label{thm:Wmain}
	\ThmMainAppEx
\end{theorem}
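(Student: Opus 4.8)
The plan is to reduce the $(1+\epsilon)\Delta$-approximation of $\MaxIS$ to a small number of calls to the "large weight independent set" primitive of Theorem~\ref{thm:main-ThmMain} (the result asserting that one can find, in $\poly(\log\log n)$ rounds of CONGEST, an independent set of weight at least $w(V)/(c\Delta)$). The key observation is the standard bucketing trick: the optimum $\MaxIS$ has weight $\mathrm{OPT}$, and a simple averaging argument shows that at least one of the two extreme regimes helps — either the total weight $w(V)$ is not too much larger than $\mathrm{OPT}$ (in which case the $w(V)/(c\Delta)$-primitive already delivers a good fraction of $\mathrm{OPT}$), or $w(V) \gg \mathrm{OPT}$, meaning there is a lot of "dead weight" concentrated on vertices that cannot all be in the optimum, and we must peel it off. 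Since we cannot compute $\mathrm{OPT}$ exactly, I would run the peeling in phases indexed by a weight threshold that we geometrically decrease, and argue that after $O(1/\epsilon)$ phases — or, more carefully, after $O(\log W / \epsilon)$ crude phases replaced by a cleverer $O(1/\epsilon)$-phase scheme — one of the intermediate independent sets has weight at least $\mathrm{OPT}/((1+\epsilon)\Delta)$.

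Concretely, the cleaner route I would take follows the local-ratio / weight-shifting idea already present in Bar-Yehuda et al.~\cite{Bar-YehudaCGS17}, but applied to the $w(V)/(c\Delta)$ primitive rather than to an $\MIS$ black-box. Repeat the following for $t = 1, 2, \dots$: run the primitive of Theorem~\ref{thm:main-ThmMain} on the current weighted graph $G_{w^{(t)}}$ to obtain an independent set $I_t$ with $w^{(t)}(I_t) \ge w^{(t)}(V)/(c\Delta)$; then subtract from every vertex's weight its contribution along $I_t$ in the local-ratio sense, i.e. for each $v\in I_t$ reduce $w(v)$ and the weights of its neighbors appropriately, deleting non-positive vertices. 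Each phase strictly removes a $(1/(c\Delta))$-fraction of the remaining total weight, so after $O(c\Delta \cdot \log W)$ phases all weight is gone — that is too slow, so the actual argument must instead show that the \emph{best} $I_t$ among the first $O(1/\epsilon)$ phases is already a $(1+\epsilon)\Delta$-approximation, using that $\sum_t w^{(t)}(I_t) \ge \mathrm{OPT}$ (a telescoping/local-ratio invariant) while only $O(1/\epsilon)$ of the $w^{(t)}(I_t)$ need to be summed before one of them individually exceeds $\mathrm{OPT}/((1+\epsilon)\Delta)$, because each is a $1/(c\Delta)$-fraction of a total weight that does not decay too fast. Since each phase is $\poly(\log\log n)$ rounds and succeeds with high probability, a union bound over $O(1/\epsilon)$ phases gives the claimed $(\poly(\log\log n)/\epsilon)$-round high-probability bound.

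The step I expect to be the main obstacle is making the phase-count independent of $\log W$: the naive local-ratio peeling needs $\Omega(\log W)$ rounds to exhaust the weight, and $W$ can be $\poly(n)$, so a direct union bound would give a $\poly(\log\log n)\cdot\log W / \epsilon = \poly(\log\log n)\cdot \Theta(\log n)/\epsilon$ bound — far from the target. Removing the $\log W$ factor requires exploiting that we are willing to lose a $(1+\epsilon)$ factor: I would group vertices into $O(\log_{1+\epsilon} W) = O(\log W/\epsilon)$ weight classes, drop the classes whose total weight is a negligible $(\le \epsilon/\Delta)$-fraction of $w(V)$, and then argue that the surviving weighted instance has $w(V) \le (1+\epsilon)\,\Delta\cdot\mathrm{OPT}\cdot(\text{small factor})$ so that a single application (or $O(1/\epsilon)$ applications) of the primitive suffices; alternatively one rounds weights to powers of $(1+\epsilon)$ and handles each of the few "heavy" classes separately. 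Verifying that the rounding and class-dropping together cost only a $(1+\epsilon)$ factor in the approximation — and that the class-dropping can be done in $O(1)$ CONGEST rounds given that each vertex knows $w(V)$ up to a constant (which itself needs an aggregation, but an $O(\log^* n)$- or $O(1)$-round approximate count via the known degree bound $\Delta \le n$ suffices, or it can be computed in the allowed budget) — is the technical heart, but each piece is routine once the decomposition is set up.
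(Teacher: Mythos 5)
Your high-level plan --- take the $\poly(\log\log n)$-round primitive that finds an independent set of weight at least $w(V)/(c\Delta)$ (Theorem~\ref{thm:WmainO2} in the paper, itself obtained by a weighted sparsification argument that you do not address but may legitimately cite), and boost it by local-ratio peeling over $O(1/\epsilon)$ phases --- matches the paper's structure. But the step you flag as the main obstacle is where your argument actually breaks, and the fix is not the weight-bucketing you sketch. First, returning the \emph{best single} $I_t$ cannot work: each $I_t$ only guarantees $w^{(t)}(I_t)\ge w^{(t)}(V)/(c\Delta)$, so even when every residual total weight stays above $\frac{\epsilon}{1+\epsilon}OPT(G_w)$, the best individual phase may only yield weight $\Theta\bigl(\epsilon\cdot OPT(G_w)/(c\Delta)\bigr)$ --- an $O(\Delta/\epsilon)$-approximation, not $(1+\epsilon)\Delta$. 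Likewise your stated invariant $\sum_t w^{(t)}(I_t)\ge OPT$ is false (that sum is at most the weight of an independent set, hence at most $OPT$; what is needed is a lower bound of $OPT/((1+\epsilon)\Delta)$). The missing ingredient is that the phases' outputs must be \emph{merged into a single independent set} by popping them off a stack in reverse order and adding greedily, together with the ``stack property'' (Proposition~\ref{prop:stack}): the merged set $I$ satisfies $w(I)\ge\sum_t w_t(I_t)$ under the \emph{original} weights. With that in hand, a two-case analysis on the residual weight after $t=c/\epsilon$ phases (Lemma~\ref{lem:amplify}) closes the argument with no dependence on $W$: if $w_t(V)\le\frac{\epsilon}{1+\epsilon}OPT(G_w)$, the total reduced weight already carries a $\frac{1}{1+\epsilon}$ fraction of $OPT(G_w)$, each $I_i$ is a $\Delta$-approximation for its phase's reduced weight function, and local-ratio plus the stack property finish; otherwise every phase contributes at least $\frac{\epsilon}{(1+\epsilon)c\Delta}OPT(G_w)$ to the stack and summing over $c/\epsilon$ phases suffices.

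Your fallback of rounding weights to powers of $(1+\epsilon)$ and dropping light weight classes is both unnecessary and problematic in CONGEST: deciding which classes are a negligible fraction of $w(V)$ requires a global aggregate, which costs $\Omega(D)$ rounds, and the paper explicitly assumes nodes have no global information. The case analysis above avoids any such aggregation --- $OPT(G_w)$ and $w(V)$ appear only in the analysis, never in the algorithm.
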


Due to a lower bound of $\Omega(\sqrt{\log n/\log \log n})$ that was given by Kuhn, Moscibroda and Wattenhofer~\cite{KuhnMW16}, against any (possibly randomized) algorithm that finds an $\MIS$, even in the LOCAL model, Theorem~\ref{thm:Wmain} implies that finding a $(1+\epsilon)\Delta$-approximation for $\MaxIS$ is exponentially easier than $\MIS$.  

Using the algorithm from Theorem~\ref{thm:Wmain}, we can also get an improved approximation algorithm for a wide range of arboricity. Let $\alpha$ be the arboricity of the input graph (See also Definition~\ref{def:arb}). For graphs of arboricity $\alpha\leq \Delta/(8(1+\epsilon))$, Theorem~\ref{thm:mainArb} improves upon~\cite{Bar-YehudaCGS17} in both the running time and approximation factor.

\begin{theorem}\label{thm:mainArb}
	\ThmMainArb
\end{theorem}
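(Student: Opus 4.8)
The plan is to reduce the arboricity-$\alpha$ case to the maximum-degree case handled by Theorem~\ref{thm:Wmain} via a standard peeling/partitioning argument, losing only an $O(\alpha)$ factor in the weight. First I would recall that in a graph of arboricity $\alpha$, every subgraph has average degree at most $2\alpha$, so by Markov at least half of the total weight (or of the vertices, in the weighted version by a weighted averaging) lies on vertices of degree at most, say, $O(\alpha)$ within the current graph. The idea is to repeatedly peel off the low-degree vertices: let $H_0 = G$, and let $V_i$ be the set of vertices of $H_i$ whose degree in $H_i$ is at most $4\alpha$; set $H_{i+1} = H_i \setminus V_i$. Since each $V_i$ captures a constant fraction of the remaining weight, after $O(\log W)$ — or, more carefully, after $O(\log n)$ if we only need a constant fraction of the weight — rounds the total weight is exhausted, and the layers $\{V_i\}$ partition $V$. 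On each layer $V_i$, the \emph{induced} subgraph $G[V_i]$ has maximum degree at most $4\alpha$, so we can run the algorithm of Theorem~\ref{thm:Wmain} on $G[V_i]$ with parameter $\Delta' = 4\alpha$ to obtain an independent set $I_i \subseteq V_i$ of weight at least $w(V_i)/(c \cdot 4\alpha)$.

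The subtlety is that $\bigcup_i I_i$ need not be independent in $G$, since edges can go between different layers. To fix this, I would use the layering as an acyclic orientation: orient every edge from its lower-indexed endpoint to its higher-indexed endpoint (breaking ties by ID within a layer). Then run, in parallel, the Theorem~\ref{thm:Wmain} algorithm on each $G[V_i]$, and afterwards resolve cross-layer conflicts greedily from the top layer down — but doing this naively costs $O(\log n)$ \emph{depth of the layering} extra rounds, which is fine for the target round complexity. Alternatively, and more cleanly, one notes that this is exactly the kind of "local-to-global" reduction already used in the literature: we only need each vertex to know its layer index (computable in $O(\log n)$ rounds by iterated peeling) and then apply the standard charging. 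Accounting: the total running time is $O(\log n)$ rounds for the peeling plus $O(\log n)$ rounds for conflict resolution plus $\poly\log\log n / \epsilon$ rounds for the invocations of Theorem~\ref{thm:Wmain} (run simultaneously on vertex-disjoint induced subgraphs), giving $O(\log n \cdot \poly\log\log n / \epsilon)$ rounds overall; running Theorem~\ref{thm:Wmain} with its own $\epsilon$ contributes the $(1+\epsilon)$ factor.

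For the approximation factor, let $I^\star$ be an optimal MaxIS of $G$. Since $\sum_i w(V_i) = w(V) \ge w(I^\star)$, and since restricting $I^\star$ to a single layer $V_i$ is an independent set in $G[V_i]$, a MaxIS of $G[V_i]$ has weight at least $w(I^\star \cap V_i)$; but more simply, summing $w(I_i) \ge w(V_i)/(4c\alpha)$ over $i$ and then paying a factor $2$ for the worst case of the cross-layer conflict resolution (each surviving vertex can be blamed for at most one discarded neighbor of no greater weight when we process greedily, or we can lose a factor by taking the heavier endpoint) yields a final independent set of weight $\Omega(w(V)/\alpha) \ge \Omega(w(I^\star)/\alpha)$, with the constants worked out to give the stated $8(1+\epsilon)\alpha$. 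The main obstacle I anticipate is making the cross-layer conflict resolution both correct and cheap: the clean option is to choose the peeling threshold and the tie-breaking so that the orientation has small out-degree and the greedy resolution from the top layer only costs a factor of $2$ in weight while fitting inside the round budget; I would handle this by the standard argument that in an acyclic orientation with out-degree $d$, one can $2$-approximate MaxIS on each "level set" and combine, rather than trying to be cleverer. Getting the precise constant $8$ (rather than some larger constant) will require choosing the degree threshold to be exactly $2\alpha$ or $4\alpha$ and tracking the factor-$2$ from peeling and the factor from Theorem~\ref{thm:Wmain}'s constant $c$ carefully.
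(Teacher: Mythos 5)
There is a genuine gap in the cross-layer conflict-resolution step, and it is exactly the difficulty that the paper's local-ratio machinery exists to handle. In your scheme each $I_i$ is computed with respect to the \emph{original} weights on $G[V_i]$, and you then claim that a greedy merge across layers loses only a factor of $2$ because ``each surviving vertex can be blamed for at most one discarded neighbor of no greater weight.'' Neither direction of the greedy works. Concretely, let $u$ have weight $1$ and degree $4\alpha+2$, with one neighbor $v$ of weight $W\gg 1$ and degree $1$, and $4\alpha+1$ further unit-weight leaves. The peeling puts $v$ and the leaves in $V_1$ and $u$ in $V_2$; the per-layer algorithms return $I_1\ni v$ and $I_2=\{u\}$. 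Resolving top-down keeps $u$ and discards $v$, retaining weight $O(\alpha)$ against $OPT\ge W$, so the ratio is unbounded. Swapping the weights ($w(u)=W$, $w(v)=1$) breaks bottom-up resolution the same way. The point is that a kept vertex can block a single discarded neighbor of \emph{arbitrarily larger} weight (and, in the other direction, arbitrarily many discarded neighbors), so no constant-factor charging is available when the layers are solved with respect to the unmodified weight function.

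The paper's fix is to make the phases sequential and to run phase $i+1$ on \emph{residual} weights: every node in $V_i^{4\alpha}$ is zeroed out, and every other node has the weight of its $I_i$-neighbors subtracted. With these reductions, the greedy stack pop-out provably loses nothing --- this is the stack property (Proposition~\ref{prop:stack}), which shows $w(I)\ge\sum_i w'_i(I_i)$ --- and the local-ratio theorem (Theorem~\ref{thm:LR}) converts per-phase guarantees on the reduced weight functions into a global guarantee. The per-phase guarantee itself (Proposition~\ref{prop:arbWeight}) needs an extra argument, since $OPT(G^i_{w'_i})$ can exceed the optimum of the degree-$\le 4\alpha$ subgraph by up to $4\alpha\, w'_i(I_i)$; this is where the constant doubles from $4(1+\epsilon)\alpha$ to $8(1+\epsilon)\alpha$. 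Two smaller notes: the termination argument is a Markov bound on the \emph{number} of remaining vertices (at least half have degree $\le 4\alpha$ in any subgraph of arboricity $\alpha$), not on their weight, which is what bounds the number of phases by $O(\log n)$; and because the residual weights of phase $i+1$ depend on $I_i$, the phases cannot be parallelized, which is why the final running time is $O(\log n\cdot\poly\log\log n/\epsilon)$ rather than the peel-then-parallel bound you sketch.
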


\paragraph{Results for unweighted graphs:} Recently, Boppana et al.~\cite{BoppanaHR18} showed that running the one-round classical ranking algorithm yields a solution with an \emph{expected} weight at least $w(V)/(\Delta+1)$, where $w(V)$ is the total weight of nodes in the graph.\footnote{To the best of our knowledge, the classical ranking algorithm has first appeared in the book of Alon and Spencer~\cite{AlonS92} and is due to Boppana (see also the references for this algorithm in~\cite{BoppanaHR18}).} In the classical ranking algorithm, each node $v$ picks a number $r_v$ uniformly at random in $[0,1]$. If $r_v>r_u$ for any neighbor $u$ of $v$, then $v$ joins the independent set. Since every node joins the independent set with probability at least $1/(\Delta+1)$, the expected weight of the independent set is at least $w(V)/(\Delta+1)$. However, algorithms that work well in expectation don't necessarily work well with good probability. In fact, for the algorithm given by~\cite{BoppanaHR18}, it is not very hard to construct examples in which the \emph{variance} of the solution is very high, in which case the algorithm doesn't return the expected value with high probability. In this work we prove the following stronger theorem for any algorithm.

\begin{theorem}\label{thm:LB}
	Any algorithm that finds an independent set of size $\Omega(n/\Delta)$ in unweighted graphs, with success probability $p\geq 1-1/\log n$ must spend $\Omega(\log^*n)$ rounds, even in the LOCAL model.
\end{theorem}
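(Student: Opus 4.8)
The plan is to build hard instances by ``blowing up'' a short path into cliques, reduce the problem on them to finding a linear‑size independent set on a path with only $\Theta(\log^\ast n)$ vertices, and apply a Linial‑style indistinguishability argument there; the only real difficulty will be randomized algorithms. Fix $m=\Theta(\log^\ast n)$, with a large enough constant. Starting from the path $P_m$ on $u_1,\dots,u_m$, replace each $u_i$ by a clique $C_i$ on $s:=n/m$ vertices and each edge $u_iu_{i+1}$ by a complete bipartite graph between $C_i$ and $C_{i+1}$, obtaining an $n$‑vertex graph $G$. An interior clique vertex has degree $(s-1)+2s=3s-1$, so $\Delta=\Theta(n/m)=\Theta(n/\log^\ast n)$, matching the statement. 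Because each $C_i$ is a clique and consecutive cliques are completely joined, any independent set of $G$ uses at most one vertex per clique and meets a set of cliques that is independent in $P_m$ (and any such set of cliques lifts back), so $\mathrm{MaxIS}(G)=\lceil m/2\rceil$, and ``independent set of size $\Omega(n/\Delta)=\Omega(m)$'' is feasible and amounts to finding a constant fraction of a maximum independent set of $P_m$.

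\textbf{The reduction.} A $t$‑round LOCAL algorithm on graphs of this shape yields a $t$‑round LOCAL algorithm on $P_m$ that outputs an independent set of size $\ge\beta m$, where $\beta>0$ is the constant hidden in $\Omega(n/\Delta)$ (up to the factor $3$ above): vertex $u_i$ simulates all of $C_i$ by fabricating $s$ distinct identifiers for it from its own identifier; in $t$ rounds on $P_m$ it learns the fabricated identifiers of all cliques within distance $t$, which together with the known structure is exactly the radius‑$t$ ball in $G$ of every vertex of $C_i$; it evaluates the $G$‑algorithm locally on $C_i$ and puts $u_i$ in the output iff $C_i$ received a vertex. This preserves independence, solution size, and success probability, and since the simulation lives in the LOCAL model it is insensitive to message size, so a CONGEST bound follows a fortiori. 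It therefore suffices to prove: on $P_m$ with $m=\Theta(\log^\ast n)$ vertices and identifiers from $[\poly(n)]$, finding an independent set of size $\ge\beta m$ with probability $\ge 1-1/\log n$ requires $\Omega(\log^\ast n)$ rounds.

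\textbf{Deterministic lower bound.} This is Linial's argument. Given a deterministic $t$‑round algorithm $A$, colour each $(2t{+}1)$‑element set of identifiers by the bit $A$ outputs at the centre of a path whose radius‑$t$ ball carries those identifiers in increasing order; by Ramsey's theorem there is a monochromatic set $S$ of $m$ identifiers, which exists because $[\poly(n)]$ exceeds a tower of $2$'s of height $O(t)$ over $m$ for every $t=o(\log^\ast n)$ (and if $t=\Omega(\log^\ast n)$ we are done). Running $A$ on $P_m$ with $S$ placed in increasing order, every one of the $\ge m-2t$ interior vertices sees an increasing $(2t{+}1)$‑tuple from $S$ and hence outputs the same bit: if that bit is $1$ two adjacent vertices are selected, contradicting independence; if it is $0$ the output has size $\le 2t$, which is $<\beta m$ unless $t\ge\beta m/2$. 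With the constant in $m$ chosen large enough, either case forces $t=\Omega(\log^\ast n)$.

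\textbf{Randomized case — the main obstacle.} The one delicate point, and the reason $m=\Theta(\log^\ast n)$ is exactly the right size, is extending this to randomized algorithms with failure probability up to $1/\log n$. By Yao's principle it suffices to exhibit a distribution over $P_m$‑instances on which every deterministic $t$‑round algorithm fails with probability $>1/\log n$; the guiding idea is to condition on the identifiers landing in increasing order of index, an event of probability $1/m!=2^{-\Theta(m\log m)}=2^{-O(\log^\ast n\cdot\log\log^\ast n)}$, which exceeds $1/\log n$ for large $n$ because $\log^\ast n\cdot\log\log^\ast n=o(\log\log n)$, and then to invoke the indistinguishability of the previous step. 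The genuine obstacle is that the Ramsey set used there is algorithm‑dependent whereas a hard distribution must be fixed obliviously, while the large identifier space $[\poly(n)]$ — indispensable if the bound is to reach $\Omega(\log^\ast n)$ rather than $\Omega(\log^\ast\log^\ast n)$ — makes a random instance unlikely to land inside any single Ramsey set; reconciling these requires the standard, careful adaptation of Linial's argument to the randomized setting (in the spirit of Naor's ring‑colouring lower bound), which succeeds precisely because the ``monotone coincidence'' probability $2^{-\Theta(m\log m)}$ for this short path dwarfs the permitted failure probability $1/\log n$.
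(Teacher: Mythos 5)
Your route is genuinely different from the paper's: you blow up a \emph{short path} ($m=\Theta(\log^* n)$ cliques of size $n/m$) and try to prove the randomized lower bound from first principles via a Ramsey/indistinguishability argument, whereas the paper blows up a \emph{long cycle} into tower-sized cliques and reduces, as a black box, to Naor's theorem that randomized MIS on the cycle needs $\Omega(\log^* n)$ rounds. Your simulation step and your deterministic Ramsey argument are both sound. The problem is that the entire content of the theorem is the randomized case (a deterministic-only bound is much weaker than the stated $p\ge 1-1/\log n$ guarantee), and that case is asserted rather than proved.

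Concretely, your last paragraph names the obstruction correctly -- the Ramsey set is algorithm-dependent while a Yao-style hard distribution must be fixed in advance, and a uniformly random increasing $m$-tuple from $[\poly(n)]$ lands in no particular monochromatic set -- and then resolves it by appeal to ``the standard, careful adaptation \dots in the spirit of Naor.'' That adaptation is not standard here, for two reasons. First, Naor's argument is tailored to \emph{locally checkable} failures (a monochromatic edge, a non-maximal vertex); your failure event includes the global condition $|I|<\beta m$. Even after rounding the per-tuple acceptance probabilities and applying Ramsey, the natural case analysis does not close: if the monochromatic colour is ``select,'' each interior vertex is selected with probability at least $1/2$, but these events are correlated through shared neighbourhood randomness, and $\Pr[\text{two adjacent vertices both selected}]\ge 1/2+1/2-1=0$ gives nothing; if the colour is ``don't select,'' each vertex is unselected with probability at least $1/2$, but this does not imply that \emph{fewer than $\beta m$ are selected simultaneously} with probability exceeding $1/\log n$. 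Second, the computation $1/m!\gg 1/\log n$ does not interact with the Ramsey step at all: conditioning on the IDs being increasing does not make the algorithm order-invariant, so it does not substitute for landing in a monochromatic set. By contrast, the paper sidesteps all of this by never proving a randomized path/cycle lower bound itself: it converts the approximate-$\MaxIS$ output into an exact MIS of the base cycle by greedily filling gaps, shows the gaps are short using only the \emph{size} guarantee applied to each local neighbourhood (Propositions~\ref{prop:ngbrhood-1}--\ref{prop:ngbrhood}), and makes the per-neighbourhood failure probability union-boundable by taking the cliques tower-sized; the randomized hardness is then inherited wholesale from Theorem~\ref{fact:mis-lb}. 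To salvage your approach you would either have to carry out the Naor-style analysis for the size-$\beta m$ objective on a path (handling both correlation issues above), or restructure it as the paper does, i.e., upgrade your approximate solution to an exact MIS and quote Naor.
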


Interestingly, this hardness result applies for graphs of maximum degree $\Delta=\Omega(n/\log^*n)$. One may wonder whether we can extend the lower bound for much smaller maximum degree graphs. We rule out this possibility, with the following theorem. The proof of Theorem~\ref{thm:LD} relies on a novel idea for analyzing the classical ranking algorithm using \emph{martingales}, and the \emph{local-ratio} technique, on which we elaborate in the technical overview. 

\begin{theorem}\label{thm:LD}
	For unweighted graphs of maximum degree $\Delta\leq n/\log n$, there is an $O(1/\epsilon)$-round algorithm in the CONGEST model that finds, with high probability, an independent set of size at least  $\frac{n}{(1+\epsilon)(\Delta+1)}$.
\end{theorem}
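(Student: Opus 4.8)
The plan is to boost the one-round ranking algorithm into an iterated procedure, analyzed through a local-ratio lens, with high-probability guarantees coming from a martingale concentration argument. The algorithm runs for $T=\Theta(1/\epsilon)$ rounds. Set $G_1=G$; in round $t$ we run the classical ranking algorithm on $G_t$ with a fresh rank vector, let $I_t$ be the resulting set of local maxima, add $I_t$ to the output, and delete the closed neighborhood $A_t:=N_{G_t}[I_t]$ to obtain $G_{t+1}=G_t\setminus A_t$. The output is $S=\bigcup_{t\le T} I_t$. Each round is implementable in $O(1)$ CONGEST rounds, so the algorithm runs in $O(1/\epsilon)$ rounds. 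The set $S$ is independent: each $I_t$ is independent, and for $s<t$ every $G_s$-neighbor of $I_s$ lies in $A_s$ while $I_t\subseteq V(G_s)\setminus A_s$, so no edge of $G$ joins $I_s$ and $I_t$. Conceptually this is the local-ratio decomposition $\mathbf 1_V=\sum_{t\le T}\mathbf 1_{A_t}+\mathbf 1_{V(G_{T+1})}$: deleting $A_t$ and recursing peels off the weight vector $\mathbf 1_{A_t}$, for which any solution containing $I_t$ is within a factor $\Delta+1$ of optimal.

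Next I would establish the deterministic inequality $|S|\ge\frac{n-|V(G_{T+1})|}{\Delta+1}$. Since every vertex of $G_t$ has at most $\Delta$ neighbors in $G_t$, we have $|A_t|\le(\Delta+1)|I_t|$, hence $|I_t|\ge|A_t|/(\Delta+1)$; the sets $A_1,\dots,A_T$ are pairwise disjoint and $A_1\cup\cdots\cup A_T=V\setminus V(G_{T+1})$, so summing gives the claim. Hence it suffices to prove that, with high probability, the residual graph is small: $|V(G_{T+1})|\le\frac{\epsilon}{1+\epsilon}\,n$. Indeed, this yields $|S|\ge\frac{n}{(1+\epsilon)(\Delta+1)}$, exactly what Theorem~\ref{thm:LD} asks for.

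Everything therefore reduces to showing the residual shrinks fast with high probability, and this is where the martingale and the hypothesis $\Delta\le n/\log n$ come in; I expect it to be the main obstacle. The per-round driver should be a concentration lemma of the form: for any graph $H$ on $h$ vertices with maximum degree at most $\Delta$, as long as $h$ is above a threshold depending on $\Delta$ and $\log n$, running ranking on $H$ produces an $I$ whose closed neighborhood $N_H[I]$ covers, with high probability, enough of $V(H)$ that $|V(H)|$ drops by a constant factor (over that round, or over a bounded block of consecutive rounds). Iterating over the $T=\Theta(1/\epsilon)$ rounds and taking a union bound then gives $|V(G_{T+1})|\le\frac{\epsilon}{1+\epsilon}n$; the role of $\Delta\le n/\log n$ is to guarantee that the residual can be pushed below this target before $h$ falls under the threshold where the concentration lemma loses its power. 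The delicate point is that the natural Doob martingale---exposing the ranks of $H$ one vertex at a time and tracking $|N_H[I]|$---has increments that a priori are only bounded by $\Delta$, so a plain Azuma bound is far too weak once $\Delta$ is large. The remedy is to control the conditional variances of the increments (a Freedman/Bernstein-type bound) and to observe that a step whose increment is large occurs only in a local configuration that simultaneously carries extra slack toward the target, which is precisely what the local-ratio accounting is designed to absorb. Making this variance-versus-slack trade-off quantitative, and matching the resulting threshold against $\Delta\le n/\log n$, is the crux of the proof.
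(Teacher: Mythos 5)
Your high-level architecture (iterate the ranking algorithm $\Theta(1/\epsilon)$ times, peel closed neighborhoods, account via local ratio) matches the paper's, and your deterministic steps ($S$ is independent, $|A_t|\le(\Delta+1)|I_t|$, the $A_t$ partition $V\setminus V(G_{T+1})$) are correct. But the proof has a genuine gap exactly where you flag ``the crux'': the concentration statement you reduce to is never proved, and it is both stronger than necessary and doubtful. You need that $T=\Theta(1/\epsilon)$ rounds of ``delete $N[I_t]$'' shrink \emph{every} graph with $\Delta\le n/\log n$ down to $\frac{\epsilon}{1+\epsilon}n$ vertices. The per-round removed fraction can, however, be far below constant: take a bipartite graph with sides $A$, $B$ where every $a\in A$ has degree $\sqrt{\Delta}$ into $B$ and every $b\in B$ has degree $\Delta$ into $A$; then $|A|\approx n$, each $a$ is a local maximum with probability only $\approx 1/\sqrt{\Delta}$, each $b$ is a local maximum with probability $1/(\Delta+1)$, and one checks $\mathbb{E}\bigl[|N[I]|\bigr]=O(n/\sqrt{\Delta})$ in the first round. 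Such examples do collapse in later rounds, but you give no argument that every graph does, and the statement ``the residual is $o(n)$ after $O(1/\epsilon)$ rounds'' is not something the theorem actually requires. The paper's boosting lemma avoids this entirely with a win--win case analysis: either the residual weight is already at most $\frac{\epsilon}{1+\epsilon}\cdot OPT$ (and the stack property plus the per-phase $\Delta$-approximation with respect to the reduced weights finishes the proof), or the residual is still large, in which case \emph{each} of the $c/\epsilon$ phases contributed at least $\frac{\epsilon}{(1+\epsilon)c\Delta}\cdot OPT$ to the stack and summing over phases already gives $\frac{OPT}{(1+\epsilon)\Delta}$. No shrinkage claim is ever needed.

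The second half of the gap is the concentration tool itself. Your plan --- a Doob martingale exposing ranks vertex by vertex, with increments bounded only by $\Delta$, to be rescued by a Freedman/variance argument and an unquantified ``variance-versus-slack trade-off'' --- is left entirely open, and it is not the route that works. The paper instead reformulates the ranking algorithm as a sequential process (Proposition~\ref{prop:indep-equiv}): draw vertices uniformly at random one at a time and add a vertex to $I$ iff none of its neighbors was drawn earlier. Truncating after $k=n/(2(\Delta+1))$ draws, each draw changes $|I|$ by $0$ or $1$ (so the martingale increments are bounded by $1$, not $\Delta$), and each draw succeeds with probability at least $1/2$ since at most $(\Delta+1)k\le n/2$ vertices have been precluded. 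Plain one-sided Azuma then gives $|I|\ge n/(8(\Delta+1))$ with failure probability $\exp(-\Omega(n/\Delta))$, which is where $\Delta\le n/\log n$ enters. I'd recommend replacing your residual-shrinkage target with the win--win case analysis, and your rank-exposure martingale with the sequential-drawing one.
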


\paragraph{Further Related Work.}\label{RW} 

Ghaffari et al.~\cite{GhaffariKM17}, showed that there is an algorithm for the LOCAL model that finds a $(1+\epsilon)$-approximation for $\MaxIS$ in $O(\poly(\log n/\epsilon))$ rounds, for a constant $\epsilon$. The results in \cite{lenzen2008leveraging, czygrinow2008fast} give a lower bound of $\Omega(\log^* n)$ rounds for any deterministic algorithm that returns an independent set of size at least $n/\log^*n$ on a cycle, and a randomized $O(1)$-round algorithm for $O(1)$-approximations in planar graphs, in the LOCAL model. The results by~\cite{bodlaenderbrief,HalldorssonK18} give fast algorithms for approximating $\MaxIS$ in unweighted graphs, where the approximation guarantees are only in expectation. 

\paragraph{Road-map:} 
In Section~\ref{sec:tech} we provide a technical overview. Section~\ref{sec:Prelim} contains some basic definitions and useful inequalities. The technical heart of the paper starts in Section~\ref{sec:C}, where we prove our first two results (Theorems~\ref{thm:main} and~\ref{thm:Wmain}). Our results for low-degree and low-arboricity graphs are presented in Sections~\ref{sec:low-deg} and~\ref{sec:arb}, respectively. Our lower bound result is presented in Section~\ref{sec:LB}. Finally, we conclude the paper with a discussion and open questions in Section~\ref{sec:dis}.

\section{Technical Overview}\label{sec:tech}

\paragraph{Results for weighted graphs:} Our first two results (Theorems~\ref{thm:main} and~\ref{thm:Wmain}) share a similar proof structure. First, we show that there are fast algorithms for $O(\Delta)$-approximation. Then we use the \emph{local-ratio} technique~\cite{bar2001unified} to prove a general boosting theorem that takes a $T$-round algorithm for $O(\Delta)$-approximation, and use it as a black-box to output a $(1+\epsilon)\Delta$-approximation in $O(T/\epsilon)$ rounds. An overview of the local-ratio technique and the boosting theorem is provided in Section~\ref{sec:amp}. The key ingredient to show a fast $O(\Delta)$-approximation algorithm is a new \emph{weighted sparsification} technique, where we show that it suffices to find an independent set of a good approximation in a sparse subgraph. An overview of the weighted sparsification technique is provided in Section~\ref{sec:O(D)}.

Our improved approximation algorithm for low-arboricity graphs (Theorem~\ref{thm:mainArb}) uses Theorem~\ref{thm:Wmain} as a black-box, where the main technical ingredient is the local-ratio technique. An overview of this algorithm is also provided in Section~\ref{sec:amp}.

\paragraph{Results for unweighted graphs:} Our upper bound for unweighted graphs of maximum degree $\Delta\leq n/\log n$ (Theorem~\ref{thm:LD}) has a similar two-step structure as the first two results. We first show an $O(\Delta)$-approximation algorithm, and then we use the local-ratio technique to boost the approximation factor. For the $O(\Delta)$-approximation part, we show that running the classical one-round ranking algorithm (that was used by~\cite{BoppanaHR18}) for $c$ rounds already returns an $O(\Delta)$-approximation for unweighted graphs of maximum degree $\Delta\leq n/\log n$, with probability $\approx 1-1/n^c$. The main technical ingredient for showing this result is a new analysis of the classical ranking algorithm using \emph{martingales}. An overview of this result is provided in Section~\ref{sec:EM}. Finally, in Section~\ref{sec:TLB}, we provide an overview of the lower bound result (Theorem~\ref{thm:LB}).  
\subsection{Weighted Sparsification for $\boldsymbol{O(\Delta)}$-Approximation}\label{sec:O(D)}

A good way to understand the $O(\Delta)$-approximation algorithm is to first consider the unweighted case. Let $G=(V,E)$ be an unweighted graph. We can find an $O(\Delta)$-approximation for $\MaxIS$ in $G$ as follows. First, we sample a sparse subgraph $H$ of $G$ with the following properties. (1) The maximum degree $\Delta_H$ of $H$ is small ($O(\log n)$). (2) The ratio between the number of nodes ($n_H$) and the maximum degree of $H$ is at least as in $G$, up to a constant multiplicative factor. That is, $n_H/\Delta_H=\Omega(n/\Delta)$. Since any $\MIS$ in $H$ has size at least $n_H/\Delta_H=\Omega( n/\Delta)$, it suffices to find an $\MIS$ in $H$, which take $\MIS(n_H,\Delta_H)\leq \MIS(n,\log n)$ rounds (recall that $\MIS(n,\Delta)$ is the running time of finding an $\MIS$ in graphs of $n$ nodes and maximum degree $\Delta$). By the recent breakthrough of Rozhon and Ghaffari~\cite{RozhonGhaffari}, $\MIS(n,\log n)=O(\log\log n) +\poly(\log\log n)=\poly\log\log n$ rounds. Furthermore, sampling a subgraph with the aforementioned properties is almost trivial. Each node joins $H$ with probability $\min\{\log n/\Delta,1\}$, independently. It is not very hard to show, via standard Chernoff (Fact~\ref{fact:chrnoff}) and Union Bound arguments, that $H$ has the desired properties. While this approach is straightforward for the unweighted case, it runs into challenges when trying to apply it for the weighted case, as we explain next.

\paragraph{The challenge in weighted graphs:} Perhaps the first thing that comes into mind when trying to extend the sampling technique to weighted graphs is to try to sample a sparse subgraph $H$ with the following properties. (1) The maximum degree $\Delta_H=O(\log n)$. (2) The ratio between the \emph{total weight} in $H$ and the max degree of $H$ is the same as in $G$, up to a constant multiplicative factor. That is $w(V_H)/\Delta_H=\Omega( w(V)/\Delta)$, where $w(V_H)$ is the total weight of nodes in $H$ and $w(V)$ is the total weight of nodes in $G$. However, this approach runs into two challenges. The first challenge is that in the weighted case, an $\MIS$ doesn't necessarily constitute a $\Delta$-approximation for $\MaxIS$. Therefore, even if we are able to sample a subgraph $H$ with the desired properties, running an $\MIS$ algorithm on $H$ might result in an independent set of a very small weight. To overcome this challenge, we show a very simple $\MIS(n,\Delta)$-round algorithm that finds an $O(\Delta)$-approximation. This algorithm runs an $\MIS$ algorithm on the subgraph induced by nodes that are relatively heavy, compared to their neighbors. Specifically, a node is considered relatively heavy compared to its neighbors, if it is of weight at least $\Omega(1/\Delta)$-fraction of the sum of weights of its neighbors. It is not very hard to show that this algorithm returns an independent set of total weight $\Omega(w(V)/\Delta)$, where $w(V)$ is the total weight of nodes in the graph. The proof of this argument is provided in Section~\ref{sec:warmup}.

Furthermore, another challenge is that the same sampling procedure doesn't work for the weighted case. In particular, if we sample each node with probability $p=\min\{(\log n)/\Delta,1\}$, then \emph{light}-weight nodes will have the same probability of joining $H$ as \emph{heavy}-weight nodes. Intuitively, we need to take the weights into account. For this, we boost the sampling probability of a node $v$ by an additive factor of $w(v)\log n/w(V)$, where $w(v)$ is the weight of $v$ and $w(V)$ is the total weight of nodes in the graph. In order to show that the sampled subgraph has the desired properties, it doesn't suffice to use standard Chernoff and Union-Bound arguments. Instead, we present a more involved analysis that uses Bernstein's inequality (Fact~\ref{fact:Berns}). Observe that the nodes don't know the value $w(V)$. Therefore, we define a notion of \emph{weighted degree} of a node, which is the sum of weights of its neighbors. We show that it suffices for a node $v$ to use the maximum weighted degree in its neighborhood, instead of $w(V)$. The full argument is provided in Section~\ref{sec:WeightedSparsifier}.

\subsection{Boosting the Approximation Factor using Local-Ratio}\label{sec:amp}

A useful technique for approximation algorithms is the local-ratio technique~\cite{bar2001unified}. In recent years, the local-ratio technique has been found to be very useful for the distributed setting~\cite{Bar-YehudaCGS17,Bar-YehudaCS17}, and the $\Delta$-approximation algorithm of~\cite{Bar-YehudaCGS17} also uses this technique. In this work we use local-ratio to boost the approximation guarantee for $\MaxIS$. We start with stating the local-ratio theorem for maximization problems. Here, we state it specifically for $\MaxIS$. Given a weighted graph $G_w=(V,E,w)$, where $w$ is a  node-weight function $w:V\rightarrow \mathbb{R}$, we say that an independent set $I\subseteq V$ is $r$-approximate with respect to $w$ if it is $r$-approximate for the optimal solution in $G_w$.

\begin{theorem}\label{thm:LR}\textbf{[Theorem 9 in~\cite{bar2001unified}]}\\
	Let $G_w=(V,E,w)$ be a weighted graph. Let $w_1$ and $w_2$ be two node-weight functions such that $w=w_1+w_2$. If an independent set $I$ is $r$-approximate with respect to $w_1$ and with respect to $w_2$ then it is $r$-approximate with respect to $w$ as well.
\end{theorem}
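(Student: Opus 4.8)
The plan is to argue straight from the definition of $r$-approximation, playing the (single) optimal independent set for the combined weight function $w$ off against the optima of the two pieces. Write $\mathrm{OPT}_{w'}$ for the maximum weight of an independent set of $G$ under a weight function $w'$, and fix an independent set $I^\star$ attaining $w(I^\star)=\mathrm{OPT}_w$.

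The first step is the only place feasibility is used. Since $I^\star$ is itself an independent set of $G$, it is a candidate solution both for the $w_1$-problem and for the $w_2$-problem, so $w_1(I^\star)\le \mathrm{OPT}_{w_1}$ and $w_2(I^\star)\le \mathrm{OPT}_{w_2}$. Adding these and using $w=w_1+w_2$ (which holds vertex by vertex, hence on any vertex set) gives $\mathrm{OPT}_w = w_1(I^\star)+w_2(I^\star)\le \mathrm{OPT}_{w_1}+\mathrm{OPT}_{w_2}$.

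The second step invokes the hypothesis on $I$: being $r$-approximate for $w_1$ and for $w_2$ means $w_1(I)\ge \mathrm{OPT}_{w_1}/r$ and $w_2(I)\ge \mathrm{OPT}_{w_2}/r$. Summing these and again using additivity of $w$,
\[
 w(I) = w_1(I)+w_2(I) \ \ge\ \frac{\mathrm{OPT}_{w_1}+\mathrm{OPT}_{w_2}}{r} \ \ge\ \frac{\mathrm{OPT}_w}{r},
\]
where the final inequality is exactly the conclusion of the first step. Hence $w(I)\ge \mathrm{OPT}_w/r$, i.e.\ $I$ is $r$-approximate with respect to $w$.

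The point I would be most careful about — though it is not really an obstacle — is the direction of the bookkeeping in the first step: one must resist relating $\mathrm{OPT}_w$ to $\mathrm{OPT}_{w_1}$ and $\mathrm{OPT}_{w_2}$ through their maximizers, which are in general three different independent sets, and instead note that it is precisely the feasibility of the single set $I^\star$ in all three problems that yields $\mathrm{OPT}_w\le \mathrm{OPT}_{w_1}+\mathrm{OPT}_{w_2}$. Observe also that the argument never uses maximality or optimality of $I$, and that it goes through verbatim if the weight functions are allowed to take negative values, since the empty independent set keeps every optimum nonnegative and every inequality above is preserved.
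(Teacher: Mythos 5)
Your proof is correct: this is the standard argument for the local-ratio theorem, and the paper does not reprove the statement but imports it as Theorem 9 of Bar-Yehuda et al.~\cite{bar2001unified}, whose proof runs exactly as yours does (feasibility of the single $w$-optimizer in both subproblems gives $\mathrm{OPT}_w\le \mathrm{OPT}_{w_1}+\mathrm{OPT}_{w_2}$, and additivity of $w$ on $I$ finishes). Your closing observation that the argument tolerates negative weights is genuinely relevant here, since the residual weight functions produced by the paper's weight reductions can be non-positive on some vertices.
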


Theorem~\ref{thm:LR} already gives a simple linear-time sequential algorithm for $\Delta$-approximation for $\MaxIS$, as follows. Pick an arbitrary node $v$ of positive weight, push it onto a stack, and reduce the weight of any node in the inclusive neighborhood of $v$ ($v$ and its neighbors) by $w(v)$. Continue recursively on the obtained graph, until there are no nodes of positive weight. When there are no remaining nodes of positive weight, pop out the stack, and construct an independent set $I$ greedily, as follows. For each node $v$ that is popped out from the stack, add $v$ to $I$, unless it already contains a neighbor of $v$. 

The reason that this simple algorithm gives a $\Delta$-approximation is as follows. Consider the first iteration, when the algorithm picks an arbitrary node $v$, pushes it onto a stack, and reduces the weight of any node in the inclusive neighborhood of $v$ by $w(v)$. This first iteration implicitly defines two weight functions: the \emph{reduced} weight function $w_1$, and the \emph{residual} weight function $w_2$, where $w=w_1+w_2$. That is, the reduced weigh of a node $u$ in the first step is $w_1(u)=w(v)$ if it belongs to the inclusive neighborhood of $v$, and $w_1(u)=0$ otherwise. The residual weight of a node $u$ is the remaining weight $w_2(v)=w(v)-w_1(v)$. To prove that the algorithm returns a $\Delta$-approximation, we can assume by reverse induction that $I$ is a  $\Delta$-approximation with respect to the residual weight function $w_2$. Furthermore, the independent set is constructed in a way such that it must contain at least one node in the inclusive neighborhood of $v$, where the weight of this node with respect to $w_1$ is $w(v)$. Since the degree of $v$ is at most $\Delta$, and the value of the optimal solution with respect to $w_1$ is at most $\Delta w(v)$, it follows that $I$ is also $\Delta$-approximation with respect to the reduced weight function $w_1$. Hence, by the local-ratio theorem, the independent set is also a $\Delta$-approximation with respect to $w=w_1+w_2$.

One can extend this idea, and rather than picking a single node in each step, the algorithm can pick an arbitrary independent set $I'$, push all the nodes in $I'$ onto a stack, and perform local weight reductions in the inclusive neighborhood of any node in $I'$. The algorithm continues recursively on the obtained graph after the weight reductions, until there are no remaining nodes of positive weight. Then, the algorithm constructs an independent set $I$ by popping out the stack and adding nodes in the stack to $I$ greedily. Using a similar local-ratio argument, one can show that this algorithm also returns a $\Delta$-approximation for $\MaxIS$. The idea of picking an independent set rather than a single node in each step was used by~\cite{Bar-YehudaCGS17} to show a $\Delta$-approximation algorithm in $O(\MIS(n,\Delta)\log W)$ rounds.

In this work, we prove a simple yet powerful property about the local-ratio technique. Specifically, we show that the total weight of the independent set $I$ that is constructed in the pop-out stage (with respect to the original input weight function $w$), is at least the total weight of the nodes in the stack (with respect to the residual weight function at the time they were pushed onto the stack). That is, let $S$ be set of nodes that are pushed onto the stack. For $v\in S$, let $w_{i_v}$ be the residual weight of $v$ at the time it was pushed onto the stack. We prove (Proposition~\ref{prop:stack} in Section~\ref{sec:amplification}) that $w(I)\geq \sum_{v\in S} w_{i_v}(v)$. We refer to this property as the \emph{stack property}.

The stack property allows us to show a general boosting theorem, as follows. We use the local-ratio algorithm described above, where in each step we pick an independent set $I'$ that is $(c\Delta)$-approximation for $\MaxIS$, for some constant $c>1$. Hence, intuitively, after $\approx c/\epsilon$ steps, the total weight in the stack should be at least $\frac{OPT(G_w)}{(1+\epsilon)\Delta}$, where $OPT(G_w)$ is the value of an optimal solution in the input graph $G_w$. The full argument of the boosting theorem is provided in Section~\ref{sec:amplification}.

\paragraph{Low-arboricity graphs:} Moreover, the stack property allows us to show an improved approximation algorithm for low-arboricity graphs, as follows. In each step, we run a $(1+\epsilon)\Delta$-approximation algorithm on the subgraph induced by the nodes of degree at most $4\alpha$, where $\alpha$ is the arboricity of the graph. We push the nodes in the resulting independent set $I'$ onto the stack, and perform local weight reduction in the neighborhoods of the nodes in $I'$. Then, we delete all the nodes of degree at most $4\alpha$, and continue recursively on the resulting graph. Finally, we construct an independent $I$ by popping out the stack greedily. By a standard Markov argument, after $\log n$ push steps, the graph becomes empty. Furthermore, since in each step the algorithm finds a $(1+\epsilon)4\alpha$ approximation in the subgraph induced by the nodes of degree at most $4\alpha$, and this independent set is pushed onto the stack, we are able to use the stack property to show that the constructed independent set $I$ is roughly of the same approximation for $G_w$. The full argument for low-arboricity graphs is provided in Section~\ref{sec:arb}.

\subsection{Analysis of the Ranking Algorithm using Martingales}\label{sec:EM}

In this section we provide an overview of our result for unweighted graphs of maximum degree $\Delta\leq n/\log n$  (Theorem~\ref{thm:LD}).  First, we find an $O(\Delta)$-approximation, and then we use the boosting theorem to get a $(1+\epsilon)\Delta$-approximation. To find an $O(\Delta)$-approximation, we use the classical ranking algorithm. Recall that in the ranking algorithm, each node $v$ picks a number $r_v$ uniformly at random in $[0,1]$. If $r_v>r_u$ for any neighbor $u$ of $v$, then $v$ joins the independent set. Let $I$ be the independent set that is returned by the ranking algorithm. The crux of the analysis is in using concentration inequalities to get a high-probability lower bound on the number of nodes in $I$. However, it is unclear how to make this approach work, as the random variables $X_v = \textbf{1}_{v\in I}$ are not independent. While these random variables are not independent, one can obtain a weaker result in this direction. Specifically, for graphs of maximum degree at most $n^{1/3}/\text{poly}(\log n)$, one can get a useful bound on the maximum \emph{dependency} among these variables. In particular, one can show that each $X_v$ is dependent on at most $(n^{1/3}/\text{poly}(\log n))^2 = n^{2/3}/\text{poly}(\log n)$ other $X_u$s, which makes it possible to show concentration using the bounded dependence Chernoff bound given in \cite{Pemmaraju01}. However, it is unclear how to use this approach for higher degree graphs.

The main idea of our approach is to view the ranking algorithm from a sequential perspective. Instead of picking ranks for the nodes and including a node in $I$ if its rank is higher than that of its neighbors, we draw nodes $v$ from $V$ uniformly at random one at a time and add $v$ to $I$ if it is not adjacent to any previously drawn node. We show that the resulting independent set is identical in distribution to the independent set produced by the ranking algorithm (Proposition~\ref{prop:indep-equiv} in Section~\ref{sec:low-deg}). Note that this is not the same as a sequential greedy algorithm for maximal independent set, which would add $v$ to $I$ if it is not adjacent to any node in $I$ (a weaker condition). The sequential perspective of the ranking algorithm allows us to think about the size of $I$ incrementally. One could directly show concentration if the family of random variables $\{I_t\}_t$ was a martingale. However, this is not the case, as $|I_{t+1}| \ge |I_t|$ so it is not possible for expected increments to be 0. Instead, we create a martingale by shifting the increments so that they have mean 0. More formally, let $I_t$ be the independent set $I$ after the first $t$ nodes have been drawn. Let $v_t$ be the $t$th node drawn. The random variable

$$Y_t = |I_t| - |I_{t-1}| - \Pr[v_t\in I | I_{t-1}]$$
has mean 0 conditioned on $I_{t-1}$. Therefore, the $Y_t$s are increments for the martingale $X_t = \sum_{i=1}^t Y_t$. Using Azuma's Inequality, one can show that $X_t$ concentrates around its mean, which is 0. To lower bound the size of the obtained independent set $I$, one therefore just needs to get a lower bound on the sum of the increment probabilities $\Pr[v_t\in I | I_{t-1}]$. This can be lower bounded by 1/2 when $t= o(n/\Delta)$ because when a node is drawn, it eliminates at most $\Delta$ other nodes from inclusion into $I$. But when $t = \Theta(n/\Delta)$, the sum of these probabilities is already $1/2(\Theta(n/\Delta)) = \Theta(n/\Delta)$, so the independent set is already large enough, as desired. The reason that this technique works for $\Delta\leq n/\log n$ is that the success probability is roughly exponential in $n/\Delta$. Hence, by having $\Delta\leq n/\log n$, we get a high probability success, as desired. The full argument is provided in Section~\ref{sec:low-deg}.

\subsection{An Overview of the Lower Bound}\label{sec:TLB}
In this section we give an overview of our lower bound (Theorem~\ref{thm:LB}). For the  deterministic case, one can show a lower bound for finding an independent set of size $\Omega(n/\Delta)$ in a cycle, by a reduction to the classical lower bound of Linial for finding an $\MIS$ in a cycle~\cite{linial1992locality}. However, for the randomized case, this approach becomes more challenging. In fact, the cycle graph cannot be a hard instance for finding an independent set of size $\Omega(n/\Delta)$, as  there is a constant-round algorithm for low degree graphs (as  we show in Theorem~\ref{thm:LD}). In order to show hardness for the randomized case, we use a \emph{cycle of cliques} graph. We are able to reduce the problem of finding an independent set of size $\Omega(n/\Delta)$ in a cycle of cliques, to the problem of finding an $\MIS$ in a cycle. And we use Naor's lower bound~\cite{Naor91} for finding an $\MIS$ in a cycle, which holds even against randomized algorithms. We start by stating Naor's lower bound.

\begin{theorem}\label{fact:mis-lb}(\textbf{Lower bound for the cycle} \cite{Naor91}).
	Any randomized algorithm in the LOCAL model for finding a  maximal independent set that takes fewer than $\frac{1}{2}(\log^* n) - 4$ rounds, succeeds with probability at most $1/2$, even for a cycle of length $n$.
\end{theorem}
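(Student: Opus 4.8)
The plan is to obtain this from the classical $\Omega(\log^* n)$ round lower bound for coloring the ring, in its randomized form --- which is precisely the content of Naor's paper~\cite{Naor91} --- so that only two pieces need to be assembled: a round-preserving reduction from maximal independent set on the cycle to $3$-coloring of the cycle, and the randomized Linial-style Ramsey argument for ring coloring.

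First I would reduce $\MIS$ to $3$-coloring. Suppose $\mathcal{A}$ is a $t$-round algorithm that outputs an $\MIS$ $S$ on the $n$-cycle. On a cycle every maximal run of nodes outside $S$ has length $1$ or $2$, since a run of length at least $3$ contains a node with both neighbors outside $S$, violating maximality. Hence in $O(1)$ further rounds each node can output a proper $3$-coloring: nodes of $S$ take color $1$; a node outside $S$ takes color $2$ if it has the smaller identifier among the (at most two) nodes of its run, and color $3$ otherwise. One checks directly that adjacent nodes never share a color --- a color-$1$ node neighbors only color-$2$/$3$ nodes because $S$ is independent; the two nodes of a length-$2$ run get $2$ and $3$; two distinct runs are always separated by a single color-$1$ node. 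Thus a $t$-round $\MIS$ algorithm succeeding with probability $p$ yields a $(t+O(1))$-round $3$-coloring algorithm succeeding with probability at least $p$, so any lower bound against randomized ring coloring transfers to $\MIS$ up to an additive constant in the round count; tracking these constants (or arguing directly on the $\MIS$ constraints, namely that the output pattern forbids two adjacent $1$'s and three consecutive $0$'s) is what produces the exact threshold $\tfrac12\log^* n - 4$.

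Next I would prove the randomized ring-coloring bound. By Yao's minimax principle it suffices to give a distribution over labeled $n$-cycles on which every \emph{deterministic} $t$-round algorithm with $t < \tfrac12\log^* n - 4$ fails to produce a proper $3$-coloring with probability at least $1/2$; the natural choice is a uniformly random injective labeling of the cycle by $[n]$. A deterministic $t$-round algorithm is a function $f$ from ordered $(2t+1)$-tuples of distinct labels (the radius-$t$ view of a node) to $\{1,2,3\}$, and it is correct on a given labeling exactly when the induced coloring of the corresponding closed walk of length $n$ in Linial's neighborhood graph $\mathcal{N}_{n,2t+1}$ is proper. Linial's Ramsey-theoretic estimate gives $\chi(\mathcal{N}_{n,2t+1}) \ge \log^{(2t)} n$, an iterated logarithm that still exceeds $3$ as long as $2t < \log^* n - O(1)$; the quantitative refinement then has to show that a uniformly random labeling realizes a monochromatic edge of $\mathcal{N}_{n,2t+1}$ --- i.e. two adjacent cycle nodes of equal color --- with probability at least $1/2$.

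The hard part will be exactly this quantitative step. Linial's argument on its own exhibits only a \emph{single} bad labeling, which gives failure probability merely $\ge 1/n!$, whereas here we need failure probability $\ge 1/2$ against an arbitrary algorithm. What is required is a robust version of the Ramsey argument --- roughly, that any $3$-coloring of $\mathcal{N}_{n,2t+1}$ is forced to be improper on a \emph{large} fraction of the random length-$n$ closed walks, not just on one --- together with careful bookkeeping that pushes that fraction up to $1/2$ by exploiting the slack available in the iterated logarithm when $t < \tfrac12\log^* n - 4$; this is the technical core of~\cite{Naor91}. Everything else (the reduction above and the deterministic skeleton of Linial's argument) I expect to be routine.
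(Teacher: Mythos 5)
This statement is not proved in the paper at all: it is imported verbatim from Naor's work~\cite{Naor91} and used as a black box in the reduction of Section~\ref{sec:LB}. So there is no in-paper argument to compare yours against; the only question is whether your proposal stands on its own as a proof, and it does not. Your outline correctly identifies the skeleton of Naor's argument --- the $O(1)$-round reduction from $\MIS$ on the cycle to $3$-coloring (runs outside $S$ have length at most $2$, so ties can be broken by identifiers), Yao's principle reducing to deterministic algorithms against a random labeling, and the view of a $t$-round deterministic algorithm as a $3$-coloring of Linial's neighborhood graph. These parts are indeed routine.

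The genuine gap is the step you yourself flag as ``the hard part'': showing that for any $3$-coloring $f$ of the neighborhood graph, a uniformly random injective labeling of the $n$-cycle induces a monochromatic edge (two adjacent nodes receiving the same color, or more precisely a violation of the $\MIS$ output constraints) with probability at least $1/2$ whenever $t < \tfrac12\log^* n - 4$. Linial's deterministic Ramsey argument only produces \emph{one} bad labeling, which gives failure probability $\Omega(1/n!)$, not $1/2$; upgrading this to a constant-probability statement requires Naor's probabilistic Ramsey-type lemma showing that large subsets of labels already force color collisions on a constant fraction of random walks, and that argument is the entire content of the theorem. Deferring it to ``the technical core of~\cite{Naor91}'' means the proposal is a roadmap to the proof rather than a proof: stripped of that step, what remains establishes only the deterministic $\Omega(\log^* n)$ bound, not the probability-$1/2$ claim in the statement. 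Since the paper treats this theorem as an external citation, relying on~\cite{Naor91} is legitimate in context, but as a blind self-contained proof the decisive lemma is missing.
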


Perhaps a good way to understand our reduction to Naor's lower bound is to first consider deterministic algorithms. Let $\AIS$ be a deterministic algorithm for approximate MaxIS. Suppose that it takes $T(n)$ rounds in graphs of $n$ nodes. We can use $\AIS$ to find a maximal independent set in a cycle $C$ of $n$ nodes, as follows. We start by running  $\AIS$ on $C$ to produce an independent set $I$. Since $C$ is a cycle, there is a natural clockwise ordering for the nodes of $I$. Between any two consecutive nodes of $I$, there may be nodes along the cycle that are not adjacent to a node in $I$. We informally call these nodes the ``gaps'' between consecutive nodes in $I$. We can obtain a maximal independent set in $C$ by ``filling in'' the gap between every two consecutive nodes in $I$ with a maximal independent set (sequentially). To bound the runtime of this algorithm, we need to bound the maximum length of a gap. Since $\AIS$ is deterministic, it is not very hard to show that the maximum length of a gap is $O(T(n))$. This is because from a local perspective, the nodes cannot distinguish between $C$ and a path of length $\omega(T(n))$, by a standard indistinguishability argument. Hence, one can show that if there is a gap of length  $\omega(T(n))$, then $\AIS$ doesn’t return the required approximation on a path of length $\omega(T(n))$. As a result, filling in the  gaps between nodes in $I$ takes $O(T(n))$ rounds. Therefore, by running $\mathcal{A}$ on $C$ and then filling in the gaps sequentially, we get an $\MIS$ in $O(T(n))$ rounds. And by Linial's lower bound~\cite{linial1992locality}, we have that $T(n)=\Omega(\log^*n)$.

However, the argument above fails if $\AIS$ is a randomized algorithm. The main issue is that when running a randomized algorithm on a cycle, the maximum length of a gap between two consecutive nodes in the independent set can be larger than $O(T(n))$. This is because randomized algorithms that succeed with high probability can fail with probability $1/\poly(n)$, where $n$ is the number of nodes in the graph. Hence, $\AIS$  can fail on a path of length $O(T(n))$ with probability $1/\poly(T(n))$ which is non-negligible when $T(n)\ll n$. In particular, since there are $\Omega(|C|/T(n))=\Omega(n/T(n))$ subpaths of length $O(T(n))$ in $C$, it is likely that $\mathcal{A}$ fails on at least one of these subpaths. If on the the other hand the number of nodes in the $O(T(n))$-radius neighborhood of a node was larger, then one could hope to get around this issue, as it would amplify the ``local" success probability in the neighborhood of a node.

Hence, instead of running $\mathcal{A}$ on $C$, we run it on a cycle of cliques $C_1$, which is obtained from $C$ as follows. Each node $v\in C$ is replaced with a clique of size $\approx 2^{|C|}$, denoted by $D(v)$, where every two adjacent cliques are connected by a bi-clique (see also Figure 1, for an illustration). By running $\AIS$ on $C_1$ instead of $C$, it boosts the success probability of $\AIS$ in a small-radius neighborhood of any given node. As a result, a small-radius neighborhood of any node in $C_1$ must contain a node in the independent set. Using the independent set $I_1$ that was found in $C_1$, we can map it to an independent set $I$ in $C$, as follows. Every $v\in C$ joins $I$ if and only if $I_1$ contains a node in $D(v)$. Due to the approximation guarantee of $\mathcal{A}$ in $C_1$, we can prove that the maximum distance between two consecutive nodes in $I_1$ is small and therefore, the maximum length of a gap in $I$ is small. Finally, we can run a greedy sequential $\MIS$ algorithm to fill the gap between every two consecutive nodes in $I$ and find an $\MIS$ in $C$. Hence, if we can find an approximate-$\MaxIS$ in $C_1$ in $o(\log^* |C_1|)$ rounds, then we can find an $\MIS$ in $C$ in $o(\log^*(2^{|C|}))=o(\log^*|C|)$ rounds, contradicting Naor's lower bound (Theorem~\ref{fact:mis-lb}). An illustration of the reduction with all the steps is provided in Figure 1 in Section~\ref{sec:LB}. A detailed reduction is provided in Section~\ref{sec:LB}, together with the full proof of the lower bound. 

\section{Preliminaries}\label{sec:Prelim}

Some of our proofs use the following standard probabilistic tools. An excellent source for the following concentration bounds is the book by Alon and Spencer~\cite{AlonS92}. These bounds can also be found in many lecture notes about basic tail and concentration bounds.

\begin{fact}\label{fact:chrnoff}(\textbf{Multiplicative Chernoff Bound}).
	 Let $X_1, ..., X_n$ be independent random variables taking values in $\{0, 1\}$. Let $X$ denote their sum and let $\mu = E[X]$ denote the sum's expected value. Then for any $0\leq \epsilon\leq 1$, it holds that: $$Pr[|X - \mu| \geq \epsilon \mu] \leq 2 \exp\left(-\frac{\epsilon^2}{2+\epsilon}\mu\right)$$
\end{fact}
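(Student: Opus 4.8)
The plan is to use the standard Chernoff moment-generating-function argument, treating the upper and lower deviations separately and combining them by a union bound. Write $p_i = \Pr[X_i = 1]$, so that $\mu = \sum_{i=1}^n p_i$, and fix $0 \le \epsilon \le 1$. The event $\{|X-\mu| \ge \epsilon\mu\}$ is exactly $\{X \ge (1+\epsilon)\mu\} \cup \{X \le (1-\epsilon)\mu\}$, so it suffices to bound each of the two one-sided tails by $\exp(-\tfrac{\epsilon^2}{2+\epsilon}\mu)$.

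For the upper tail, for any $t > 0$ Markov's inequality applied to $e^{tX}$ gives $\Pr[X \ge (1+\epsilon)\mu] \le e^{-t(1+\epsilon)\mu}\, E[e^{tX}]$. By independence $E[e^{tX}] = \prod_{i=1}^n E[e^{tX_i}] = \prod_{i=1}^n (1 + p_i(e^t-1)) \le \prod_{i=1}^n e^{p_i(e^t-1)} = e^{\mu(e^t-1)}$, using $1+x \le e^x$. Choosing $t = \ln(1+\epsilon)$ yields $\Pr[X \ge (1+\epsilon)\mu] \le \left(\frac{e^\epsilon}{(1+\epsilon)^{1+\epsilon}}\right)^\mu = \exp\!\big(\mu(\epsilon - (1+\epsilon)\ln(1+\epsilon))\big)$. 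It then remains to invoke the scalar inequality $(1+\epsilon)\ln(1+\epsilon) - \epsilon \ge \frac{\epsilon^2}{2+\epsilon}$, valid for all $\epsilon \ge 0$, which gives $\Pr[X \ge (1+\epsilon)\mu] \le \exp\!\big(-\tfrac{\epsilon^2}{2+\epsilon}\mu\big)$.

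For the lower tail, symmetrically, for any $s > 0$, $\Pr[X \le (1-\epsilon)\mu] = \Pr[e^{-sX} \ge e^{-s(1-\epsilon)\mu}] \le e^{s(1-\epsilon)\mu}\, E[e^{-sX}] \le e^{s(1-\epsilon)\mu}\, e^{\mu(e^{-s}-1)}$, by the same computation with $t = -s$. Optimizing with $s = -\ln(1-\epsilon)$ gives $\Pr[X \le (1-\epsilon)\mu] \le \exp\!\big(\mu(-\epsilon - (1-\epsilon)\ln(1-\epsilon))\big) \le \exp\!\big(-\tfrac{\epsilon^2}{2}\mu\big)$, where the last step is the scalar inequality $-(1-\epsilon)\ln(1-\epsilon) - \epsilon \le -\epsilon^2/2$ for $0 \le \epsilon < 1$ (the endpoint $\epsilon = 1$ is trivial, since $X \le 0$ forces $X = 0$, of probability $\prod_i(1-p_i) \le e^{-\mu}$). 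Since $2+\epsilon \ge 2$ we have $\epsilon^2/2 \ge \epsilon^2/(2+\epsilon)$, so this tail is also at most $\exp(-\tfrac{\epsilon^2}{2+\epsilon}\mu)$.

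Combining, $\Pr[|X - \mu| \ge \epsilon\mu] \le \Pr[X \ge (1+\epsilon)\mu] + \Pr[X \le (1-\epsilon)\mu] \le 2\exp\!\big(-\tfrac{\epsilon^2}{2+\epsilon}\mu\big)$, which is the claimed bound. The only non-mechanical ingredient is the pair of one-variable inequalities comparing $(1\pm\epsilon)\ln(1\pm\epsilon)\mp\epsilon$ with $\pm\epsilon^2/(2+\epsilon)$ (resp. $\epsilon^2/2$); each follows by checking that the difference of the two sides vanishes to second order at $\epsilon = 0$ and has the correct sign afterwards, e.g.\ via Taylor expansion or a short monotonicity argument on the derivative. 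I expect that elementary calculus step — in particular obtaining the precise $\frac{\epsilon^2}{2+\epsilon}$ exponent rather than the cruder $\epsilon^2/3$ — to be the only real obstacle, and it is entirely routine. I would isolate it as a small analytic lemma and otherwise leave the verification to the reader.
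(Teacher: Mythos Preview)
Your proof is correct and follows the standard moment-generating-function argument for the multiplicative Chernoff bound. Note, however, that the paper does not actually prove this statement: it is listed as a \emph{Fact} in the preliminaries with a reference to the Alon--Spencer textbook, so there is no ``paper's own proof'' to compare against. What you have written is exactly the textbook derivation one would find there (including the clean reduction of the upper-tail exponent to the scalar inequality $\ln(1+\epsilon)\ge \tfrac{2\epsilon}{2+\epsilon}$), so your approach is entirely appropriate.
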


\begin{fact}\label{fact:Berns}(\textbf{Bernstein's Inequality}).
	Let $X_1, ..., X_n$ be independent random variables such that $\forall i,  X_i\leq M$. Let $X$ denote their sum and let $\mu = \mathrm{E}[X]$ denote the sum's expected value. Then for any positive $t$, it holds that: $$Pr[|X - \mu| \geq t] \leq 2 \exp\left(-\frac{t^2/2}{Mt/3+\sum_{i=1}^{n} \mathrm{Var}(X_i)}\right)$$
\end{fact}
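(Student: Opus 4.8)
The plan is to prove Fact~\ref{fact:Berns} by the standard exponential-moment (Cram\'er--Chernoff) method, the same tool that underlies Fact~\ref{fact:chrnoff}, but replacing the crude bound on a single moment-generating factor by a sharp second-order estimate. First I would pass to the centered variables $Z_i := X_i - \mathrm{E}[X_i]$, so that $\mathrm{E}[Z_i]=0$, $\mathrm{Var}(Z_i)=\mathrm{Var}(X_i)$, $Z_i\le M$, and $Z:=\sum_{i=1}^n Z_i = X-\mu$; write $\sigma^2:=\sum_{i=1}^n \mathrm{Var}(X_i)$. For any $\lambda>0$, Markov's inequality applied to $e^{\lambda Z}$ together with independence gives
\[
\Pr[Z\ge t]\;\le\;e^{-\lambda t}\,\mathrm{E}\!\left[e^{\lambda Z}\right]\;=\;e^{-\lambda t}\prod_{i=1}^n \mathrm{E}\!\left[e^{\lambda Z_i}\right],
\]
so the task reduces to a good upper bound on each factor $\mathrm{E}[e^{\lambda Z_i}]$ followed by an optimization over $\lambda$.

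The crux is the per-variable estimate. I would use the elementary fact that $\psi(x):=(e^{x}-1-x)/x^{2}$ (with $\psi(0)=1/2$) is strictly increasing on all of $\mathbb{R}$. Writing $e^{\lambda Z_i}=1+\lambda Z_i+(\lambda Z_i)^{2}\psi(\lambda Z_i)$ and using $\lambda Z_i\le\lambda M$ with the monotonicity of $\psi$, we get $e^{\lambda Z_i}\le 1+\lambda Z_i+(\lambda Z_i)^{2}\psi(\lambda M)$; taking expectations, using $\mathrm{E}[Z_i]=0$ and then $1+u\le e^{u}$, yields
\[
\mathrm{E}\!\left[e^{\lambda Z_i}\right]\;\le\;1+\lambda^{2}\mathrm{Var}(X_i)\,\psi(\lambda M)\;\le\;\exp\!\big(\lambda^{2}\mathrm{Var}(X_i)\,\psi(\lambda M)\big),
\]
so $\prod_i \mathrm{E}[e^{\lambda Z_i}]\le \exp(\lambda^{2}\sigma^{2}\psi(\lambda M))$. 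To make this usable I would bound $\psi$ near $0$: from $e^{x}-1-x=\sum_{k\ge2}x^{k}/k!$ and the inequality $k!\ge 2\cdot 3^{k-2}$ (valid for $k\ge2$, by a one-line induction), one gets $e^{x}-1-x\le\frac{x^{2}}{2}\sum_{j\ge0}(x/3)^{j}=\frac{x^{2}/2}{1-x/3}$ for $0\le x<3$, hence $\psi(x)\le\frac{1/2}{1-x/3}$ there. Combining, for every $\lambda\in(0,3/M)$,
\[
\Pr[Z\ge t]\;\le\;\exp\!\left(-\lambda t+\frac{\lambda^{2}\sigma^{2}/2}{1-\lambda M/3}\right).
\]

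It remains to optimize. I would set $\lambda:=t/(\sigma^{2}+Mt/3)$, which is positive and satisfies $\lambda M<3$ since $\sigma^{2}+Mt/3>Mt/3$, so it lies in the admissible range. A short computation gives $1-\lambda M/3=\sigma^{2}/(\sigma^{2}+Mt/3)$, whence $\frac{\lambda^{2}\sigma^{2}/2}{1-\lambda M/3}=\frac{t^{2}}{2(\sigma^{2}+Mt/3)}$ and $\lambda t=\frac{t^{2}}{\sigma^{2}+Mt/3}$, so the exponent collapses to $-\frac{t^{2}/2}{\sigma^{2}+Mt/3}$. This is exactly the claimed one-sided bound $\Pr[X-\mu\ge t]\le\exp\!\big(-\tfrac{t^{2}/2}{Mt/3+\sum_i\mathrm{Var}(X_i)}\big)$. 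For the other tail I would repeat the argument verbatim with $Z_i$ replaced by $-Z_i$ (here the hypothesis is used as the two-sided bound $|X_i-\mathrm{E}[X_i]|\le M$, which is what the statement is meant to provide), and a union bound over the two tails produces the factor $2$.

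The step I expect to be the main obstacle is the per-variable moment bound: one has to be careful that the monotonicity of $\psi$ is asserted over the whole real line (so that it applies whatever sign $\lambda Z_i$ takes), and that the series comparison producing the $1-\lambda M/3$ denominator --- and hence the constant $3$ in the final bound --- is only legitimate for $\lambda M<3$, which is precisely why one must verify that the optimizing $\lambda$ lies in that interval. Once the per-term bound $\mathrm{E}[e^{\lambda Z_i}]\le\exp(\tfrac{\lambda^{2}\mathrm{Var}(X_i)/2}{1-\lambda M/3})$ is in hand, the remaining steps (independence, the Markov bound, and the final arithmetic with the chosen $\lambda$) are routine.
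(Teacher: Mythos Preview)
The paper does not actually prove Fact~\ref{fact:Berns}. It is listed in the Preliminaries (Section~\ref{sec:Prelim}) as one of several standard concentration inequalities, with the sentence ``An excellent source for the following concentration bounds is the book by Alon and Spencer~\cite{AlonS92},'' and is then used as a black box in the proof of Claim~\ref{claim:SecondCase}. So there is no proof in the paper to compare your proposal against.

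That said, your argument is the standard Cram\'er--Chernoff proof of Bernstein's inequality and is essentially correct. One point you already flag is worth emphasizing: the paper's hypothesis is literally the one-sided condition $X_i\le M$, which by itself does not even give $Z_i=X_i-\mathrm{E}[X_i]\le M$ (you would need $\mathrm{E}[X_i]\ge 0$), and certainly does not control $-Z_i$ for the lower tail. Your reading of the hypothesis as the two-sided bound $|X_i-\mathrm{E}[X_i]|\le M$ is the right fix; in the paper's only application (Claim~\ref{claim:SecondCase}) the variables $y_v=x_v\,w(v)$ are nonnegative and bounded by $w(V)/\Delta$, so the intended two-sided bound holds there. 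With that interpretation, your per-variable estimate via the monotonicity of $\psi(x)=(e^x-1-x)/x^2$, the series bound $\psi(x)\le\tfrac{1/2}{1-x/3}$ for $0\le x<3$, and the choice $\lambda=t/(\sigma^2+Mt/3)$ are exactly the textbook steps and yield the stated bound.
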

\begin{fact}\label{fact:Azuma}(\textbf{One-sided Azuma's Inequality}).
	Suppose $\{X_i: i=0,1,2,\hdots\}$ is a martingale and that $|X_i - X_{i-1}|\le c_i$ almost surely. Then, for all positive integers $N$ and all positive reals $t$,
	
	$$\Pr[X_N - X_0 \le -t] \le \exp\left(-\frac{t^2}{2\sum_{i=1}^N c_i^2}\right)$$
\end{fact}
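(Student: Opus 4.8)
The plan is to prove this by the standard exponential‑moment (Chernoff–Cram\'er) method applied to the martingale, with Hoeffding's lemma controlling the conditional moment generating function of each increment. Write $D_i = X_i - X_{i-1}$ for the martingale differences and let $\mathcal{F}_{i-1}$ denote the $\sigma$-algebra generated by $X_0,\dots,X_{i-1}$, so that $\mathbb{E}[D_i \mid \mathcal{F}_{i-1}] = 0$ and $|D_i| \le c_i$ almost surely. Fix any $\lambda > 0$. Applying Markov's inequality to the nonnegative random variable $e^{-\lambda(X_N - X_0)}$ gives
$$\Pr[X_N - X_0 \le -t] = \Pr\bigl[e^{-\lambda(X_N - X_0)} \ge e^{\lambda t}\bigr] \le e^{-\lambda t}\,\mathbb{E}\bigl[e^{-\lambda(X_N - X_0)}\bigr],$$
so it suffices to bound $\mathbb{E}\bigl[e^{-\lambda(X_N - X_0)}\bigr] = \mathbb{E}\bigl[e^{-\lambda \sum_{i=1}^N D_i}\bigr]$ and then optimize over $\lambda$.

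The core of the argument is a conditioning‑and‑induction step. Writing $e^{-\lambda\sum_{i=1}^N D_i} = e^{-\lambda\sum_{i=1}^{N-1}D_i}\cdot e^{-\lambda D_N}$ and taking the conditional expectation given $\mathcal{F}_{N-1}$, on which the first factor is measurable, yields
$$\mathbb{E}\bigl[e^{-\lambda\sum_{i=1}^N D_i}\bigr] = \mathbb{E}\bigl[e^{-\lambda\sum_{i=1}^{N-1}D_i}\,\mathbb{E}[e^{-\lambda D_N}\mid\mathcal{F}_{N-1}]\bigr].$$
I would control the inner factor via Hoeffding's lemma: if $Y$ is a random variable with $\mathbb{E}[Y]=0$ and $Y\in[a,b]$ almost surely, then $\mathbb{E}[e^{sY}] \le e^{s^2(b-a)^2/8}$ for every real $s$. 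Applied conditionally with $Y = D_N$, $s = -\lambda$, and $[a,b] = [-c_N, c_N]$ (so $b-a = 2c_N$), this gives $\mathbb{E}[e^{-\lambda D_N}\mid\mathcal{F}_{N-1}] \le e^{\lambda^2 c_N^2/2}$ almost surely. Substituting and iterating this downward over the remaining factors produces $\mathbb{E}\bigl[e^{-\lambda(X_N-X_0)}\bigr] \le \exp\bigl(\tfrac{\lambda^2}{2}\sum_{i=1}^N c_i^2\bigr)$. Combined with the Markov step, $\Pr[X_N - X_0 \le -t] \le \exp\bigl(-\lambda t + \tfrac{\lambda^2}{2}\sum_{i=1}^N c_i^2\bigr)$, and choosing the minimizing value $\lambda = t/\sum_{i=1}^N c_i^2$ gives exactly $\exp\bigl(-t^2/(2\sum_{i=1}^N c_i^2)\bigr)$, as claimed.

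The one place requiring genuine work — and the main obstacle — is Hoeffding's lemma, which I would establish as an auxiliary claim. By convexity of $y\mapsto e^{sy}$ on $[a,b]$, for $y\in[a,b]$ one has $e^{sy} \le \tfrac{b-y}{b-a}e^{sa} + \tfrac{y-a}{b-a}e^{sb}$; taking expectations and using $\mathbb{E}[Y]=0$ eliminates the linear term and leaves $\mathbb{E}[e^{sY}] \le \tfrac{b}{b-a}e^{sa} - \tfrac{a}{b-a}e^{sb}$. Setting $p = -a/(b-a)\in[0,1]$ and $u = s(b-a)$, the right-hand side equals $e^{\varphi(u)}$ with $\varphi(u) = -pu + \ln(1-p+pe^{u})$. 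A direct computation shows $\varphi(0) = \varphi'(0) = 0$ and $\varphi''(u) = \tfrac{p(1-p)e^{u}}{(1-p+pe^{u})^{2}} \le \tfrac14$ (the bound follows from $q(1-q)\le \tfrac14$ applied to $q = pe^{u}/(1-p+pe^{u})$), so Taylor's theorem with Lagrange remainder gives $\varphi(u) \le u^2/8$, i.e.\ $\mathbb{E}[e^{sY}] \le e^{s^2(b-a)^2/8}$. Everything else — the Markov inequality, the measurability bookkeeping in the conditioning step, and the single-variable optimization over $\lambda$ — is routine; the martingale hypothesis enters only through $\mathbb{E}[D_i\mid\mathcal{F}_{i-1}]=0$ and the bounded-difference hypothesis only through $|D_i|\le c_i$.
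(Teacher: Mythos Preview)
Your proof is correct: it is the standard Chernoff--Cram\'er argument for Azuma's inequality, carried out cleanly (exponential Markov, tower property to peel off increments, Hoeffding's lemma for the conditional mgf bound, then optimizing over $\lambda$), and your derivation of Hoeffding's lemma via the convexity bound and the second-derivative estimate $\varphi''(u)=q(1-q)\le 1/4$ is also correct.

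As for comparison with the paper: the paper does not prove this statement at all. It is listed in the Preliminaries section as a ``Fact'' alongside the Chernoff and Bernstein bounds, with a reference to the Alon--Spencer textbook as a source for all three. So your write-up supplies a full proof where the paper simply quotes a standard result; there is no alternative argument in the paper to compare against.
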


In Section~\ref{sec:arb} we show an improved approximation algorithm for a wide range of arboricity. We define the arboricity of a graph, which is denoted by $\alpha$, as follows~\cite{wiki:xxx}.

\begin{definition}\label{def:arb}
	Given a graph $G$, Let $m_H$ and $n_H$ be the number of edges and nodes of a subgraph $H$ of $G$, respectively. The arboricity of $G$ is $\alpha=\max_{H\subseteq G} \left\lceil \frac{m_H}{n_H-1}\right\rceil$.
\end{definition}

\paragraph{Assumptions:} In all of our upper and lower bounds, we don't assume that the nodes have any global information. In particular, they don't know $n$ or $\Delta$. The only information that each node has before the algorithm starts is its own identifier, and some polynomial upper bound on $n$ (Since the nodes can send $c\log n$ bits in each round to each of their neighbors, naturally, they know some polynomial upper bound on $n$).

\paragraph{Some notations:} The input graph is denoted by $G_w=(V,E,w)$, where $V$ is the set of nodes, $E$ is the set of edges, and $w$ is the weight function. The reason that we choose to add the weight function in a subscript is that some parts of the analysis deal with graphs that have the same sets of nodes and edges as the input graph, but a different weight function. Hence, such a graph will be denoted by $G_{w'}=(V,E,w')$, to indicate that it is the same as the input graph, but with weight function $w'$ rather than $w$. 

We denote by $N^+(v)$ the inclusive neighborhood of $v$, which consists of $N(v)\cup \{v\}$, where $N(v)$ is the set of neighbors of $v$. Furthermore, we denote by $deg(v)=|N(v)|$ the number of neighbors of a node $v$. Finally, we denote by $w(V')$ the total weight of nodes in $V'\subseteq V$. That is, $w(V')=\sum_{v\in V'} w(v)$.

\section{A $\boldsymbol{(1+\epsilon)\Delta}$-Approximation Algorithm}\label{sec:C}

In this section we prove Theorems~\ref{thm:main} and~\ref{thm:Wmain}.

\begin{theorem-repeat}{thm:main}
	\ThmMainDApp
\end{theorem-repeat}

\begin{theorem-repeat}{thm:Wmain}
	\ThmMainAppEx
\end{theorem-repeat}

Theorems~\ref{thm:main} and~\ref{thm:Wmain} share a similar proof structure. First, we present algorithms for $O(\Delta)$-approximation in Sections~\ref{sec:warmup} and~\ref{sec:WeightedSparsifier}. Then, by using a general boosting theorem (Theorem~\ref{thm:amplification} in Section~\ref{sec:amplification}), we get $(1+\epsilon)\Delta$-approximation algorithms.

\subsection{An $\boldsymbol{O(\MIS(n,\Delta))}$-Round Algorithm for $\boldsymbol{O(\Delta)}$-Approximation}\label{sec:warmup}
In this section we show a very simple $O(\MIS(n,\Delta))$-round algorithm that finds an $O(\Delta)$-approximation for $\MaxIS$.
\begin{theorem}\label{thm:mainO1}
	\ThmMainD
\end{theorem}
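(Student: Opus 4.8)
The plan is to run the black-box $\MIS$ algorithm not on $G$ itself but on the subgraph induced by the ``locally heavy'' nodes. For each node $v$ set $\Delta^+(v) = \max_{u \in N^+(v)} \deg(u)$ and $w(N(v)) = \sum_{u\in N(v)} w(u)$, and call $v$ \emph{heavy} if $w(v) \ge \frac{1}{2\Delta^+(v)} w(N(v))$; let $A \subseteq V$ be the set of heavy nodes and $B = V\setminus A$. Each node learns the weights and degrees of its neighbors in a single CONGEST round (weights and degrees fit in $O(\log n)$ bits, since $W = \poly(n)$), so membership in $A$ is decided locally in $O(1)$ rounds and, importantly, without any global knowledge of $n$ or $\Delta$. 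The algorithm then outputs a maximal independent set $I$ of $G[A]$ computed by the black-box $\MIS$ algorithm in $\MIS(n_H,\Delta_H) \le \MIS(n,\Delta)$ rounds, where $n_H \le n$ and $\Delta_H \le \Delta$ are the size and maximum degree of $G[A]$. The total round complexity is $O(1) + \MIS(n,\Delta) = O(\MIS(n,\Delta))$, and the algorithm is deterministic or randomized exactly as the $\MIS$ black-box is. Since $I$ is independent in $G[A]\subseteq G$ it is independent in $G$, so it remains only to prove the weight bound $w(I) \ge \frac{w(V)}{4(\Delta+1)}$.

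Two inequalities drive the analysis. First, the light nodes carry at most half of the total weight. For $v\in B$ we have $w(v) < \frac{1}{2\Delta^+(v)}\sum_{u\in N(v)} w(u)$, and since every $u\in N(v)$ satisfies $\deg(u) \le \Delta^+(v)$, we may replace $\Delta^+(v)$ by $\deg(u)$ term by term; summing over $v\in B$ and exchanging the order of summation gives $w(B) < \frac12 \sum_{v\in B}\sum_{u\in N(v)} \frac{w(u)}{\deg(u)} \le \frac12 \sum_{u\in V} \frac{w(u)}{\deg(u)}\cdot\deg(u) = \frac12 w(V)$, hence $w(A) > \frac12 w(V)$. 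Second, $I$ is maximal in $G[A]$, so every node of $A$ lies in the inclusive $G[A]$-neighborhood of some node of $I$, which yields $w(A) \le \sum_{u\in I} w(N^+(u))$ (taking inclusive neighborhoods in $G$, which only enlarges them). For a heavy node $u$ the definition gives $w(N(u)) \le 2\Delta^+(u)\,w(u)$, so $w(N^+(u)) \le (1 + 2\Delta^+(u))\,w(u) \le (2\Delta+1)\,w(u)$, and therefore $w(A) \le (2\Delta+1)\, w(I)$. Combining the two inequalities, $w(I) \ge \frac{w(A)}{2\Delta+1} > \frac{w(V)}{2(2\Delta+1)} = \frac{w(V)}{4\Delta+2} \ge \frac{w(V)}{4(\Delta+1)}$, as required.

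I expect the only delicate point to be the bound $w(B) \le \frac12 w(V)$: one must use the comparison $\Delta^+(v) \ge \deg(u)$ (valid because $u$ is a neighbor of $v$) in precisely the direction that converts $\sum_{u\in N(v)} w(u)/\Delta^+(v)$ into $\sum_{u\in N(v)} w(u)/\deg(u)$, after which the $\deg(u)$ factors exactly cancel the edge-multiplicities in the reversed double sum. Everything else — the locality and CONGEST-feasibility of computing $A$, the maximality/covering argument for $I$, and the arithmetic that produces the constant $4(\Delta+1)$ — is routine. (If one were allowed to let nodes use the global value $\Delta$ in the threshold, the same proof goes through verbatim with $\Delta^+(v)$ replaced by $\Delta$; but the stated assumptions forbid global knowledge, so I would keep the local version, which also matches the ``$\Omega(1/\Delta)$-fraction'' intuition since $\Delta^+(v)\le\Delta$.)
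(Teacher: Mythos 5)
Your proof is correct and follows essentially the same route as the paper's: your ``heavy'' nodes are the paper's ``good'' nodes (up to the cosmetic choice of exclusive versus inclusive neighborhood and $\Delta^+(v)$ versus $\delta(v)+1$ in the threshold), the bound $w(B)\le w(V)/2$ is the same edge double-counting, and the covering step via maximality of $I$ in the heavy subgraph is identical. The only detail the paper treats that you elide is that a randomized $\MIS$ black-box run on the $n_H$-node subgraph must still succeed with high probability with respect to $n$ rather than $n_H$, which the paper handles by a padding argument (Lemma~\ref{lem:MISSIZE}).
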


\paragraph{Algorithm} For every $v\in V$, let $\delta(v)$ be the maximum degree of a node in the inclusive neighborhood of $v$. That is, $\delta(v)=\max \{deg(u)\mid u\in N^+(v)\}$. A node $v$ is called \emph{good} if $w(v)\geq \frac{1}{2(\delta(v)+1)}\sum_{u\in N^+(v)} w(u)$. The algorithm finds a maximal independent set $I$ in the subgraph induced by the set of good nodes. We prove the following lemma.

\begin{lemma}\label{lem:weighted}
	$w(I)\geq w(V)/4(\Delta+1)$
	\begin{proof}
		Let $V_{good}$ be the set of good nodes, and let $\overline{V}=V\setminus V_{good}$. Observe that,
		\begin{align*}
		&\sum_{v\in \overline{V}} w(v)\leq \sum_{v\in \overline{V}} \frac{1}{2(\delta(v)+1)}\sum_{u\in N^+(v)} w(u)\leq\sum_{v\in V} \frac{deg(v)+1}{2(deg(v)+1)} w(v)= w(V)/2\\
		\Rightarrow&\sum_{v\in I} w(v)\geq \sum_{v\in I} \frac{1}{2(\delta(v)+1)}\sum_{u\in N^+(v)}w(u)\geq \sum_{v\in I}\frac{1}{2(\Delta+1)}\sum_{u\in N^+(v)\cap V_{good}}w(u)\\
		&\geq \frac{1}{2(\Delta+1)}\sum_{v\in V_{good}}w(v)\geq w(V)/4(\Delta+1)
		\end{align*}
		as desired. Since the value of an optimal solution in $G_w$ is at most $w(V)$, the algorithm returns an $O(\Delta)$-approximation for $\MaxIS$.
	\end{proof}
\end{lemma}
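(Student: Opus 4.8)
The plan is to analyze the two-phase algorithm already described and to establish its three claimed properties in turn: legality and round complexity in CONGEST, the weight guarantee, and the deterministic-versus-randomized dichotomy.

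For the round complexity I would first observe that identifying the good nodes takes only $O(1)$ rounds: in a single round every node $v$ broadcasts the pair $(\deg(v), w(v))$ to all its neighbors, after which $v$ knows $\deg(u)$ and $w(u)$ for every $u \in N^+(v)$ and can therefore compute $\delta(v) = \max\{\deg(u) : u \in N^+(v)\}$ and $\sum_{u \in N^+(v)} w(u)$, and hence decide whether it is good; one more round lets each node tell its neighbors whether it is good, so that the induced subgraph $G[V_{good}]$ is known locally. Since degrees are at most $n$ and weights at most $\poly(n)$, each of these messages has $O(\log n)$ bits, so this preprocessing is legal in CONGEST. We then run the chosen $\MIS$ black-box on $G[V_{good}]$; this subgraph has at most $n$ nodes and maximum degree at most $\Delta$, so (by monotonicity of $\MIS(\cdot,\cdot)$ in both arguments) it terminates in $\MIS(n,\Delta)$ rounds, and the overall bound is $O(\MIS(n,\Delta))$. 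The preprocessing is deterministic, so whether the algorithm as a whole is deterministic or randomized is inherited directly from the $\MIS$ subroutine; if that subroutine only succeeds with high probability, then so does the weight guarantee below.

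The heart of the argument is the weight bound $w(I) \ge w(V)/4(\Delta+1)$, which I would prove in two steps. First, bound the total weight of the \emph{bad} nodes $\bar V = V \setminus V_{good}$: by definition each $v \in \bar V$ satisfies $w(v) < \tfrac{1}{2(\delta(v)+1)} \sum_{u \in N^+(v)} w(u)$, so summing over $\bar V$ and exchanging the order of summation, the coefficient accumulated on a given $w(u)$ is $\sum_{v \in \bar V :\, u \in N^+(v)} \tfrac{1}{2(\delta(v)+1)}$; the point is that $u \in N^+(v)$ forces $\delta(v) \ge \deg(u)$, and there are at most $\deg(u)+1$ such $v$, so this coefficient is at most $\tfrac{\deg(u)+1}{2(\deg(u)+1)} = \tfrac12$. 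Hence $w(\bar V) \le w(V)/2$, i.e.\ $w(V_{good}) \ge w(V)/2$. Second, use maximality of $I$ in $G[V_{good}]$: every $u \in V_{good}$ lies in $N^+(v)$ for some $v \in I$ (either $u \in I$, or $u$ has a neighbor in $I$). Then, using that every $v \in I$ is good and that $\delta(v) \le \Delta$,
$$ w(I) = \sum_{v \in I} w(v) \;\ge\; \sum_{v \in I} \frac{1}{2(\Delta+1)} \sum_{u \in N^+(v) \cap V_{good}} w(u) \;\ge\; \frac{1}{2(\Delta+1)} \sum_{u \in V_{good}} w(u) \;\ge\; \frac{w(V)}{4(\Delta+1)}, $$
where the middle inequality is exactly the covering observation. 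Since $OPT(G_w) \le w(V)$, this is an $O(\Delta)$-approximation, proving the theorem.

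The only step that needs care — and the reason the definition of \emph{good} uses $\delta(v)$ rather than $\deg(v)$ — is the double-counting in the bad-node bound: with $\deg(v)$ in place of $\delta(v)$ the coefficient on $w(u)$ could be as large as roughly $\deg(u)/4$ (think of a star centered at a heavy $u$ with many light leaves), and the argument would collapse; using $\delta(v)$ guarantees $\delta(v) \ge \deg(u)$ whenever $u \in N^+(v)$, which is precisely what makes the swap telescope down to $1/2$. Everything else is routine, which is why this serves only as a warm-up before the sparsification-based $O(\Delta)$-approximation.
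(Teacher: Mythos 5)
Your proof is correct and follows essentially the same route as the paper's: bound $w(\overline V)\le w(V)/2$ by exchanging the order of summation (your explicit accounting of the coefficient on $w(u)$ is exactly what the paper's terse middle inequality encodes), then use goodness of each $v\in I$, $\delta(v)\le\Delta$, and maximality of $I$ in $G[V_{good}]$ to cover $V_{good}$. The additional remarks on round complexity and on why $\delta(v)$ rather than $\deg(v)$ is needed are accurate but not part of this lemma.
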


\paragraph{Success with high probability:} Given a graph of $n$ nodes, an algorithm that finds a maximal independent set in the graph with high probability is an algorithm that succeeds with probability at least $1-1/n^c$ for some constant $c>1$. In the algorithm above, the black box can be a randomized algorithm that is run on a subgraph $H=(V_H,E_H)$ of $G_w$. Since $n_H=|V_H|$ is potentially much smaller than $n$, one may wonder whether the algorithm above actually succeeds with high probability with respect to $n$. The main idea is to use an algorithm that is \emph{intended} to work for graphs with $n$ nodes, rather than $n_H$ nodes. We prove the following lemma, whose proof is by a simple padding argument that is deferred to  Appendix~\ref{app:warm}.

\begin{lemma}\label{lem:MISSIZE}
	Let $\mathcal{A}$ be an $\MIS(n,\Delta)$-round algorithm that finds a maximal independent set with success probability $p$ in a graph of $n$ nodes and maximum degree $\Delta$. Let $H=(V_H,E_H)$ be a graph of $n_H\leq n$ nodes with $(c\log n)$-bit identifiers, for some constant $c$, and let $\Delta_H$ be the maximum degree in $H$. There is an $O(\MIS(n,\Delta_H))$-round algorithm $\mathcal{A'}$ that finds a maximal independent set in $H$ with success probability $p$.
\end{lemma}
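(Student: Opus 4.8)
The plan is a straightforward \emph{padding} argument: embed $H$ as the non-isolated part of a virtual graph on $n$ vertices, run $\mathcal{A}$ there, and read off the answer on $V_H$. Concretely, I would let $G=(V_G,E_G)$ with $V_G=V_H\cup U$, where $U$ is a set of $n-n_H$ fresh vertices, and $E_G=E_H$, so that every vertex of $U$ is isolated. Each vertex of $U$ is assigned some identifier from $\{0,1\}^{c\log n}$, chosen pairwise distinct and distinct from the identifiers already present in $V_H$; this is possible because the identifier space has size $n^c\ge n=|V_G|$ (we may take $c\ge 1$). Then $G$ has exactly $n$ vertices, $O(\log n)$-bit identifiers, and maximum degree $\Delta_H$ (the vertices of $U$ have degree $0$), so by hypothesis running $\mathcal{A}$ on $G$ takes $\MIS(n,\Delta_H)$ rounds and, with probability at least $p$, outputs a maximal independent set $I_G$ of $G$.

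Next I would verify the two facts that make this useful. First, correctness: since $U$ is a set of isolated vertices, $G$ is the disjoint union of $H$ and $|U|$ single-vertex components, and $I_G$ is maximal in $G$ if and only if $U\subseteq I_G$ and $I_G\cap V_H$ is maximal in $H$. Hence whenever $\mathcal{A}$ succeeds on $G$, the set $I_G\cap V_H$ is a maximal independent set of $H$, so defining $\mathcal{A}'$ as ``simulate $\mathcal{A}$ on $G$ and output $I_G\cap V_H$'' yields success probability at least $p$. Second, implementability within the stated round bound: because each vertex of $U$ is isolated it neither sends nor receives a message during an execution of $\mathcal{A}$, and a simple induction on the round number shows that the states (hence the messages) of the vertices of $V_H$ in the execution of $\mathcal{A}$ on $G$ coincide with those in a stand-alone execution in which each $v\in V_H$ runs $\mathcal{A}$'s code with node-count parameter $n$, its own identifier, and its private coins. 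So the vertices of $U$ need not be materialized at all, and $\mathcal{A}'$ runs on $H$ alone in $\MIS(n,\Delta_H)$ rounds.

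The one point needing more than bookkeeping is that the vertices of $H$ do not know $n$ exactly, only a polynomial upper bound $\hat{n}$ (per the standing assumptions). I would handle this by running $\mathcal{A}$ with node-count parameter $\hat{n}$, i.e.\ padding up to $\hat{n}$ virtual vertices instead of $n$. Since $\hat{n}\ge n$ and $\mathcal{A}$'s success probability is non-decreasing in its node-count parameter (it has the form $1-1/\poly(\cdot)$), the success probability is still at least $p$; and since $\hat{n}=\poly(n)$ while $\MIS(\cdot,\Delta_H)$ is robust to polynomial blow-ups of its first argument (for the Rozhon and Ghaffari algorithm $\MIS(\poly(n),\Delta)=O(\log\Delta)+\poly(\log\log n)$), the round complexity stays $O(\MIS(n,\Delta_H))$, which accounts for the ``$O(\cdot)$'' in the statement. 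I expect this reconciliation of ``with high probability in $n$'' and ``round complexity in $n$'' with only $\hat{n}$ known to be the only mildly subtle step; the rest is the padding and identifier bookkeeping above.
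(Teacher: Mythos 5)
Your proof is correct and follows essentially the same padding argument as the paper's, down to the same closing observation that padding to $\hat n=\poly(n)$ is harmless because $\MIS(\poly(n),\Delta)=O(\MIS(n,\Delta))$ and the success probability only improves. The only difference is that the paper pads with $\poly(n)$-length paths attached to local minima of $H$ rather than with isolated vertices, purely to accommodate black-box $\MIS$ algorithms that assume a connected input --- an edge case the paper itself flags as a formality beyond the ``easiest'' isolated-vertex version you give.
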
 



\subsection{A $\boldsymbol{\poly(\log\log n)}$-Round Algorithm for $\boldsymbol{O(\Delta)}$-Approximation}\label{sec:WeightedSparsifier}
In this section we show a $\poly(\log\log n)$-round algorithm that finds an $O(\Delta)$-approximation.

\begin{theorem}\label{thm:WmainO2}
	\ThmMain
\end{theorem}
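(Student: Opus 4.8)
The plan is to reduce, via a randomized \emph{weighted sparsification}, to the $O(\Delta)$-approximation of Theorem~\ref{thm:mainO1} run on a graph of maximum degree $O(\log n)$, exactly as previewed in Section~\ref{sec:O(D)}. Concretely, I will construct a random vertex set $V_H\subseteq V$ and take $H=(V_H,E_H,w)$ to be the induced subgraph with the same weights, and I will prove that with high probability $H$ enjoys both of the following: (P1) $\Delta_H=O(\log n)$; and (P2) $w(V_H)=\Omega\!\big(\tfrac{\log n}{\Delta}w(V)\big)$, equivalently $w(V_H)/\Delta_H=\Omega(w(V)/\Delta)$. Given such an $H$, running the algorithm of Theorem~\ref{thm:mainO1} on $H$ produces an independent set $I$ (independent in $G_w$ as well, since $H$ is induced) with $w(I)\ge \tfrac{w(V_H)}{4(\Delta_H+1)}=\Omega(w(V)/\Delta)$ by Lemma~\ref{lem:weighted}, which is the desired $\tfrac{w(V)}{c\Delta}$ bound for a suitable constant $c>1$. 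Since $\Delta_H=O(\log n)$ and $|V_H|\le n$, that invocation runs in $\MIS(n,O(\log n))=O(\log\log n)+\poly(\log\log n)=\poly(\log\log n)$ rounds by~\cite{RozhonGhaffari}; to make ``high probability'' be with respect to $n$ rather than $|V_H|$, the internal $\MIS$ call is routed through the padding reduction of Lemma~\ref{lem:MISSIZE}. Computing $V_H$ takes $O(1)$ rounds, so the total is $\poly(\log\log n)$. (If $\Delta=O(\log n)$ we may simply take $H=G$, so assume $\Delta=\omega(\log n)$ below.)

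For the sparsifier, each node $v$ is placed in $V_H$ independently with probability $p_v=\min\{1,\ \tfrac{C\log n}{\Delta}+\tfrac{C\,w(v)\log n}{W(v)}\}$ for a large constant $C$, where $W(v)$ is a locally computable surrogate for $w(V)$ — the maximum over $u\in N^+(v)$ of the \emph{weighted degree} $\sum_{x\in N(u)}w(x)$, which every node learns in two rounds (the precise constants and exact form of $W(v)$ are tuned to make the analysis go through). The first term already gives $E[w(V_H)]\ge \tfrac{C\log n}{\Delta}w(V)$, so (P2) is really a concentration statement, and the second, weight-boosting term is what makes that concentration tractable. Property (P1) is the easier half: the expected degree of a fixed $u$ in $H$ is $\sum_{v\in N(u)}p_v$, whose uniform part is at most $C\log n$ and whose weight-boosted part $\sum_{v\in N(u)}\tfrac{C w(v)\log n}{W(v)}$ is also $O(\log n)$ because each $W(v)$ with $v\in N(u)$ is at least the weighted degree of $u$, hence at least $\sum_{v'\in N(u)}w(v')$; a Chernoff bound (Fact~\ref{fact:chrnoff}) over the independent inclusion indicators together with a union bound over the at most $n$ nodes then yields $\Delta_H=O(\log n)$ with high probability.

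The crux — and the reason plain Chernoff does not suffice for (P2) — is that $w(V_H)=\sum_v w(v)\mathbf 1_{v\in V_H}$ is a sum of highly non-uniform terms, so a single heavy node can carry variance $\approx\tfrac{\log n}{\Delta}w(v)^2$ that swamps $(E\,w(V_H))^2$. The fix is built into $p_v$: call $v$ \emph{heavy} if the boost term is at least $1$ (so $p_v=1$ and $v$ lies in $H$ deterministically, contributing no variance and only helping (P2)) and \emph{light} otherwise, in which case $w(v)<W(v)/(C\log n)$. For the light part one bounds the per-term maximum $M=\max_{\text{light}}w(v)$ and, using $p_v\ge \tfrac{C w(v)\log n}{W(v)}$, the total variance $\sum_{\text{light}}w(v)^2 p_v$; feeding these into Bernstein's inequality (Fact~\ref{fact:Berns}) with deviation a constant fraction of the mean should give a failure probability of $n^{-\Omega(C)}$, so $w(V_H)=\Omega(\tfrac{\log n}{\Delta}w(V))$ with high probability, which is (P2). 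I expect the bookkeeping in this last step to be the main obstacle, and in particular the fact that $W(v)$ is only a \emph{local} surrogate for $w(V)$: it must be small enough, relative to the surrounding neighborhood weight, for the degree bound (P1) to survive, and yet the heavy/light split and the variance estimate must still cash out to the \emph{global} bound $\Omega(\tfrac{\log n}{\Delta}w(V))$ — the useful asymmetry being that over-estimating $p_v$ can only strengthen (P2) while under-estimating it can only strengthen (P1). Once (P1) and (P2) hold simultaneously with high probability, combining them with Theorem~\ref{thm:mainO1} and Lemma~\ref{lem:weighted} as in the first paragraph completes the proof.
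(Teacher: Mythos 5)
Your overall architecture matches the paper's exactly: sample $V_H$ with probability equal to a uniform term plus a weight-boost term $\approx \frac{C\log n\, w(v)}{W(v)}$ with $W(v)=w_{max}(v)$ the maximum weighted degree in the inclusive neighborhood, prove $\Delta_H=O(\log n)$ by Chernoff, prove $w(V_H)=\Omega(w(V)\log n/\Delta)$ by a second-moment argument, and then run the algorithm of Theorem~\ref{thm:mainO1} on $H$ through the padding of Lemma~\ref{lem:MISSIZE}. The degree bound (P1) and the final assembly are fine (one cosmetic point: your uniform term $C\log n/\Delta$ uses the global $\Delta$, which nodes are not assumed to know; the paper uses the locally computable $\delta(v)$, and your (P1) argument accommodates this without change). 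The genuine gap is in (P2), in the heavy/light split. You call $v$ heavy only when $p_v=1$, so a \emph{light} node can still have weight as large as roughly $W(v)/(C\log n)\le w(V)/(C\log n)$. Plugging $M\approx w(V)/(C\log n)$ and $\mathrm{Var}\le M\cdot\mathbb{E}[Y]$ into Bernstein with $\mathbb{E}[Y]=\Theta(w(V)\log n/\Delta)$ gives failure probability $\exp(-\Omega(\mathbb{E}[Y]/M))=\exp(-\Omega(\log^2 n/\Delta))$, which is vacuous once $\Delta\gg\log^2 n$. This is not just slack in the bound: take one node $u$ with $w(u)=w(V)/(2C\log n)$ adjacent to $\Delta$ nodes of weight $\approx w(V)/\Delta$ carrying the rest of the weight (pad with zero-weight nodes). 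Then $W(u)\approx w(V)$, so $u$ is light with $p_u\approx 1/2$, and $u$ alone contributes $\approx w(V)/(4C\log n)$ to $\mathbb{E}[Y]$, dominating the $\Theta(w(V)\log n/\Delta)$ contribution of everything else; with probability about $1/2$ node $u$ is not sampled and $Y<\mathbb{E}[Y]/2$. So the concentration statement you want Bernstein to deliver is simply false on some instances (the theorem survives there only because the remaining nodes already carry $\Omega(w(V)\log n/\Delta)$, which your argument does not isolate).

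The missing idea is the paper's split by \emph{weight rank} rather than by $p_v$: let $V_{high}$ be the $\Delta$ heaviest nodes and $V_{low}$ the rest, so every node of $V_{low}$ has weight at most $w(V)/\Delta$. If $w(V_{low})\ge w(V)/2$, Bernstein applies to $\sum_{v\in V_{low}}w(v)\mathbf{1}_{v\in V_H}$ with $M=w(V)/\Delta$, giving exponent $\Omega(\mathbb{E}[Y]/M)=\Omega(\log n)$ as needed. If instead $w(V_{high})\ge w(V)/2$, one shows the nodes of weight at least $w(V)/(4\Delta)$ carry weight at least $w(V)/4$, that the expected \emph{number} of them sampled is $\Omega(\log n)$ (using only the boost term and $W(v)\le w(V)$), and applies Chernoff to this count of independent indicators; each sampled such node contributes at least $w(V)/(4\Delta)$. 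You need this case analysis, or some other device that caps the per-node weight at $O(w(V)/\Delta)$ inside the Bernstein bucket and handles the moderately heavy nodes by counting; without it the concentration step of (P2) does not go through.
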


Our algorithm has the following two-step structure.

\begin{enumerate}
	\item First, we sample a sparse subgraph $H_w=(V_H,E_H,w)$ of $G_w$ with the following two properties:
	\begin{enumerate}
		\item The maximum degree $\Delta_H$ of $H_w$ is at most $O(\log n)$.
		\item $w(V_H)/\Delta_H=\Omega( w(V)/\Delta)$. That is, the ratio between the total weigh and maximum degree in $H_w$ is at least, up to a constant factor, as in $G_w$.
	\end{enumerate} 
	\item Then, we use Theorem~\ref{thm:mainO1} to find an independent set in $H_w$ of size at least  $\frac{w(V_H)}{4(\Delta_H+1)}=\frac{w(V)}{c\Delta}$, for some constant $c>1$, in $O(\MIS(n,\Delta_H))=O(\MIS(n,\log n))=\poly(\log\log n)$ rounds.
\end{enumerate} 


\paragraph{The first step: sampling a subgraph with the desired properties.} Recall that $w(N(v))$ is the sum of weights of the neighbors of $v$, which we call the \emph{weighted degree} of $v$. For each node $v\in V$, let  $w_{max}(v)=\max\{w(N(u))\mid u\in N^+(v)\}$. It is useful to think about $w_{max}(v)$ as the \emph{maximum weighted degree} of a node in the inclusive neighborhood of $v$. We sample a subgraph $H_w=(V_H,E_H,w)$, as follows. Let $\lambda\geq 1$ be a constant to be chosen later. Recall that $\delta(v)$ is the maximum degree of a node in the inclusive neighborhood of $v$. Each node $v\in V$ joins $V_H$ with probability $$p(v)=\min\{\lambda\log n\cdot (\frac{1}{\delta(v)}+\frac{w(v)}{w_{max}(v)}),1\} $$

In Lemma~\ref{lem:DegreeW}, we show that the maximum degree of $H_w$ is $\Delta_H=O(\log n)$. In Lemma~\ref{lem:WeightedSize}, we show that $w(V_H)=\Omega(\min\{w(V),w(V)\log n/\Delta\})$.

\begin{lemma}\label{lem:DegreeW}The maximum degree $\Delta_H$ in $H_w$ is $O(\log n)$, with high probability.
\end{lemma}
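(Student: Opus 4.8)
The plan is to fix a node $v \in V$, estimate the expected number of its neighbors that survive into $H_w$, show this expectation is $O(\log n)$, and then apply a concentration bound together with a union bound over all $n$ nodes. The subtlety — and the reason a naive Chernoff argument is not quite enough — is that the sampling probabilities $p(u)$ are wildly non-uniform (a single very heavy neighbor $u$ with $w(u) \approx w_{max}(u)$ can have $p(u)$ close to $1$), so the sum $\sum_{u \in N(v)} \mathbf{1}[u \in V_H]$ is a sum of Bernoulli variables with possibly large individual means; a multiplicative Chernoff bound on the deviation from the mean still works, but one has to first argue the mean itself is $O(\log n)$, which is where the structure of $p(u)$ and the definitions of $\delta(\cdot)$ and $w_{max}(\cdot)$ must be used carefully.

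First I would bound $\mathbb{E}[\deg_{H_w}(v)] = \sum_{u \in N(v)} p(u)$. Split $p(u) = \min\{\lambda \log n \cdot (1/\delta(u) + w(u)/w_{max}(u)),\,1\}$ into its two additive pieces (dropping the $\min$, which only decreases things). For the first piece: for every $u \in N(v)$ we have $v \in N^+(u)$, so $\delta(u) \ge \deg(v) \ge |N(v)|$; hence $\sum_{u \in N(v)} \lambda \log n / \delta(u) \le \lambda \log n \cdot |N(v)| / |N(v)| = \lambda \log n$. For the second piece: for every $u \in N(v)$, again $v \in N^+(u)$ and also each neighbor $u$ of $v$ lies in $N(v) = N^+(v)\setminus\{v\}$, so $w(N(v)) \le w_{max}(u)$; more to the point, summing $w(u)$ over $u \in N(v)$ gives $w(N(v))$, and using $w_{max}(u) \ge w(N(v))$ for each such $u$ yields $\sum_{u \in N(v)} \lambda \log n \cdot w(u)/w_{max}(u) \le \lambda \log n \cdot w(N(v))/w(N(v)) = \lambda \log n$. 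So $\mathbb{E}[\deg_{H_w}(v)] \le 2\lambda \log n$. (One should double-check the edge cases $w(N(v)) = 0$ or $\deg(v) = 0$, which make $v$ isolated in $H_w$ or trivially fine.)

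Next I would apply the multiplicative Chernoff bound (Fact~\ref{fact:chrnoff}) to $\deg_{H_w}(v) = \sum_{u \in N(v)} X_u$ where $X_u = \mathbf{1}[u \in V_H]$ are independent $\{0,1\}$ variables. Since the mean $\mu_v := \mathbb{E}[\deg_{H_w}(v)] \le 2\lambda \log n$, taking $\epsilon = 1$ (or padding the mean up to exactly $2\lambda\log n$ by adding fictitious variables, to handle the case $\mu_v$ is small) gives $\Pr[\deg_{H_w}(v) \ge 2 \cdot 2\lambda \log n] \le 2\exp(-\tfrac{1}{3}\cdot 2\lambda\log n) = 2 n^{-2\lambda/3}$. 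Choosing $\lambda$ a large enough constant (e.g. $\lambda \ge 6$) makes this at most $n^{-c'}$ for any desired constant $c'$, so a union bound over all $n$ nodes shows that with high probability every node has $\deg_{H_w}(v) = O(\log n)$, i.e. $\Delta_H = O(\log n)$.

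The main obstacle I anticipate is not the concentration step — that is routine once the mean is controlled — but rather being careful with the two structural inequalities $\delta(u) \ge \deg(v)$ and $w_{max}(u) \ge w(N(v))$ for $u \in N(v)$, and making sure the $\min\{\cdot,1\}$ truncation and degenerate cases (isolated nodes, zero-weight neighborhoods, $n_H$ vs.\ $n$ in the union bound) are handled so that the final failure probability is genuinely polynomially small in the true $n$. No martingale or Bernstein machinery is needed here; Bernstein will be the tool for the companion Lemma~\ref{lem:WeightedSize} about the surviving weight, not for the degree bound.
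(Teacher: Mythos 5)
Your proof is correct and follows essentially the same route as the paper: the key structural inequalities $\delta(u)\ge \deg(v)$ and $w_{max}(u)\ge w(N(v))$ for $u\in N(v)$ give $\sum_{u\in N(v)}p(u)\le 2\lambda\log n$, followed by Chernoff and a union bound over all $n$ nodes. The only cosmetic difference is that the paper separately handles neighbors with $p(u)\ge 1$ via a deterministic counting (contradiction) argument and applies Chernoff only to the remaining neighbors, whereas you fold the deterministic inclusions into a single Chernoff application, which is equally valid since Fact~\ref{fact:chrnoff} holds for arbitrary independent $\{0,1\}$-valued variables.
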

\begin{proof}
	Let $V^+=\{v\in V\mid p(v)\geq 1\}$. We show that each node $u$ has at most $O(\log n)$ neighbors in $V^+\cap V_H$, and at most $O(\log n)$ neighbors in $(V\setminus V^+)\cap V_H$. Let $N_H(v)$ be the set of neighbors of $v$ in $H$.
	\begin{enumerate}
		\item For every $v\in V$, $|N_H(v)\cap V^+|\leq 2\lambda\log n$:  Assume towards a contradiction that there are more than $2\lambda\log n$ nodes in $N_H(v)\cap V^+$. Since each node $v\in V^+$ has $p(v)\geq 1$. it holds that \begin{align*}
		\sum_{u\in N(v)\cap V^+} p(u)\geq \sum_{u\in N_H(v)\cap V^+}p(u)> 2\lambda\log n
		\end{align*}
		On the other hand, \begin{align*}\sum_{u\in N(v)\cap V^+} p(u)\leq \sum_{u\in N(v)}p(u)=\sum_{u\in N(v)}\lambda\log n\cdot (\frac{1}{\delta(v)}+\frac{w(v)}{w_{max}(v)})
		\end{align*}
		Since $deg(v)=|N(v)|$ and $w(N(v))=\sum_{u\in N(v)} w(u)$ are lower bounds on $\delta(v)$ and $w_{max}(v)$, respectively, we have that \begin{align*}
		\sum_{u\in N(v)}\lambda\log n\cdot (\frac{1}{\delta(u)}+\frac{w(u)}{w_{max}(u)})\leq \sum_{u\in N(v)}\lambda\log n\cdot(\frac{1}{deg(v)}+\frac{w(u)}{w(N(v))})=2\lambda\log n
		\end{align*}
		which is a contradiction.
		\item $|N_H(v)\cap (V\setminus V^+)|\leq 2\lambda\log n$: Observe that the expected number of neighbors of $v$ in $N_H(v)\cap (V\setminus V^+)$ is \begin{align*}
		\sum_{u\in N(v)} p(u)\leq 2\lambda\log n
		\end{align*}
		 Since $|N_H(v)\cap (V\setminus V^+)|$ is a sum of independent random variables, one can apply Chernoff's bound (Fact~\ref{fact:chrnoff}) to achieve that this number concentrates around its expectation with high probability.
	\end{enumerate}
	By applying a standard Union-Bound argument over all the nodes, we conclude that the maximum degree in $H_w$ is $\Delta_H=O(\log n)$ with high probability.
\end{proof}

The rest of this section is devoted to the task of proving that $w(V_H)=\Omega(\min\{w(V),w(V)\log n/\Delta\})$. This is proved in Lemma~\ref{lem:WeightedSize}. First, we start by proving a slightly weaker lemma, that assumes that for all $v\in V$, $p(v)\leq 1$.

\begin{lemma}\label{lem:WeightedSizeWeaker}
	Assume $p(v)\leq 1$, for all $v\in V$. It holds that $w(V_H)=\Omega(w(V)\log n/\Delta)$, with high probability.
\end{lemma}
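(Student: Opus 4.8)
The plan is to apply Bernstein's inequality (Fact~\ref{fact:Berns}) to the random variable $w(V_H) = \sum_{v \in V} w(v) X_v$, where $X_v$ is the indicator that $v$ joins $V_H$, so that $\mathrm{E}[w(V_H)] = \sum_{v \in V} w(v) p(v)$. First I would establish a lower bound on the expectation. Since $p(v) = \lambda \log n \cdot (\tfrac{1}{\delta(v)} + \tfrac{w(v)}{w_{max}(v)}) \geq \lambda \log n / \delta(v) \geq \lambda \log n / \Delta$ (using $\delta(v) \le \Delta$), we get $\mathrm{E}[w(V_H)] \geq \sum_{v\in V} w(v) \cdot \lambda \log n / \Delta = \lambda \log n \cdot w(V)/\Delta$. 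Call this quantity $\mu$; we want to show $w(V_H) \geq \mu/2$ (say) with high probability, which gives the claimed $\Omega(w(V)\log n/\Delta)$ bound.

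Next I would bound the variance and the maximum summand needed for Bernstein. Each summand $w(v) X_v$ is bounded by $M := \max_{v} w(v)$; more usefully, I would argue that the ``$w(v)/w_{max}(v)$'' term in $p(v)$ is exactly what controls things: since $w_{max}(v) \geq w(N(u))$ for $u = v$'s heaviest-weighted-degree neighbor, and in particular there is some node whose weighted degree dominates, one can show $w(v) \leq w_{max}(v)$ is too weak, but $w(v) p(v) \geq \lambda \log n \cdot w(v)^2 / w_{max}(v)$ lets us compare the variance to the mean. Concretely, $\mathrm{Var}(w(v)X_v) \leq w(v)^2 p(v)$, and I would like to show $\sum_v w(v)^2 p(v) = O(w_{max}^{*} \cdot \mu / \log n)$ or similar, where the key observation is that for any $v$, $w(v)^2 p(v) \leq w(v)^2 \leq w(v) \cdot w_{max}(v) \cdot (\text{something})$ — the right move is to note $p(v) \geq \lambda \log n \cdot w(v)/w_{max}(v)$, hence $w(v) \leq w_{max}(v) p(v)/(\lambda \log n)$, so $\sum_v w(v)^2 p(v) \leq \frac{1}{\lambda \log n} \sum_v w(v) w_{max}(v) p(v)^2 \le \frac{\max_v w_{max}(v)}{\lambda \log n}\sum_v w(v) p(v)$. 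This relates the variance to $\mu$ times $\max_v w_{max}(v)/(\lambda \log n)$. A parallel argument bounds $M$: since $w(v) \leq w_{max}(v) p(v)/(\lambda\log n) \le \max_v w_{max}(v)/(\lambda \log n)$ when $p(v)\le 1$. Then Bernstein gives $\Pr[w(V_H) < \mu/2] \leq 2\exp(-\Omega(\mu^2 / (M\mu + \sum \mathrm{Var}))) \leq 2\exp(-\Omega(\lambda \log n))$, which is $1/\mathrm{poly}(n)$ for $\lambda$ a large enough constant. Here one must be slightly careful that $\max_v w_{max}(v)$ is only a per-node quantity; the cleanest route is probably to run the concentration argument ``locally'' — fix a node $v$ and restrict attention to its neighborhood or to nodes whose weighted degree is within a constant factor, normalizing by $w_{max}(v)$ — which is exactly why the lemma is later upgraded in Lemma~\ref{lem:WeightedSize} and why $w_{max}(v)$ was defined as a local maximum rather than $w(V)$.

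The main obstacle I anticipate is precisely this mismatch between the global quantity $w(V)$ appearing in the target bound and the local quantities $w_{max}(v)$, $\delta(v)$ that each node actually uses: a naive global application of Bernstein wants to divide by $w(V)$, but the algorithm can't, so the variance/max bounds have to be expressed through $w_{max}(v)$ and then summed carefully so that the $\lambda \log n$ factor survives in the exponent. I would handle this by showing that $\max_v w_{max}(v) \le \text{(constant)} \cdot w(V)$ trivially, so $M \le w(V)/(\lambda \log n)$ and $\sum_v \mathrm{Var} \le w(V) \cdot \mu / (\lambda \log n)$, whence the Bernstein exponent is $\Omega(\mu^2/(w(V)\mu/(\lambda\log n))) = \Omega(\lambda \log n \cdot \mu / w(V)) = \Omega(\lambda^2 \log^2 n/\Delta)$ — and since $\Delta$ could be large this is not automatically $\Omega(\log n)$. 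This is the real subtlety, and it is why the lemma as stated only needs to hold ``with high probability'' in a regime where the bound is meaningful; I expect the actual proof either restricts to the heaviest node and argues $w(v) \le w(V)$ can be replaced by a sharper per-node estimate, or it suffices because when $\Delta$ is very large the term $w(v)/w_{max}(v)$ alone already forces $p(v)$ close to $1$ for the heavy nodes, so those contribute $\Omega(w(V))$ deterministically. I would reconcile these cases by splitting $V$ into heavy nodes (where $w(v)/w_{max}(v) = \Omega(1/\log n)$, handled near-deterministically) and light nodes (where the $1/\delta(v)$ term dominates and a clean Bernstein/Chernoff bound applies), then combining.
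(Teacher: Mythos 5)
You correctly set up the Bernstein approach and, to your credit, you correctly diagnose the obstacle: a global application with $M\le w(V)/(\lambda\log n)$ yields an exponent of order $\log^2 n/\Delta$, which is $o(1)$ once $\Delta\gg \log^2 n$ (and the hypothesis $p(v)\le 1$ already forces $\delta(v)\ge\lambda\log n$, so large $\Delta$ is exactly the relevant regime). The gap is that your proposed repair does not close this hole. Splitting by the ratio $w(v)/w_{max}(v)$ (equivalently, by sampling probability) leaves a ``light'' class that can still contain nodes of individual weight as large as $\Theta(w(V)/\log n)$, so Bernstein on that class gives the same useless exponent $\Theta(\log^2 n/\Delta)$. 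And the ``heavy'' class is not handled ``near-deterministically'': under the hypothesis $p(v)\le 1$ these nodes join only with some constant probability bounded away from $1$, there may be very few of them, and a node with large ratio $w(v)/w_{max}(v)$ need not have weight comparable to $w(V)$ at all, so the claim that they contribute $\Omega(w(V))$ deterministically is unjustified.

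The paper's fix is a split by weight \emph{rank}, not by sampling probability: $V_{high}$ is the set of the $\Delta$ heaviest nodes and $V_{low}$ is the rest. The payoff is the pigeonhole bound $w(v)\le w(V)/\Delta$ for every $v\in V_{low}$, which is precisely the bound on $M$ (and on the variance, via $\sum_v \mathrm{Var}(y_v)\le M\cdot\mathbb{E}[Y]$) that makes the Bernstein exponent $\Omega(\Delta\cdot\mathbb{E}[Y]/w(V))=\Omega(\lambda\log n)$. When instead $V_{high}$ carries at least half the weight, the paper restricts to $S=\{v\in V_{high}:\ w(v)\ge w(V)/(4\Delta)\}$, which carries at least $w(V)/4$ because $|V_{high}|=\Delta$, applies Chernoff to the \emph{number} of sampled nodes of $S$ (a sum of indicators with expectation $\Omega(\log n)$, using $p(v)\ge\lambda\log n\cdot w(v)/w(V)$), and then multiplies by the uniform lower bound $w(V)/(4\Delta)$ on their individual weights. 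Both of these steps --- the rank-based decomposition giving $w(v)\le w(V)/\Delta$ on the low class, and the count-then-scale Chernoff argument on the high class --- are missing from your proposal, and without them the concentration step fails for large $\Delta$.
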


\paragraph{Main idea of the proof of Lemma~\ref{lem:WeightedSizeWeaker}:} Let $w_1\geq w_2\geq...\geq w_n$ be a sorting of the weights of nodes in $V$ in a decreasing order (where ties are broken arbitrarily). Let $V_{high}=\{u\in V\mid w(u)\in \{w_1,...,w_{\Delta}\}\}$, and let $V_{low}=V\setminus V_{high}=\{u\in V\mid w(u)\in \{w_{\Delta+1},...,w_n\}\}$. That is, $V_{high}$ contains the $\Delta$ heaviest nodes, and $V_{low}$ contains all the other nodes.  The proof is split into the following two cases that are proven separately in Claims~\ref{claim:FirstCase} and~\ref{claim:SecondCase}.
\begin{enumerate}
	\item $w(V_{high})\geq w(V)/2$: In this case, at least half of the total weight is distributed among high-weight nodes. Intuitively, we need to make sure that we get many of these high-weight nodes. Since the number of high-weight nodes that are sampled is a sum of independent random variables, we are able to use Chernoff's bound to prove that many of them are sampled, with high probability. The full proof for this case is presented in Claim~\ref{claim:FirstCase}.
	\item $w(V_{low})\geq w(V)/2$: In this case, at least half of the total weight is distributed among low-weight nodes. Therefore, it is sufficient to show that $w(V_H)=\Omega(w(V_{low})\log n/\Delta)$. The key property here is that we can bound the maximum weight of a node in $V_{low}$ by $w(V)/\Delta$. We show how to use this property together with Bernstein's inequality to prove Lemma~\ref{lem:WeightedSizeWeaker} for this case. The full proof for this case is presented in Claim~\ref{claim:SecondCase}.
\end{enumerate}

\begin{claim}\label{claim:FirstCase}
	Assume that for all $v\in V$,  $p(v)\leq 1$. Let $V_{high}=\{u\in V\mid w(u)\in \{w_1,...,w_{\Delta}\}\}$. If $w(V_{high})\geq w(V)/2$, then $w(V_H)=\Omega(w(V)\log n/\Delta)$, with high probability.
	\begin{proof}
		Let $S=\{v\in V_{high}\mid w(v)\geq w(V)/4\Delta\}$. We start by showing that at least a constant fraction of the total weight in $G_w$ is distributed among nodes in $S$. Let $\overline{S}=V_{high}\setminus S$, we start by showing that $w(\overline{S})\leq w(V)/4$: 
		
		\begin{align*}
		w(\overline{S})\leq \sum_{v\in \overline{S}} w(v)\leq \sum_{v\in \overline{S}} \frac{w(V)}{4\Delta}\leq \frac{w(V)}{4} 
		\end{align*}

		where the last inequality holds because $|\overline{S}|\leq |V_{high}|=\Delta$. Therefore, $w(S)=w(V_{high}\setminus \overline{S})= w(V_{high})-w(V(\overline{S}))\geq w(V)/4$. Next, we show that $|S\cap V_H|=\Omega(\log n)$, by using  Chernoff's bound. Let $x_v$ be a $\{0,1\}$ random variable indicating whether $v\in V_H$, and let $X=\sum_{v\in S}x_v$. We show that the expectation of $X$ is at least $c\log n/4$.  \begin{align*}\mathbb{E}[X]&=\sum_{v\in S}\mathbb{E}[x_v]= \sum_{v\in S}p(v)=\sum_{v\in S}\lambda\log n\cdot (\frac{1}{\delta(v)}+\frac{w(v)}{w_{max}(v)})\\
		&\geq \sum_{v\in S}\frac{w(v)\lambda\log n}{w(V)}\geq \frac{\lambda\log n}{w(V)}\cdot \sum_{v\in S}w(v)=\frac{w(S)\lambda\log n}{w(V)}\geq \frac{\lambda\log n}{4}
		\end{align*}
		
		Furthermore, sine $X$ is a sum of independent $\{0,1\}$ random variables with expectation $\Omega(\log n)$, by applying Chernoff's bound (Fact~\ref{fact:chrnoff}), we conclude that there are at least $\Omega(\log n)$ nodes in $S\cap V_H$, with high probability. Since each node in $S$ has weight at least $w(V)/4\Delta$, this implies that the total weight in $V_H$ is $w(V_H)\geq w(S\cap V_H)= \Omega(w(V)\log n/\Delta)$, with high probability, as desired.
	\end{proof}
\end{claim}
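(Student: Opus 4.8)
The plan is to isolate, inside $V_{high}$, a sub-collection $S$ of nodes that are simultaneously ``heavy enough'' on an absolute scale and that still carry a constant fraction of the total weight, and then to argue that $\Omega(\log n)$ of them land in $V_H$. Concretely, set
$$S=\set{v\in V_{high}\mid w(v)\geq w(V)/(4\Delta)}.$$
First I would verify that $w(S)=\Omega(w(V))$. The complement $\overline{S}=V_{high}\setminus S$ has at most $|V_{high}|=\Delta$ nodes, each of weight strictly less than $w(V)/(4\Delta)$, so $w(\overline{S})< w(V)/4$; combined with the case hypothesis $w(V_{high})\geq w(V)/2$ this gives $w(S)=w(V_{high})-w(\overline{S})> w(V)/4$.

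Next I would bound the expected number of nodes of $S$ sampled into $V_H$. For any $v$, the quantity $w_{max}(v)=\max\set{w(N(u))\mid u\in N^+(v)}$ is a maximum of neighbor-weight sums, each of which is at most $w(V)$, so $w_{max}(v)\leq w(V)$. Hence, using that we are in the regime $p(v)\leq 1$ so that $p(v)=\lambda\log n\cdot(\tfrac{1}{\delta(v)}+\tfrac{w(v)}{w_{max}(v)})$, for every $v\in S$ we have $p(v)\geq \lambda\log n\cdot w(v)/w(V)$. Letting $x_v$ be the $\set{0,1}$ indicator of $v\in V_H$ and $X=\sum_{v\in S}x_v$, the $x_v$ are independent (nodes join $V_H$ independently) and
$$\mathbb{E}[X]=\sum_{v\in S}p(v)\;\geq\;\frac{\lambda\log n}{w(V)}\sum_{v\in S}w(v)\;=\;\frac{\lambda\log n\cdot w(S)}{w(V)}\;\geq\;\frac{\lambda\log n}{4}.$$
Since $X$ is a sum of independent $\set{0,1}$ variables with mean $\Omega(\log n)$, the multiplicative Chernoff bound (Fact~\ref{fact:chrnoff}) with $\epsilon=1/2$ yields $X\geq \mathbb{E}[X]/2\geq \lambda\log n/8$ except with probability $2\exp(-\Omega(\lambda\log n))$, which is at most $1/n^{c}$ once $\lambda$ is chosen large enough relative to the desired constant $c$.

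Finally, since every node of $S$ has weight at least $w(V)/(4\Delta)$, on the good event we conclude
$$w(V_H)\;\geq\;w(S\cap V_H)\;\geq\;X\cdot\frac{w(V)}{4\Delta}\;=\;\Omega\!\left(\frac{w(V)\log n}{\Delta}\right),$$
as claimed. The only point that needs care is the choice of the threshold $w(V)/(4\Delta)$ defining $S$: it must be small enough that discarding $\overline{S}$ loses only a constant fraction of the weight (which is exactly why $V_{high}$ is taken to be the $\Delta$ heaviest nodes, so $|\overline{S}|\leq\Delta$), yet large enough that $\Theta(\log n)$ sampled members of $S$ already account for $\Omega(w(V)\log n/\Delta)$ of total weight. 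Everything else is a routine use of $w_{max}(v)\leq w(V)$ together with Chernoff; no delicate concentration argument (as in the complementary low-weight case) is needed here, precisely because the weight is concentrated on only $\Delta$ nodes, whose sampled indicators are independent.
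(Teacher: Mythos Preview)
Your proposal is correct and follows essentially the same approach as the paper: define $S=\{v\in V_{high}\mid w(v)\geq w(V)/(4\Delta)\}$, use $|\overline{S}|\leq\Delta$ to get $w(S)\geq w(V)/4$, lower-bound $p(v)\geq \lambda\log n\cdot w(v)/w(V)$ via $w_{max}(v)\leq w(V)$, and apply Chernoff to conclude $|S\cap V_H|=\Omega(\log n)$. Your write-up is in fact slightly more explicit than the paper's (you spell out why $w_{max}(v)\leq w(V)$ and give the Chernoff parameter), but the argument is identical.
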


\begin{claim}\label{claim:SecondCase}
	Assume that for all $v\in V$,  $p(v)\leq 1$. Let $V_{low}=\{v\in V\mid w(v)\in \{w_{\Delta+1},...,w_n\}\}$. If $w(V_{low})\geq w(V)/2$, then $w(V_H\cap V_{low})=\Omega(w(V)\log n/\Delta)$, with high probability.
	\begin{proof}
	 Let $x_v$ be a $\{0,1\}$ random variable indicating whether $v\in V_H$, let $y_v=x_v\cdot w(v)$, and let $Y=\sum_{v\in V_{low}}y_v$. We prove the following 3 properties:
	 \begin{enumerate}
		\item $\mathbb{E}(Y)\geq \frac{w(V)\lambda\log n}{2\Delta}$: this is because 
	 
		 \begin{align*}\mathbb{E}[Y]&=\sum_{v\in V_{low}}p(v)\cdot w(v)=\sum_{v\in V_{low}}\lambda\log n\cdot (\frac{1}{\delta(v)}+\frac{w(v)}{w_{max}(v)})\cdot w(v)\\
		 &\geq \sum_{v\in V_{low}} \frac{w(v)\lambda\log n}{\Delta}= \frac{w(V_{low})\lambda\log n}{\Delta}\geq \frac{w(V)\lambda\log n}{2\Delta}
		 \end{align*}
	 
		 where the last equality holds since $w(V_{low})\geq w(V)/2$.

	 	\item For any $v\in V_{low}$, it holds that $w(v)\leq w(V)/\Delta$: Recall that $\{w_1,\cdots,w_n\}$ is an ordering of the weight of nodes by a decreasing order. Hence, for any $j$, it holds that
	 	\begin{align*}&w_j\cdot j\leq \sum_{i=1}^{j} w_j\leq w(V)\end{align*}
	 	
	 	where the first inequality holds because $w_j$ is the minimum among $\{w_1,...,w_j\}$. Hence, since each node $v\in V_{low}$ has weigh $w_j$ where $j>\Delta$, we have that $w(v)\leq w(V)/\Delta$ for any $v\in V_{low}$.

	 	\item It holds that $\sum_{v\in V_{low}}\mathbb{E}[y_v^2]\leq w(V)\cdot \mathbb{E}[Y]/\Delta$: First, observe that \begin{align*}
	 	&\sum_{v\in V_{low}}\mathbb{E}[y_v^2]\leq \max\{w(v)\mid v\in V_{low}\}\cdot \sum_{v\in V_{low}}\mathbb{E}[y_v]=\max\{w(v)\mid v\in V_{low}\}\cdot\mathbb{E}[Y]\\
	 	&\leq\frac{w(V)\cdot \mathbb{E}[Y]}{\Delta}
	 	\end{align*}
	 	where the last inequality holds by the second property.
	 	\end{enumerate}
 	
	 	By proving these three properties, we have satisfied all the prerequisites of Bernstein's inequality. A direct application of the inequality yields:
	 	
	 	\begin{align*}
	 	Pr\big[|Y - \mathbb{E}[Y]| \geq \mathbb{E}[Y]/2\big] \leq 2 \exp\left(-\frac{\mathbb{E}[Y]^2/8}{M\cdot \mathbb{E}[Y]/6+\sum_{v\in V_{low}} \mathrm{Var}(y_v)}\right)
	 	\end{align*}
		By the second and third properties, we have that
		\begin{align*}
		&\sum_{v\in V_{low}} \mathrm{Var}(y_v)= \sum_{v\in V_{low}} \mathbb{E}(y_v^2)-\mathbb{E}[y_v]^2\leq\sum_{v\in V_{low}} \mathbb{E}(y_v^2)\leq  \frac{w(V)\cdot \mathrm{E}[Y]}{\Delta}\\
		\Rightarrow& Pr\big[|Y - \mathbb{E}[Y]| \geq \mathbb{E}[Y]/2\big] \leq 2 \exp\left(-\frac{\mathbb{E}[Y]^2/8}{\frac{w(V)\cdot \mathbb{E}[Y]}{6\Delta}+\frac{w(V)\cdot \mathbb{E}[Y]}{\Delta}}\right)\leq 2 \exp\left(-\frac{6\Delta\cdot \mathbb{E}[Y]/8}{7w(V)}\right)
		\end{align*}

	 	Furthermore, by the first property, we have that \begin{align*}
	 	&\mathbb{E}[Y]\geq w(V)\lambda\log n/2\Delta\\
	 	\Rightarrow& 2 \exp\left(-\frac{6\Delta\cdot \mathbb{E}[Y]/8}{7w(V)}\right)\leq 2 \exp\left(-\frac{6 w(V)\lambda\log n}{56w(V)}\right)=2 \exp\left(-\frac{6\lambda\log n}{56})\right)
	 	\end{align*}

	 	 Finally, choosing $\lambda=112/6$ implies that:
	 	
	 	\begin{align*}
	 	Pr\big[|Y - \mathbb{E}[Y]| \geq \mathbb{E}[Y]/2\big]\leq \frac{1}{n^{2\log e}}<\frac{1}{n^{2}}
	 	\end{align*}
	 	
	 	as desired. Furthermore, we can boost the success probability to $1-1/n^k$ for any constant $k>1$, by setting $\lambda=\frac{112k}{3}$.
		
	\end{proof}
\end{claim}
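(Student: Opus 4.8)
The plan is to apply Bernstein's inequality (Fact~\ref{fact:Berns}) to the weighted sum $Y=\sum_{v\in V_{low}} y_v$, where $y_v=x_v\cdot w(v)$ and $x_v$ is the $\{0,1\}$ indicator of the event $v\in V_H$. Since each node is sampled independently with probability $p(v)$, and under the hypothesis $p(v)\le 1$ we have $\mathbb{E}[x_v]=p(v)$ with $p(v)$ equal to the full (unclamped) expression, the $y_v$ are independent nonnegative random variables, $Y=w(V_H\cap V_{low})$, and it suffices to show both a good lower bound on $\mathbb{E}[Y]$ and that $Y\ge\mathbb{E}[Y]/2$ with high probability.

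First I would assemble the three ingredients Bernstein needs. \textbf{(i) Mean lower bound:} since $\delta(v)\le\Delta$ for every $v$, we have $p(v)\ge \lambda\log n/\delta(v)\ge \lambda\log n/\Delta$, hence $\mathbb{E}[Y]=\sum_{v\in V_{low}} p(v)w(v)\ge \frac{\lambda\log n}{\Delta}\,w(V_{low})\ge \frac{\lambda\log n}{2\Delta}\,w(V)$, using the case hypothesis $w(V_{low})\ge w(V)/2$. \textbf{(ii) Uniform bound on the summands:} from the decreasing ordering $w_1\ge\cdots\ge w_n$ we get $j\cdot w_j\le\sum_{i\le j} w_i\le w(V)$, so $w_j\le w(V)/j$; each $v\in V_{low}$ has weight $w_j$ with $j>\Delta$, so $y_v\le w(v)\le w(V)/\Delta=:M$. \textbf{(iii) Variance bound:} pointwise $y_v^2=w(v)\cdot y_v$ because $x_v\in\{0,1\}$, so $\mathbb{E}[y_v^2]=w(v)\,\mathbb{E}[y_v]\le M\,\mathbb{E}[y_v]$, and summing gives $\sum_{v\in V_{low}}\mathrm{Var}(y_v)\le\sum_{v\in V_{low}}\mathbb{E}[y_v^2]\le M\cdot\mathbb{E}[Y]$.

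With these in hand, Bernstein's inequality with deviation parameter $t=\mathbb{E}[Y]/2$ gives
$$\Pr\big[|Y-\mathbb{E}[Y]|\ge \mathbb{E}[Y]/2\big]\le 2\exp\left(-\frac{\mathbb{E}[Y]^2/8}{M\mathbb{E}[Y]/6+M\mathbb{E}[Y]}\right)=2\exp\left(-\frac{3\Delta\,\mathbb{E}[Y]}{28\,w(V)}\right),$$
and substituting $\mathbb{E}[Y]\ge\frac{\lambda\log n}{2\Delta}\,w(V)$ makes the exponent at least $\frac{3\lambda}{56}\log n$, which for a sufficiently large constant $\lambda$ (depending only on the desired failure exponent) is at most $n^{-k}$ for any prescribed constant $k>1$. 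On the complementary (high-probability) event, $Y\ge\mathbb{E}[Y]/2\ge\frac{\lambda\log n}{4\Delta}\,w(V)=\Omega(w(V)\log n/\Delta)$; since $w(V_H\cap V_{low})=Y$, this is exactly the claim.

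I do not expect a genuine obstacle here. The one point that needs care is the choice of the Bernstein parameter $M=w(V)/\Delta$ and verifying that the total variance is small \emph{relative to} the mean — this is precisely why the weight cap $w(v)\le w(V)/\Delta$ on $V_{low}$ is essential, and why a plain multiplicative Chernoff bound (which is sensitive to the worst-case ratio between the heaviest and lightest sampled weights) would not yield a $\Theta(\log n)$ exponent. The small observation $y_v^2=w(v)\,y_v$ is what collapses the variance sum to something linear in $\mathbb{E}[Y]$, so that after dividing by $w(V)$ in the exponent the factor $\mathbb{E}[Y]/w(V)=\Omega(\log n/\Delta)$ times $\Delta$ produces the desired $\Omega(\log n)$.
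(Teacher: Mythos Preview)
Your proposal is correct and follows essentially the same approach as the paper: define $Y=\sum_{v\in V_{low}} w(v)x_v$, establish the mean lower bound via $p(v)\ge\lambda\log n/\Delta$, the uniform cap $w(v)\le w(V)/\Delta$ via the ordering argument, the variance bound via $\mathbb{E}[y_v^2]=w(v)\mathbb{E}[y_v]$, and then apply Bernstein with $t=\mathbb{E}[Y]/2$ and $M=w(V)/\Delta$. The only differences are cosmetic (you make the identity $y_v^2=w(v)y_v$ explicit, whereas the paper bounds by $\max_{v\in V_{low}} w(v)$), and your final exponent $\tfrac{3\lambda}{56}\log n$ is in fact the arithmetically correct one.
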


Having proved claims~\ref{claim:FirstCase} and~\ref{claim:SecondCase}, this finishes the proof of Lemma~\ref{lem:WeightedSizeWeaker}. Lemma~\ref{lem:WeightedSizeWeaker} makes the assumption that $p(v)\leq 1$ for all $v\in V$. We remove this assumption in the proof of the following lemma. 

\begin{lemma}\label{lem:WeightedSize}
 It holds that $w(V_H)=\Omega(\min\{w(V),w(V)\log n/\Delta\})$, with high probability.
 \begin{proof}
 	Let $V^+=\{u\in V\mid p(w)\geq 1\}$. The proof is split into two cases:
 	\begin{enumerate}
 		\item $w(V^+)\geq w(V)/2$: Since all the nodes in $V^+$ join $V_H$ deterministically, this implies that $w(V_H)\geq w(V^+)\geq w(V)/2$.
 		\item $w(V^+)<w(V)/2$: This implies that $w(V\setminus V^+)\geq w(V)/2$. Since each node $w\in V\setminus V^+$ has $p(w)<1$, we can apply Lemma~\ref{lem:WeightedSizeWeaker} directly on the nodes in $V\setminus V^+$ to conclude that $w(V_H)=\Omega(w(V\setminus V^+)\log n/\Delta)=\Omega(w(V)\log n/\Delta)$, with high probability, as desired.
 	\end{enumerate}
 \end{proof}
\end{lemma}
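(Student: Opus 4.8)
The plan is to bootstrap Lemma~\ref{lem:WeightedSizeWeaker}, which already handles the case where every node has $p(v)\le 1$. Write $V^+=\{v\in V\mid p(v)\ge 1\}$ for the set of ``clamped'' nodes; since each of them joins $V_H$ deterministically, we always have $w(V_H)\ge w(V^+)$. I would then split into cases according to how much weight $V^+$ carries.

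\emph{Case 1: $w(V^+)\ge w(V)/2$.} Nothing probabilistic is needed: $w(V_H)\ge w(V^+)\ge w(V)/2\ge\tfrac{1}{2}\min\{w(V),\,w(V)\log n/\Delta\}$, so the bound holds deterministically.

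\emph{Case 2: $w(V^+)<w(V)/2$.} Then $w(V\setminus V^+)\ge w(V)/2$, and every $v\in V\setminus V^+$ has $p(v)<1$, i.e.\ $p(v)=\lambda\log n\,(1/\delta(v)+w(v)/w_{max}(v))$ exactly. The idea is to discard $V^+$ and rerun the proof of Lemma~\ref{lem:WeightedSizeWeaker} — that is, the dichotomy of Claims~\ref{claim:FirstCase} and~\ref{claim:SecondCase} — with $V\setminus V^+$ in place of the ground set. This is legitimate because those two claims use only one-sided facts that survive restriction to a subset of weight $\ge w(V)/2$: the Chernoff argument of Claim~\ref{claim:FirstCase} needs $p(v)\ge\lambda\log n\cdot w(v)/w_{max}(v)\ge\lambda\log n\cdot w(v)/w(V)$, which becomes $\ge(\lambda/2)\log n\cdot w(v)/w(V\setminus V^+)$; the Bernstein argument of Claim~\ref{claim:SecondCase} needs $p(v)\ge\lambda\log n/\delta(v)\ge\lambda\log n/\Delta$ (unaffected, as $\Delta$ is still the maximum degree of $G$) together with the fact that all but the $\Delta$ heaviest nodes of $V\setminus V^+$ have weight at most a $1/\Delta$ fraction of $w(V\setminus V^+)$; and both arguments rely only on the mutual independence of the indicators $x_v$. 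The worst effect of the restriction is a factor $2$ in a sampling-probability lower bound, which is absorbed into $\lambda$. Hence Lemma~\ref{lem:WeightedSizeWeaker} applied to $V\setminus V^+$ gives $w(V_H)\ge w\big(V_H\cap(V\setminus V^+)\big)=\Omega\big(w(V\setminus V^+)\log n/\Delta\big)=\Omega(w(V)\log n/\Delta)$ with high probability.

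Combining the two cases yields $w(V_H)=\Omega(\min\{w(V),w(V)\log n/\Delta\})$ with high probability. The step I expect to need the most care is the reduction in Case 2: one must state explicitly that ``applying Lemma~\ref{lem:WeightedSizeWeaker} to $V\setminus V^+$'' does not recompute the sampling probabilities inside the induced subgraph — they remain the original $p(v)$, defined relative to $G$ — and argue, as above, that this changes nothing beyond hidden constants, precisely because the weaker lemma's proof only ever invokes lower bounds on $p(v)$ and the independence of the $x_v$. Everything else — the Chernoff estimate of Claim~\ref{claim:FirstCase} and the Bernstein estimate of Claim~\ref{claim:SecondCase} — carries over verbatim.
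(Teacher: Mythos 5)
Your proof is correct and follows essentially the same route as the paper: split on whether the clamped set $V^+=\{v\mid p(v)\ge 1\}$ carries at least half the weight, handle the first case deterministically, and in the second case invoke Lemma~\ref{lem:WeightedSizeWeaker} on $V\setminus V^+$. Your extra care in checking that the weaker lemma's Chernoff and Bernstein arguments survive restriction to a subset of weight $\ge w(V)/2$ (with the probabilities still defined relative to $G$) is a valid elaboration of a step the paper states without justification.
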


Now we are ready to finish the proof of Theorem~\ref{thm:WmainO2}.

\begin{proof}[\textbf{Proof of Theorem \ref{thm:WmainO2}}]
	Since both Lemma~\ref{lem:DegreeW} and~\ref{lem:WeightedSize} above hold with high probability, we can apply another standard Union-Bound argument to conclude that both of them hold with high probability (simultaneously). Hence, by running the algorithm from Section~\ref{sec:warmup} on $H_w$, we get an independent set of weight $\Omega(w(V_H)/\Delta_H)=\Omega(\min\{w(V),w(V)\log n/\Delta\}/\Delta_H)=\Omega(w(V)/\Delta)$, in $\MIS(n,\Delta_H)=\MIS(n,\log n)=\poly(\log\log n)$ rounds, with high probability, as desired.
\end{proof}
\subsection{Boosting for $\boldsymbol{(1+\epsilon)\Delta}$-Approximation}\label{sec:amplification}

In this section we prove the following theorem. 
\begin{theorem}\label{thm:amplification}
	Let $\mathcal{A}$ be a $T$-round algorithm that finds an independent set of weight at least $(\frac{1}{c\Delta})$-fraction of the total weight in the graph, in the CONGEST model. There is a $(\frac{2Tc}{\epsilon})$-round algorithm $\mathcal{A}'$ that finds a $(1+\epsilon)\Delta$-approximation for maximum-weight independent set in the CONGEST model.
\end{theorem}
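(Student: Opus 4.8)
The plan is to run the iterative local-ratio algorithm sketched in Section~\ref{sec:amp}, using $\mathcal{A}$ as the subroutine that supplies the independent set in each round, and then to combine the \emph{stack property} (Proposition~\ref{prop:stack}) with the approximation guarantee of $\mathcal{A}$ to argue that the popped-out independent set already captures a $\frac{1}{(1+\epsilon)\Delta}$-fraction of $OPT(G_w)$. Concretely, $\mathcal{A}'$ proceeds in phases $i=1,2,\dots,k$ where $k=\lceil 2c/\epsilon\rceil$. In phase $i$, starting from the current residual weight function $w_i$ (with $w_1=w$), we run $\mathcal{A}$ on $G_{w_i}$ to obtain an independent set $I_i'$ with $w_i(I_i')\geq \frac{w_i(V)}{c\Delta}$; we push all nodes of $I_i'$ onto the stack, and form $w_{i+1}$ by subtracting, from every node $u$, the quantity $\sum_{v\in I_i',\, u\in N^+(v)} w_i(v)$ (clamped at $0$, though since $I_i'$ is independent each $u$ is hit by at most one such $v$). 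After $k$ phases we pop the stack to build $I$ greedily (adding a popped node unless it already has a neighbor in $I$). Each phase costs $T$ rounds plus $O(1)$ rounds for the weight update, so the total is $kT = O(Tc/\epsilon)$ rounds; a minor point to check is that nodes can execute all of this in CONGEST since the weight reductions are local and can be tracked as running sums.

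The correctness argument has two ingredients. First, the standard local-ratio fact (used as in Section~\ref{sec:amp}, applying Theorem~\ref{thm:LR} inductively over the phases): the final $I$ returned by the pop-out procedure is a $\Delta$-approximation with respect to $w_{k+1}$ trivially (since if $w_{k+1}\not\equiv 0$ we may further subtract; more cleanly, we run until $w_{k+1}\equiv 0$ or bound its contribution — see below), and the usual stack/pop analysis shows $I$ contains, for each pushed $v$, a node in $N^+(v)$, which makes $I$ a $\Delta$-approximation with respect to each single-phase reduced weight function $w_1^{(i)}$ defined by phase $i$; since $w=w_{k+1}+\sum_i w_1^{(i)}$, Theorem~\ref{thm:LR} gives that $I$ is a $\Delta$-approximation with respect to $w$ minus the residual $w_{k+1}$. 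Second, and this is where the quantitative improvement comes in: by the stack property, $w(I)\geq \sum_{v\in S} w_{i_v}(v) = \sum_{i=1}^k w_i(I_i') \geq \sum_{i=1}^k \frac{w_i(V)}{c\Delta}$. Now I would track the potential $w_i(V)$: in phase $i$, the total weight removed is $\sum_{v\in I_i'} w_i(v)\cdot|N^+(v)| \leq (\Delta+1)\, w_i(I_i')$, but removed weight is at most $w_i(V)$, and more to the point $w_{i+1}(V) \geq w_i(V) - (\Delta+1)\,w_i(I_i')$. Either some phase has $w_i(I_i')$ large — say $w_i(I_i')\geq \frac{OPT(G_w)}{(1+\epsilon)\Delta}$ — in which case $w(I)\geq w_i(I_i')$ is already good and we are done; or every phase has $w_i(I_i') < \frac{OPT(G_w)}{(1+\epsilon)\Delta}$, but then using $w_i(I_i')\geq \frac{w_i(V)}{c\Delta}$ we get $w_i(V) < \frac{c\,OPT(G_w)}{1+\epsilon}$, and after $k=\lceil 2c/\epsilon\rceil$ phases the accumulated stack weight $\sum_{i=1}^k w_i(I_i') \geq \sum_{i=1}^k \frac{w_i(V)}{c\Delta}$ — here I would instead bound it from below by observing the residual weight function at the end satisfies $w_{k+1}(V)$ is small, so $\sum_i (\text{weight removed in phase } i) = w(V)-w_{k+1}(V)$ is close to $w(V)\geq OPT(G_w)$, and weight removed in phase $i$ is at most $(\Delta+1)w_i(I_i') \leq 2\Delta\, w_i(I_i')$, giving $\sum_i w_i(I_i') \geq \frac{w(V)-w_{k+1}(V)}{2\Delta}$; combining with the local-ratio bound on the residual part then yields $w(I)\geq \frac{OPT(G_w)}{(1+\epsilon)\Delta}$ after choosing $k$ appropriately.

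The step I expect to be the main obstacle is making the two-case dichotomy bookkeeping airtight: reconciling the "stack property gives a lower bound in terms of residual weights removed" with the "local-ratio gives an approximation only modulo the leftover residual weight function $w_{k+1}$", and choosing the number of phases $k=\Theta(c/\epsilon)$ so that the leftover $w_{k+1}(V)$ is provably at most an $\frac{\epsilon}{1+\epsilon}$-ish fraction of $OPT(G_w)$ — which requires that the potential $w_i(V)$ shrinks by a constant factor whenever $w_i(I_i')$ fails to be large, i.e. coupling the "$\mathcal{A}$ returns $\geq \frac{w_i(V)}{c\Delta}$" guarantee with the fact that pushing that set removes $\Theta(\Delta)$ times its weight from the potential. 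Everything else (the CONGEST implementation, the inductive application of Theorem~\ref{thm:LR}, the greedy pop-out containing a neighbor of each pushed node) is routine and follows the template already laid out in Section~\ref{sec:amp}.
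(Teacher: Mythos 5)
Your algorithm and the first half of your analysis match the paper's: the same iterative push--reduce--pop scheme, the stack property (Proposition~\ref{prop:stack}) as the bridge between the stack contents and $w(I)$, and the local-ratio observation that $I$ is a $\Delta$-approximation with respect to the total reduced weight $w'=w-w_{k+1}$ (the paper packages this as Proposition~\ref{prop:IndPhase} plus Proposition~\ref{prop:stack} rather than invoking Theorem~\ref{thm:LR} directly, but the content is the same). The gap is exactly where you flagged it, and it is not a bookkeeping issue that tightens up --- the mechanism you propose is false. You want to argue that after $k=O(c/\epsilon)$ phases the leftover residual $w_{k+1}(V)$ is at most an $\frac{\epsilon}{1+\epsilon}$-fraction of $OPT(G_w)$, via the claim that ``the potential $w_i(V)$ shrinks by a constant factor whenever $w_i(I_i')$ fails to be large,'' i.e.\ that pushing $I_i'$ removes $\Theta(\Delta)$ times its weight from the potential. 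But the weight removed in phase $i$ is only \emph{upper}-bounded by $(\Delta+1)\,w_i(I_i')$; it can be as small as $w_i(I_i')$ itself (e.g.\ when the nodes of $I_i'$ have few or light neighbors). So if $\mathcal{A}$ returns exactly $w_i(V)/(c\Delta)$ each phase, the potential may shrink only by a factor $\bigl(1-\frac{1}{c\Delta}\bigr)$ per phase, and after $O(c/\epsilon)$ phases $w_{k+1}(V)$ can still be essentially all of $w(V)$. No choice of $k=\Theta(c/\epsilon)$ makes the residual provably small.

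The paper's resolution is a different dichotomy, on the \emph{final} residual rather than per phase. Since the reductions are nonnegative, $w_i(V)\geq w_t(V)$ for all $i<t$. If $w_t(V)\leq\frac{\epsilon}{1+\epsilon}OPT(G_w)$, the residual really is small and the local-ratio half of your argument closes the proof (using the sharper per-phase bound $OPT(G_{w_i'})\leq\Delta\,w_i'(I_i)$ of Proposition~\ref{prop:IndPhase}, not the comparison of stack weight to removed weight --- your version of that step, $\sum_i w_i(I_i')\geq\frac{w(V)-w_{k+1}(V)}{2\Delta}$, loses a factor of $2$ and would only yield a $2(1+\epsilon)\Delta$-approximation even if the residual were small). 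If instead $w_t(V)\geq\frac{\epsilon}{1+\epsilon}OPT(G_w)$, then there is no need for the residual to vanish: \emph{every} phase satisfies $w_i(I_i)\geq\frac{w_i(V)}{c\Delta}\geq\frac{w_t(V)}{c\Delta}\geq\frac{\epsilon\,OPT(G_w)}{(1+\epsilon)c\Delta}$, so after $t=c/\epsilon$ phases the stack already holds $\frac{OPT(G_w)}{(1+\epsilon)\Delta}$, and Proposition~\ref{prop:stack} finishes. Your per-phase dichotomy (``some phase has a large independent set'' versus ``all phases are small'') does not combine across phases in a way that recovers either case.
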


\paragraph{Description of algorithm $\mathcal{A}'$:}
Our algorithm consists of two stages. In the first stage we iteratively call algorithm $\mathcal{A}$, which returns an independent set, and perform \emph{weight reductions} based on the independent set. Then, we push all nodes in the independent set onto a stack. The next time we invoke $\mathcal{A}$ in the first stage, it is invoked on the graph with the \emph{reduced} weight. In the second stage we iteratively pop the independent sets from the stack and greedily construct another independent set which we return as the solution.
$\mathcal{A}'$ is formally given in Algorithm~\ref{alg:amplify}. We continue with a formal description of our algorithm.

Let $t=c/\epsilon$. As stated before, there are two stages in $\mathcal{A}'$, each consists of $t$ phases. 
\paragraph{First Stage:} For $i=1$ to $t$ phases, in each phase, we first run algorithm $\mathcal{A}$ to find an independent set $I_i$ in $G_{w_i}$, where $w_1=w$, and $G_{w_1}=G_w=(V,E,w)$ is the original input graph. Then, we insert the nodes in $I_i$ to a stack $S$ that is initially defined to be empty, and continue to $G_{w_{i+1}}=(V,E,w_{i+1})$, where $w_{i+1}$ is defined as follows. For each $v\in V$,
\begin{align*}
&w_{i+1}(v)=\begin{cases}
0 & \mbox{if } v\in I_i\\
w_i(v)-\sum_{u\in N(v)\cap I_i}w_i(u) & \mbox{otherwise}
\end{cases}
\end{align*}

That is, $w_{i+1}$ is a weight function which results from weight reductions to $w_i$, as follows. For each $v\in I_i$, we reduce its total current weight, $w_i(v)$, and therefore its weight becomes zero. For each $v\notin I_i$, we reduce its current weight, $w_{i}(v)$, by the total weight of its neighboring nodes in $I_i$. This concludes the first stage. Before we proceed to the second stage, let us define the following weight functions $w'_i$, for every $i\in [t]$, that are used in the analysis. For each $v\in V$,
\begin{align*}
w'_i(v)=w_i(v)-w_{i+1}(v)
\end{align*}
Hence, $w'_i(v)$ is the \emph{reduced} weight from $v$ at the end of phase $i$. We define another weight function $w'$, which is the \emph{total reduced} weight function. For each $v\in V$,
\begin{align*}
w'(v)=\sum_{i=1}^{t} w'_i(v)
\end{align*}

\paragraph{Second Stage:} We construct an independent set $I$ as follows. For $i=0$ to $t-1$ phases, we pop out $I_{t-i}$ from the stack, and insert each $v\in I_{t-i}$ to $I$ unless $I$ already contains a neighbor of $v$. 

In Lemma~\ref{lem:amplify}, we prove that $I$ is a $(1+\epsilon)\Delta$-approximation for maximum-weight independent set for $G_w$. First, let us start with the following helper propositions. Recall that given a weight function $\hat{w}$ we denote by $G_{\hat{w}}=(V,E,\hat{w})$ the same graph as the input graph $G_w$ but with weight function $\hat{w}$ rather than $w$. For every set of nodes, $V' \subseteq V$, we define $\hat{w}(V')=\sum_{v\in V'} \hat{w}(v)$.

\begin{proposition}\label{prop:IndPhase}
	$I_i$ is a $\Delta$-approximation for maximum-weight independent set in $G_{w'_i}$.
\end{proposition}
\begin{proof}
	First, observe that for every $v\in I_i$, it holds that $w'_i(v)=w_i(v)-w_{i+1}(v)=w_i(v)$, and for every $v\notin I_i$, it holds that $w'_{i}(v)=w'_i(N(v)\cap I_i)$.
	Let $I^*_i$ be an optimal maximum-weight independent set in $G_{w'_i}$. We can assume that $I^*_i$ contains only nodes in $I_i\cup N(I_i)$, as all the other nodes in $G_{w'_i}$ have zero weight\footnote{Recall that $N(I_i)$ denotes the set of neighbors of all nodes in $I_i$.}. We have that,
	\begin{align*}
	&w'_i(I^*_i)= w'_i(I^*_i\cap I_i)+ w'_i(I^*_i\setminus I_i) = w'_i(I^*_i\cap I_i)+\sum_{v\in I^*_i\setminus I_i} w'_i(N(v)\cap I_i)\\
	&= w'_i(I^*_i\cap I_i) + \Delta w'_i(I_i\setminus I^*_i) \leq \Delta w'_i(I_i)
	\end{align*}
	as desired.
\end{proof}
In the following proposition we draw a connection between $w(I)$, the final value of our solution, and the total value stored in our stack. Formally, we show the following.
\begin{proposition}\label{prop:stack}
	It holds that $w(I)\geq\sum_{i=1}^{t} w'_i(I_i)=\sum_{i=1}^{t} w_i(I_i)$.
	\begin{proof}
		We assume without loss of generality that $\mathcal{A}$ never picks nodes of non-positive weight to the independent set, as we can always remove them and increase the size of the solution. 
		Observe that for every $v\in I$, it holds that $v\in I_i$ for some $i\in [t]$. Let $i_v$ be the phase for which $v\in I_{i_v}$. It holds that,
		\begin{align*}
		w(v)=w_{i_v}(v)+\sum_{i=1}^{i_v-1} w_i(N(v)\cap I_i) = w_{i_v}(v) + \sum_{ u\in(N(v)\cap(\bigcup_{i=1}^{i_v-1}I_i))} w_{i_u}(u)
		\end{align*}
		The first equality is because the weight of $v$ at phase $i_v$ was positive, as otherwise it wouldn't be in $I_{i_v}$, and, until phase $i_v$, the total amount of weight that was reduced from $v$ is $\sum_{i=1}^{i_v-1} w_i(N(v)\cap I_i)$. And the second equality is because for every $u$ which contributes to the sum $\sum_{i=1}^{i_v-1} w_i(N(v)\cap I_i)$, its contribution is exactly $w_{i_u}(u)$.
		Hence, we have that,
		\begin{align*}
		w(I)&=\sum_{v\in I}\left(w_{i_v}(v)+\sum_{i=1}^{i_v-1}  w_i(N(v)\cap I_i)\right) =\left(\sum_{v\in I} w_{i_v}(v)\right)+\left(\sum_{v\in I}\sum_{ u\in(N(v)\cap(\bigcup_{i=1}^{i_v-1} I_i))} w_{i_u}(u)\right)\\
		&\geq\left(\sum_{v\in I} w_{i_v}(v)\right)+\left(\sum_{ u\in(\bigcup_{i=1}^{t} I_i)\setminus I} w_{i_u}(u)\right)=\sum_{i=1}^{t} w_i(I_i)
		\end{align*}
		where the last inequality holds because for every $u\in (\cup_{i=1}^t I_i)\setminus I$, there is at least one neighbor $v$ of $u$ in $I$, with $i_v>i_u$. 
		Finally, since for every $i\in [t]$ and for every $v\in I_i$, $w'_i(v)=w_i(v)$, the claim follows.
	\end{proof}
\end{proposition}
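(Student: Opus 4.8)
The plan is to track, for each node $v$ that ends up in the final independent set $I$, how its original weight $w(v)$ decomposes across the $t$ phases, and then to charge the residual weight of every \emph{discarded} stack node against one of its neighbors in $I$.

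First I would record the elementary bookkeeping facts about the weight reductions. Since $\mathcal A$ only ever selects nodes of positive weight, and since a node $v$ has $w_{i+1}(v)=0$ as soon as $v$ is placed in some $I_i$ (after which its weight stays $0$), each node lies in at most one of the sets $I_1,\dots,I_t$; write $i_u$ for the unique phase with $u\in I_{i_u}$ whenever it exists, so the sets $\{I_i\}$ are pairwise disjoint and $\bigcup_i I_i$ is partitioned by the $i_u$'s. Moreover, for $v\in I_i$ we have $w'_i(v)=w_i(v)-w_{i+1}(v)=w_i(v)$, which already yields the second equality $\sum_i w'_i(I_i)=\sum_i w_i(I_i)$ in the proposition; hence it remains to prove $w(I)\ge\sum_{i=1}^t w_i(I_i)$.

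Next, for a fixed $v\in I$ I would expand $w(v)$ by summing the per-phase reductions applied to $v$ before it was pushed onto the stack in phase $i_v$: the only weight removed from $v$ before phase $i_v$ comes from neighbors of $v$ selected in earlier phases, so $w(v)=w_{i_v}(v)+\sum_{i=1}^{i_v-1} w_i(N(v)\cap I_i)$, where $w_{i_v}(v)$ is exactly the residual weight of $v$ at push time (positive, since $\mathcal A$ picked it). Re-indexing the inner sum by the contributing neighbor $u$ and using that such a $u\in I_i$ contributes precisely $w_i(u)=w_{i_u}(u)$, this becomes $w(v)=w_{i_v}(v)+\sum_{u\in N(v)\cap(\bigcup_{i<i_v} I_i)} w_{i_u}(u)$. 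Summing over $v\in I$ gives $w(I)=\sum_{v\in I} w_{i_v}(v)+\sum_{v\in I}\sum_{u\in N(v)\cap(\bigcup_{i<i_v} I_i)} w_{i_u}(u)$.

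Finally I would argue that the double sum is at least $\sum_{u\in(\bigcup_i I_i)\setminus I} w_{i_u}(u)$. Consider any $u$ that was pushed onto the stack but not added to $I$ during the pop stage. The pop stage processes the sets in the order $I_t, I_{t-1},\dots, I_1$, and $u$ was rejected only because $I$ already contained some neighbor $v$ of $u$; that $v$ was popped earlier, hence $i_v>i_u$, so $u$ appears in the double sum via this $v$ with the nonnegative (indeed positive) coefficient $w_{i_u}(u)$. Therefore $w(I)\ge\sum_{v\in I} w_{i_v}(v)+\sum_{u\in(\bigcup_i I_i)\setminus I} w_{i_u}(u)=\sum_{u\in\bigcup_i I_i} w_{i_u}(u)=\sum_{i=1}^t w_i(I_i)$, where the last step uses the disjointness of the $I_i$. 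The subtle points — and where I would be most careful — are the exact weight-tracking identity in the middle step (no reduction double-counted, and $v$'s weight never drops to $0$ before phase $i_v$) and the reverse-order pop argument pinning down $i_v>i_u$; everything else is nonnegativity and regrouping.
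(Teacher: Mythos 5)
Your proposal is correct and follows essentially the same route as the paper's proof: the same expansion of $w(v)$ into its residual weight at push time plus the earlier reductions from neighbors, the same re-indexing by contributing neighbors, and the same charging of each discarded stack node to a neighbor popped earlier (hence with larger phase index). The only difference is presentational — you make explicit the disjointness of the $I_i$'s, which the paper leaves implicit.
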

Now we are ready to show that $I$ is a $(1+\epsilon)\Delta$-approximation for maximum-weight independent set in the original input graph $G_{w}$.
\begin{lemma}\label{lem:amplify}
	$I$ is a $(1+\epsilon)\Delta$-approximation for maximum-weight independent set in $G_{w}$.
	\begin{proof}
		Let $OPT(G_w)$ be the value of an optimal solution in $G_w$. The proof is by the following case analysis.
		\begin{enumerate}
			\item $w_t(V)\leq \frac{\epsilon}{1+\epsilon}OPT(G_w)$:\\
			Recall that for all $v\in V$, $w'(v)=\sum_{i=1}^{t}w'_i(v)$. First, observe that,
			\begin{align*}
			w'(V)&\geq \sum_{v\in V}\sum_{i=1}^{t-1}w'_i(v)=\sum_{v\in V}\sum_{i=1}^{t-1}w_i(v)-w_{i+1}(v)\\
			&=\sum_{v\in V} w_1(v)-w_2(v)+w_2(v)-w_3(v)+\cdots -w_{t-1}(v)+w_{t-1}(v)-w_t(v)\\
			&= w(V)- w_t(V) \geq w(V)-\frac{\epsilon}{1+\epsilon}OPT(G_w)
			\end{align*}
			Therefore, the value of an optimal solution in $G_{w'}$ cannot be very small compared to the value of an optimal solution in $G_{w}$. Namely, $OPT(G_{w'})\geq (1-\frac{\epsilon}{1+\epsilon})OPT(G_{w}) = OPT(G_{w}) / (1+\epsilon)$.
			Finally, by Propositions~\ref{prop:IndPhase}, $I_i$ is a $\Delta$-approximation to $OPT(G_{w'_i})$, and by Proposition~\ref{prop:stack}, we have that,
			
			\begin{align*}
			&w(I)\geq \sum_{i=1}^t w'_i(I_i)\geq \sum_{i=1}^t \frac{OPT(G_{w'_i})}{\Delta}\geq \frac{OPT(G_{w'})}{\Delta} \geq \frac{OPT(G_{w}) }{(1+\epsilon)\Delta}
			\end{align*} 
			
			as desired.
			\item $ w_t(V)\geq \frac{\epsilon}{1+\epsilon}OPT(G_w)$:\\
			Observe that for any $i<t$, it holds that $w_i(V)\geq w_t(V)$.
			Therefore, since $\mathcal{A}$ returns an independent set of weight at least $(\frac{1}{c\Delta})$-fraction of the total weight in the graph, for each phase $i\in [t]$, it holds that $w_i(I_i)\geq \frac{\epsilon}{(1+\epsilon)c\Delta}OPT(G_w)$, which implies:
			\begin{align*}
			\sum_{i=1}^{t} w_i(I_i)\geq t\frac{\epsilon}{(1+\epsilon)c\Delta}OPT(G_w)=\frac{1}{(1+\epsilon)\Delta}OPT(G_w)
			\end{align*}
			Finally, by Proposition~\ref{prop:stack}, we have that $w(I)\geq \sum_{i=1}^t  w_i(I_i)\geq \frac{1}{(1+\epsilon)\Delta}OPT(G_w)$. Which completes the proof.
			
		\end{enumerate}    
	\end{proof}
\end{lemma}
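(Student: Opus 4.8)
The plan is to establish the numerical bound $w(I)\ge \frac{OPT(G_w)}{(1+\epsilon)\Delta}$, which by definition of $r$-approximation makes $I$ a $(1+\epsilon)\Delta$-approximation for $\MaxIS$ in $G_w$. The whole argument rests on the two helper statements already proved: Proposition~\ref{prop:stack} (the stack property, $w(I)\ge \sum_{i=1}^{t} w'_i(I_i)=\sum_{i=1}^{t} w_i(I_i)$) and Proposition~\ref{prop:IndPhase} (each $I_i$ is a $\Delta$-approximation for $\MaxIS$ in $G_{w'_i}$, i.e.\ $w'_i(I_i)\ge OPT(G_{w'_i})/\Delta$). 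I would split into two cases according to how much weight survives to the start of the last phase, using the threshold $\frac{\epsilon}{1+\epsilon}OPT(G_w)$ for $w_t(V)$.

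In the first case, $w_t(V)\le \frac{\epsilon}{1+\epsilon}OPT(G_w)$, the idea is that almost all of the original weight has been ``consumed'' into reductions, so the total reduced weight function $w'$ still admits a large optimum. First I would telescope: since each aggregate reduction $w'_i(V)$ is non-negative (in every phase $\mathcal{A}$ picks only positive-weight nodes, so every node loses a non-negative amount), $w'(V)\ge \sum_{i=1}^{t-1}(w_i(V)-w_{i+1}(V)) = w(V)-w_t(V)\ge w(V)-\frac{\epsilon}{1+\epsilon}OPT(G_w)$. Evaluating an optimal independent set of $G_w$ under the weight function $w'$ then loses at most $w(V)-w'(V)$, which gives $OPT(G_{w'})\ge OPT(G_w)-\frac{\epsilon}{1+\epsilon}OPT(G_w)=\frac{OPT(G_w)}{1+\epsilon}$. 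Next, since $w'=\sum_{i=1}^{t} w'_i$ and any independent set of $G$ is simultaneously a feasible solution for every $G_{w'_i}$, we have $OPT(G_{w'})\le \sum_{i=1}^{t} OPT(G_{w'_i})$ (the sub-additivity fact underlying the local-ratio Theorem~\ref{thm:LR}). Chaining the stack property, Proposition~\ref{prop:IndPhase}, and these two inequalities,
\begin{align*}
w(I)\ \ge\ \sum_{i=1}^{t} w'_i(I_i)\ \ge\ \frac{1}{\Delta}\sum_{i=1}^{t} OPT(G_{w'_i})\ \ge\ \frac{OPT(G_{w'})}{\Delta}\ \ge\ \frac{OPT(G_w)}{(1+\epsilon)\Delta}.
\end{align*}

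In the second case, $w_t(V)\ge \frac{\epsilon}{1+\epsilon}OPT(G_w)$, the idea is that plenty of weight is present in every phase. Since each aggregate reduction is non-negative we get $w_1(V)\ge w_2(V)\ge\cdots\ge w_t(V)$, so every phase $i\in[t]$ starts with $w_i(V)\ge w_t(V)\ge \frac{\epsilon}{1+\epsilon}OPT(G_w)$. By the guarantee of $\mathcal{A}$, $w_i(I_i)\ge \frac{1}{c\Delta}w_i(V)\ge \frac{\epsilon}{(1+\epsilon)c\Delta}OPT(G_w)$; summing over the $t=c/\epsilon$ phases and applying the stack property,
\begin{align*}
w(I)\ \ge\ \sum_{i=1}^{t} w_i(I_i)\ \ge\ \frac{c}{\epsilon}\cdot\frac{\epsilon}{(1+\epsilon)c\Delta}OPT(G_w)\ =\ \frac{OPT(G_w)}{(1+\epsilon)\Delta}.
\end{align*}

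The step I expect to need the most care is the inequality $OPT(G_{w'})\ge \frac{OPT(G_w)}{1+\epsilon}$ in the first case: one must be sure that evaluating a fixed independent set under $w'$ rather than $w$ costs at most $w(V)-w'(V)$, which amounts to knowing that the residual weight $w(v)-w'(v)$ is non-negative at every node. Since the reductions defining $w_{i+1}$ could in principle drive a node's weight negative, I would either appeal to the standard local-ratio convention (the same one implicit in assuming $\mathcal{A}$ never selects non-positive-weight nodes) that keeps all weights non-negative, or argue directly that the weight removed from any single independent set over all $t$ phases is at most $w_t(V)$. Everything else is telescoping, sub-additivity of $OPT$ over a sum of weight functions, the per-phase guarantee of $\mathcal{A}$, and the arithmetic $t=c/\epsilon$.
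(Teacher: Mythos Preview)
Your proposal is correct and follows essentially the same argument as the paper: the same threshold $w_t(V)\lessgtr \frac{\epsilon}{1+\epsilon}OPT(G_w)$, the same telescoping to bound $w'(V)$ in Case~1, the same chain $w(I)\ge\sum w'_i(I_i)\ge\frac{1}{\Delta}\sum OPT(G_{w'_i})\ge\frac{1}{\Delta}OPT(G_{w'})$ via Propositions~\ref{prop:stack} and~\ref{prop:IndPhase}, and the same per-phase use of $\mathcal{A}$'s guarantee together with $t=c/\epsilon$ in Case~2. Your explicit remark that the step $OPT(G_{w'})\ge OPT(G_w)/(1+\epsilon)$ needs the residual $w-w'$ to be controlled on an optimal set (and hence some non-negativity convention) is a point the paper passes over silently; otherwise the proofs are the same.
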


\paragraph{Remark:}
One can show that the same algorithm also returns an independent set of weight at least $\frac{w(V)}{(1+\epsilon)(\Delta+1)}$, which is sometimes better than $(1+\epsilon)\Delta$-approximation, and sometimes worse. The proof of this argument similar to the proof above. The only difference is that the case analysis in Lemma~\ref{lem:amplify} is with respect to $\frac{\epsilon}{1+\epsilon}w(V)$. That is, the first case is in which $w_t(V)\leq \frac{\epsilon}{1+\epsilon}w(V)$, and the second case is in which $w_t(V)\geq \frac{\epsilon}{1+\epsilon}w(V)$. It is straightforward to show that in both cases $I$ is of weight at least $\frac{w(V)}{(1+\epsilon)(\Delta+1)}$, by using the stack property (Proposition~\ref{prop:stack}).

\begin{corollary}\label{cor:boost}
		Let $\mathcal{A}$ be a $T$-round algorithm that finds an independent set of weight at least $(\frac{1}{c\Delta})$-fraction of the total weight in the graph, in the CONGEST model. There is a $(\frac{2Tc}{\epsilon})$-round algorithm $\mathcal{A}'$ that finds an independent set $I$ of weight at least $\frac{w(V)}{(1+\epsilon)(\Delta+1)}$ in the CONGEST model.
\end{corollary}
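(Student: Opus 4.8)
The plan is to run the \emph{same} algorithm $\mathcal{A}'$ as in Theorem~\ref{thm:amplification} (Algorithm~\ref{alg:amplify}), with $t=c/\epsilon$ phases, so that the round complexity is again $2Tc/\epsilon$. I would reuse the stack property (Proposition~\ref{prop:stack}) verbatim, namely $w(I)\ge\sum_{i=1}^{t}w'_i(I_i)=\sum_{i=1}^{t}w_i(I_i)$, and replace only the final case analysis of Lemma~\ref{lem:amplify}, now carried out against $\frac{\epsilon}{1+\epsilon}w(V)$ in place of $\frac{\epsilon}{1+\epsilon}OPT(G_{w})$.

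The one extra ingredient I would isolate is a sharpening of Proposition~\ref{prop:IndPhase}: for every phase $i$ the weight function $w'_i$ is non-negative and satisfies $w'_i(V)\le(\Delta+1)\,w'_i(I_i)$. Indeed, from the explicit formula for $w'_i$ we have $w'_i(v)=w_i(v)\ge 0$ for $v\in I_i$ (as $\mathcal{A}$ never picks a node of non-positive weight) and $w'_i(v)=\sum_{u\in N(v)\cap I_i}w_i(u)\ge 0$ for $v\notin I_i$, with $w'_i$ supported on $I_i\cup N(I_i)$. Summing over $V$ and reindexing the double sum by $u\in I_i$ gives $w'_i(V)=\sum_{u\in I_i}w_i(u)\bigl(1+deg(u)\bigr)\le(\Delta+1)\sum_{u\in I_i}w_i(u)=(\Delta+1)\,w'_i(I_i)$. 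Combining this with the stack property and $w'=\sum_i w'_i$ already yields $w(I)\ge\sum_i w'_i(I_i)\ge\frac{1}{\Delta+1}\sum_i w'_i(V)=\frac{w'(V)}{\Delta+1}$.

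I would then split on $w_t(V)$. If $w_t(V)\le\frac{\epsilon}{1+\epsilon}w(V)$, then since each $w'_i$ is non-negative the telescoping sum gives $w'(V)\ge\sum_{i=1}^{t-1}\bigl(w_i(V)-w_{i+1}(V)\bigr)=w(V)-w_t(V)\ge\frac{w(V)}{1+\epsilon}$, and hence $w(I)\ge\frac{w'(V)}{\Delta+1}\ge\frac{w(V)}{(1+\epsilon)(\Delta+1)}$. If instead $w_t(V)\ge\frac{\epsilon}{1+\epsilon}w(V)$, then because $w_{i+1}(V)=w_i(V)-w'_i(V)\le w_i(V)$ the total weight is non-increasing in $i$, so $w_i(V)\ge w_t(V)\ge\frac{\epsilon}{1+\epsilon}w(V)$ for every $i\le t$; by the guarantee of $\mathcal{A}$ we get $w_i(I_i)\ge\frac{w_i(V)}{c\Delta}\ge\frac{\epsilon}{(1+\epsilon)c\Delta}w(V)$, and summing over the $t=c/\epsilon$ phases and applying the stack property gives $w(I)\ge\sum_i w_i(I_i)\ge\frac{w(V)}{(1+\epsilon)\Delta}\ge\frac{w(V)}{(1+\epsilon)(\Delta+1)}$, as desired.

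The main obstacle is the sharpening in the second paragraph: the naive route --- bounding $w'_i(I_i)\ge OPT(G_{w'_i})/\Delta$ via Proposition~\ref{prop:IndPhase} and then $OPT(G_{w'_i})\ge w'_i(V)/(\Delta+1)$ --- loses a spurious factor of $\Delta$ and only gives $w(I)\ge\frac{w(V)}{(1+\epsilon)\Delta(\Delta+1)}$. One must instead exploit the specific structure of $w'_i$ (supported on $I_i\cup N(I_i)$, with the reductions driven by the independent set $I_i$ itself) to obtain the tight per-phase inequality $w'_i(V)\le(\Delta+1)\,w'_i(I_i)$; once that is in place, everything else is the bookkeeping above.
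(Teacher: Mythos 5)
Your proposal is correct and follows essentially the same route the paper sketches in the remark preceding the corollary: rerun Algorithm~\ref{alg:amplify}, apply the stack property (Proposition~\ref{prop:stack}), and split on whether $w_t(V)\le\frac{\epsilon}{1+\epsilon}w(V)$. The per-phase inequality $w'_i(V)\le(\Delta+1)\,w'_i(I_i)$ that you isolate is exactly the step the paper leaves as ``straightforward,'' and your derivation of it (reindexing the reductions by the nodes of $I_i$ and using that $I_i$ is independent) is right.
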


\begin{algorithm}[]\label{alg:amplify}
	\SetAlgoLined
	\DontPrintSemicolon
	\KwData{a graph $G=(V,E,w)$}
	\KwResult{an independent set $I$}
	
	$I\gets \emptyset$\\
	$S\gets \emptyset$\tcp*{$S$ is the Stack}
	$w_1=w$\\
	
	\For{$i=1$ to $t=c/\epsilon$ phases}{
		
		Run $\mathcal{A}$ on $G_{w_i}$\\
		Let $I_i\gets \mathcal{A}(G_{w_i})$\\
		Insert all the nodes in $I_i$ into $S$\\
		$\forall v\in V: w_{i+1}(v)\gets w_{i}(v)-\sum_{u\in N^+(v)\cap I_i} w_i(u)$\\
		
	}
	\For{$i=0$ to $t-1$ phases}{
		
		Pop $I_{t-i}$ from $S$\\
		
		\For{$v\in I_{t-i}$}{  \If{$N(v)\cap I=\emptyset$}
			{add $v$ to $I$}
		}
	}
	\Return $I$\\
	\caption{$(1+\epsilon)\Delta$-approx}
\end{algorithm}

\section{Faster Algorithm for Low-Degree Graphs}\label{sec:low-deg}

In this section we show that there is an $O(1)$-round algorithm that finds an independent set of size  $\Omega(n/\Delta)$ for graphs in which $\Delta\le n/\log n$. Using the boosting theorem that is presented in the previous section (in particular, Corollary~\ref{cor:boost}), this implies that there is an $O(1/\epsilon)$-round algorithm that finds an independent set of size at least $\frac{n}{(1+\epsilon)(\Delta+1)}$ in unweighted graphs of maximum degree $\Delta\leq n/\log n$, proving Theorem~\ref{thm:LD}.

The algorithm is based on a new analysis of one round of the classical ranking algorithm for independent set, which is (to the best of our knowledge) due to Boppanna. This classical algorithm finds an independent set in a graph by independently selecting a rank for each vertex and including a vertex in the output independent set if its rank is greater than the ranks of its neighbors (Algorithm~\ref{alg:bop}).

\begin{algorithm}[H]\label{alg:bop}
	\SetAlgoLined
	\DontPrintSemicolon
	\KwData{an unweighted graph $G=(V,E)$}
	\KwResult{an independent set $I$}
	
	$I\gets \emptyset$\;
	
	\For{each vertex $u\in V$}{
		
		$r_u\gets $ uniformly random number in $\{1,2,\hdots,100n^{c+2}\}$\;
		
	}
	
	\For{each vertex $u\in V$}{
		
		Add $u$ to $I$ if $r_u > r_v$ for all neighbors $v$ of $u$ in $G$\;
		
	}
	
	\Return $I$\;
	
	\caption{$\ORL(G)$}
\end{algorithm}

In discussion, notice that $\ORL$ can be implemented in $O(c)$ rounds in the CONGEST model. We analyze this algorithm by considering a sequential view. The independent set $I$ returned by the algorithm only depends on the order of the $r_v$s. We could run $\ORL$ instead by picking a uniformly random permutation of the vertices and include a vertex $v$ in $I$ if no neighbor of $v$ has a higher rank in the permutation. Furthermore, we can sample the permutation by repeatedly selecting uniformly random vertices without replacement. Equivalently, sample a permutation by repeatedly selecting vertices with replacement, but reject samples seen before (Algorithm~\ref{alg:bop2}).

\begin{algorithm}[]\label{alg:bop2}
	\SetAlgoLined
	\DontPrintSemicolon
	\KwData{an unweighted graph $G=(V,E)$}
	\KwResult{an independent set $I$}
	
	$I\gets \emptyset$\;
	
	$U\gets V$\;
	
	\While{$U\ne\emptyset$}{
		
		$u\gets $ uniformly random element of the set $V$\;
		
		$U\gets U\setminus \{u\}$\;
		
		\If{all neighbors $v$ of $u$ are in $U$}{
			
			Add $u$ to $I$\;
			
		}
		
	}
	
	\Return $S$\;
	
	\caption{$\SORL(G)$}
\end{algorithm}

To analyze $\ORL(G)$, it suffices to analyze $\SORL(G)$:

\begin{proposition}\label{prop:indep-equiv}
	For any unweighted graph $G$ and constant $c > 0$, $\SORL(G)$ produces a distribution over sets $I$ with total variation distance at most $1/n^c$ from the distribution produced by $\ORL(G)$; in particular
	
	$$\sum_{\text{sets } I_0} |\Pr_{I\sim \SORL(G)}[I = I_0] - \Pr_{I\sim \ORL(G)}[I = I_0]|\le 1/n^c$$
\end{proposition}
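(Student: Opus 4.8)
The plan is to show that $\ORL(G)$ and $\SORL(G)$ are, up to a low-probability failure event, the same randomized procedure: ``draw a uniformly random linear order of $V$ and keep every vertex that precedes all of its neighbors.'' Concretely, for a linear order $\pi$ of $V$ write $I_\pi=\{u\in V: u \text{ precedes every } v\in N(u) \text{ under }\pi\}$, and let $\mathcal D$ be the distribution of $I_\pi$ when $\pi$ is uniform among the $n!$ orders of $V$. I will argue that $\SORL(G)$ has output distribution \emph{exactly} $\mathcal D$; that $\ORL(G)$ has output distribution $\mathcal D$ \emph{conditioned on the event that all ranks are distinct}; and that this event fails with probability below $1/(200n^c)$. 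A short conditioning computation then gives the claimed bound.

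For $\SORL(G)$: the sequence of vertices drawn in the \texttt{while} loop is i.i.d.\ uniform on $V$, every vertex is drawn eventually with probability $1$, so the loop terminates almost surely and the ``first appearance'' times of the vertices define a linear order $\pi$ of $V$; this $\pi$ is uniformly distributed, since the distribution of the draw sequence is invariant under relabeling $V$, hence all first-appearance orders are equally likely. I will check that the final set $I$ equals $I_\pi$: when $u$ is drawn for the first time it is removed from $U$ and added to $I$ iff no neighbor of $u$ has been drawn yet, i.e.\ iff $u$ precedes all of $N(u)$ under $\pi$; and a \emph{later} draw of the same $u$ can never add it, because by then $u\notin U$ and any neighbor that was drawn before $u$'s first appearance is still outside $U$, so the membership test still fails. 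Hence $\SORL(G)\sim\mathcal D$ exactly.

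For $\ORL(G)$: let $B$ be the event that two vertices receive equal ranks. Each of the $\binom n2$ pairs collides with probability $1/(100n^{c+2})$, so by a union bound $\Pr[B]<(n^2/2)/(100n^{c+2})=1/(200n^c)$. On $B^c$ all ranks are distinct, so ordering the vertices by \emph{decreasing} rank gives a linear order $\sigma$, and $u$ is added to $I$ iff $r_u>r_v$ for all $v\in N(u)$, i.e.\ iff $r_u$ is the maximum rank in $N^+(u)$, i.e.\ iff $u$ precedes all of $N(u)$ under $\sigma$; thus the output equals $I_\sigma$ on $B^c$. Since i.i.d.\ uniform ranks conditioned on being pairwise distinct induce a uniformly random order (by exchangeability), $\sigma$ is uniform conditioned on $B^c$, so the law of $\ORL(G)$ conditioned on $B^c$ is $\mathcal D$.

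Finally I combine these. Fix a set $I_0$; decomposing on $B$ gives $\Pr_{\ORL(G)}[I=I_0]=\Pr[B^c]\Pr_{\mathcal D}[I=I_0]+\Pr[B]\Pr_{\ORL(G)}[I=I_0\mid B]$, while $\Pr_{\SORL(G)}[I=I_0]=\Pr_{\mathcal D}[I=I_0]$, so $\bigl|\Pr_{\ORL(G)}[I=I_0]-\Pr_{\SORL(G)}[I=I_0]\bigr|=\Pr[B]\,\bigl|\Pr_{\ORL(G)}[I=I_0\mid B]-\Pr_{\mathcal D}[I=I_0]\bigr|$. Summing over all $I_0$ and using the triangle inequality, noting each of the two summed families of probabilities has total mass at most $1$, bounds the left-hand side of the proposition by $2\Pr[B]<1/(100n^c)\le 1/n^c$. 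The only genuinely delicate step is the $\SORL$ analysis: one must verify that re-drawing an already-removed vertex cannot retroactively change $I$, so that the output is truly the deterministic function $\pi\mapsto I_\pi$ of the (uniform) first-appearance order; everything else is bookkeeping around the collision event $B$.
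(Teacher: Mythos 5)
Your proposal is correct and follows essentially the same route as the paper's proof: bound the rank-collision event by a union bound, observe that conditioned on distinct ranks $\ORL$ is exactly the ``uniform permutation, keep vertices preceding all neighbors'' procedure, identify $\SORL$ with that same procedure via the first-appearance order, and convert the collision probability into a total-variation bound of $2\Pr[B]\le 1/n^c$. The paper routes the middle step through explicit intermediate algorithms ($\ORL_1$, $\SORL_0$) and a bijection between rank tuples and vertex orderings, whereas you work directly with the distribution over sets $I_\pi$, but the content is the same, and you correctly handle the one delicate point (repeated draws in $\SORL$ cannot change the output).
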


\begin{proof}
	Let $n = |V|$ and $\mathcal D_0$ be the uniform distribution over $\{1,2,\hdots,100n^{c+2}\}^n$. This is the distribution over rank tuples $(r_u)_{u\in V}$ used by algorithm $\ORL$. Let $\mathcal D_1$ be the uniform distribution over tuples in $\{1,2,\hdots,n\}^n$ with distinct coordinates. Let $\ORL_1$ denote the algorithm with the tuple $(r_u)_{u\in V}$ sampled from $\mathcal D_1$ instead of $\mathcal D_0$:
	
	\begin{algorithm}[H]
		\SetAlgoLined
		\DontPrintSemicolon
		\KwData{an unweighted  graph $G$}
		\KwResult{an independent set $I$}
		
		$I\gets \emptyset$\;
		
		$(r_v)_{v\in V} \gets $ sample from $\mathcal D_1$ \;
		
		\For{each vertex $v\in V$}{
			
			Add $v$ to $I$ if $r_v > r_u$ for all neighbors $u$ of $v$ in $G$\;
			
		}
		
		\Return $I$\;
		
		\caption{$\ORL_1(G)$}
	\end{algorithm}
	
	By a union bound over all pairs of vertices, $r_u\ne r_v$ for all $u,v\in V(G)$ with probability at least $1 - \binom{n}{2}\frac{1}{100n^{c+2}} > 1 - 1/(2n^c)$. Let $E$ denote the event $\{r_u\ne r_v \forall u,v\in V(G)\}$ and let $\overline{E}$ denote the negation. Conditioned on $r_u\ne r_v$ for all $u,v\in V(G)$, the output of $\ORL$ is identically distributed to the output of $\ORL_1$. Therefore,
	
	\begin{align*}
	&\sum_{\text{sets } I_0} |\Pr_{I\sim \ORL(G)}[I = I_0] - \Pr_{I\sim \ORL_1(G)}[I = I_0]|\\
	&=\sum_{\text{sets } I_0} |\Pr_{I\sim \ORL(G)}[I = I_0] - \Pr_{I\sim \ORL(G)}[I = I_0 | E]|\\
	&=\sum_{\text{sets } I_0} |\Pr[I = I_0 | E]\Pr[E] + \Pr[I = I_0 \& \overline{E}] - \Pr[I = I_0 | E]|\\
	&\le \sum_{\text{sets } I_0} (\Pr[I = I_0 | E]\Pr[\overline{E}] + \Pr[I = I_0 \& \overline{E}])\\
	&= 2\Pr[\overline{E}]\\
	&\le 1/n^c
	\end{align*}
	so the total variation distance between the output distributions of $\ORL$ and $\ORL_1$ is at most $1/n^c$. Next, we show that $\ORL_1(G)$ produces the same distribution over sets as the following algorithm, $\SORL_0(G)$:
	
	\begin{algorithm}[H]
		\SetAlgoLined
		\DontPrintSemicolon
		\KwData{an unweighted graph $G$}
		\KwResult{an independent set $I$}
		
		$I\gets \emptyset$\;
		
		\For{$r=n,n-1,\hdots,1$}{
			
			$u_r\gets $ uniformly random element of the set $V\setminus \{u_n,u_{n-1},\hdots,u_{r+1}\}$\;
			
			\If{$u_r$ does not have a neighbor $v$ for which $v = u_s$ for some $s > r$}{
				
				Add $u_r$ to $I$\;
				
			}
			
		}
		
		\Return $I$\;
		
		\caption{$\SORL_0(G)$}
	\end{algorithm}
	Since the $u_r$s are selected without replacement from $V$, the distribution over tuples $(u_n,u_{n-1},\hdots,u_1)$ is a uniform distribution over permutations of $V$. Let $\mathcal R\subseteq \{1,2,\hdots,n\}^n$ and $\mathcal U\subseteq V^n$ denote the families of $\{1,2,\hdots,n\}$ and $V$-tuples with distinct coordinates respectively. Fix an ordering $v_1,v_2,\hdots,v_n$ of vertices in $G$ and let $\tau: \mathcal R\rightarrow \mathcal U$ be the map
	
	$$\tau(r_1,r_2,\hdots,r_{n-1},r_n) = (v_{r_1},v_{r_2},\hdots,v_{r_{n-1}},v_{r_n})$$
	$\tau$ is a bijection. Furthermore, for any tuple $\textbf{r} \in \mathcal R$, $\ORL_1(G)$ with $v_i$-rank $\textbf{r}_i$ outputs the same set $S$ as $\SORL_0(G)$ with vertex ordering $\textbf{u} = \tau(\textbf{r})$. Therefore, $\ORL_1(G)$ outputs the same distribution over sets as $\SORL_0(G)$.
	
	Finally, $\SORL_0(G)$ produces the same distribution over sets as $\SORL(G)$, because the permutation can be sampled with replacement and rejection of previous samples (as in $\SORL$) rather than without replacement (as in $\SORL_0$). Therefore, $\SORL(G)$ and  $\ORL(G)$ produce distributions over sets $S$ with total variation distance at most $1/n^c$, as desired.
\end{proof}

To lower bound the size of the independent set produced by $\SORL(G)$, we use an exposure martingale. We start by stopping the $\SORL$ algorithm early: we only consider the first $k = n/(2(\Delta+1))$ iterations. Each iteration samples 1 vertex, which precludes at most $\Delta$ other vertices from joining the independent set in the future. Therefore, after $k$ iterations, a randomly sampled vertex has a probability of at least $\frac{n - (\Delta+1) k}{n} \ge 1/2$ of still being able to join the independent set. Thus, $I$ has size at least $(\frac{1}{2})(\frac{n}{2(\Delta+1)}) = \frac{n}{8(\Delta+1)}$ in expectation. To obtain a high-probability lower bound on the size of $I$, we use the following proposition, which we prove using Azuma's Inequality:

\begin{proposition}\label{prop:nonshift}
	Consider a set $\mathcal{X}$, a distribution $\mathcal D$ over $\mathcal X$, a collection $X_1,X_2,\hdots,X_k$ of independent, identically distributed random variables sampled from $\mathcal D$, and a collection of functions $f_1,f_2,\hdots,f_k$, where $f_i: \mathcal{X}^i\rightarrow \mathbb{R}$. Suppose that there are numbers $M_0,M_1 > 0$ such that for all $i\in \{1,2,\hdots,k-1\}$ and all tuples $x_1,x_2,\hdots,x_i\in \mathcal X$, the following conditions hold:
	
	\begin{enumerate}
		\item(Max change) $|f_{i+1}(x_1,x_2,\hdots,x) - f_i(x_1,x_2,\hdots,x_i)|\le M_0$ and $|f_1(x)|\le M_0$ for all $x\in \mathcal X$
		
		\item(Expected increase) $\mathbb{E}_{X\sim \mathcal D} [f_{i+1}(x_1,x_2,\hdots,x_i,X)] \ge M_1 + f_i(x_1,x_2,\hdots,x_i)$ and $\mathbb{E}_{X\sim \mathcal D} [f_1(X)] \ge M_1$.
	\end{enumerate}
	Then
	
	$$\Pr[f_k(X_1,X_2,\hdots,X_k) < k M_1 - t] \le \exp\left(\frac{-t^2}{8M_0^2 k}\right)$$
\end{proposition}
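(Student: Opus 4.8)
The plan is to turn the $f_i$'s into an exposure (Doob-type) martingale by subtracting off the conditional expected increments, and then apply the one-sided Azuma inequality (Fact~\ref{fact:Azuma}). Set $f_0\equiv 0$ and let $\mathcal F_i$ be the $\sigma$-algebra generated by $X_1,\dots,X_i$ (with $\mathcal F_0$ trivial). For $i\ge 1$ define the conditional increment
$$D_i=\mathbb{E}\big[f_i(X_1,\dots,X_i)-f_{i-1}(X_1,\dots,X_{i-1})\,\big|\,\mathcal F_{i-1}\big],$$
which, since the $X_j$ are independent, is obtained by averaging only the last coordinate over $\mathcal D$ and is therefore an $\mathcal F_{i-1}$-measurable function of $X_1,\dots,X_{i-1}$. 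Then put $Z_0=0$ and $Z_i=f_i(X_1,\dots,X_i)-\sum_{j=1}^i D_j$. By construction $\mathbb{E}[Z_i-Z_{i-1}\mid\mathcal F_{i-1}]=\mathbb{E}[f_i-f_{i-1}\mid\mathcal F_{i-1}]-D_i=0$, so $(Z_i)_{i=0}^k$ is a martingale with respect to $(\mathcal F_i)$.

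Next I would bound the martingale differences. The ``max change'' hypothesis, applied pointwise, gives $|f_i(X_1,\dots,X_i)-f_{i-1}(X_1,\dots,X_{i-1})|\le M_0$ almost surely (for $i=1$ this is exactly $|f_1(X_1)|\le M_0$ since $f_0\equiv 0$), hence also $|D_i|\le M_0$, being an average of quantities bounded by $M_0$. Therefore $|Z_i-Z_{i-1}|\le|f_i-f_{i-1}|+|D_i|\le 2M_0=:c_i$ almost surely. Separately, the ``expected increase'' hypothesis, evaluated at the realized values $X_1,\dots,X_{i-1}$, says precisely that $D_i\ge M_1$ for every $i\in\{1,\dots,k\}$ (the case $i=1$ being $\mathbb{E}_{X\sim\mathcal D}[f_1(X)]\ge M_1$). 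Summing, $\sum_{j=1}^k D_j\ge kM_1$, so
$$f_k(X_1,\dots,X_k)=Z_k+\sum_{j=1}^k D_j\ge Z_k+kM_1,$$
which means the event $\{f_k(X_1,\dots,X_k)<kM_1-t\}$ is contained in $\{Z_k<-t\}\subseteq\{Z_k-Z_0\le -t\}$.

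Finally I would invoke Fact~\ref{fact:Azuma} with the martingale $(Z_i)_{i=0}^k$, constants $c_i=2M_0$, and $N=k$, obtaining
$$\Pr[f_k(X_1,\dots,X_k)<kM_1-t]\le\Pr[Z_k-Z_0\le -t]\le\exp\!\left(-\frac{t^2}{2\sum_{i=1}^k(2M_0)^2}\right)=\exp\!\left(-\frac{t^2}{8M_0^2k}\right),$$
which is exactly the claimed inequality; note the target constant $8M_0^2k$ is precisely what the difference bound $c_i=2M_0$ produces, so no further optimization of the martingale is needed.

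The only real obstacle is the bookkeeping around the conditional expectations: one must argue cleanly that $D_i$ is a genuine function of $X_1,\dots,X_{i-1}$ alone, so that the one-sided hypotheses stated as quantified inequalities over tuples $x_1,\dots,x_i$ transfer to the random quantities $D_i$ after substituting the realized values, and that independence of the $X_j$ is what licenses averaging only the final coordinate. One should also keep track of the $i=1$ edge cases wherever $f_0$ appears; adopting the convention $f_0\equiv 0$ makes the $i=1$ instances of both hypotheses coincide with the stated special cases $|f_1(x)|\le M_0$ and $\mathbb{E}_{X\sim\mathcal D}[f_1(X)]\ge M_1$. Everything else is a direct application of Azuma's inequality.
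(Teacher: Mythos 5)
Your proof is correct and is essentially the paper's own argument: your martingale $Z_i=f_i-\sum_{j\le i}D_j$ has exactly the same increments $f_i-\mathbb{E}[f_i\mid X_1,\dots,X_{i-1}]$ as the paper's $Y_i$, you obtain the same difference bound $2M_0$ (via $|f_i-f_{i-1}|+|D_i|$ rather than the paper's two-sided comparison through $f_{i-1}$, but with the same result), and the final step converting the expected-increase condition into $f_k\ge Z_k+kM_1$ before applying Fact~\ref{fact:Azuma} matches the paper's conclusion. No gaps.
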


\begin{proof}
	Set up a martingale $\{Y_i: i=0,1,\hdots,k\}$ based on the sequence of function values. Let $Y_0 = 0$ and for all $i \in \{1,2,\hdots,k\}$, let
	
	\begin{align*}
	Y_i &= f_i(X_1,X_2,\hdots,X_i) - \mathbb{E}[f_i(X_1,X_2,\hdots,X_i) | X_1,\hdots,X_{i-1}] + Y_{i-1}\\
	\end{align*}
	For all integers $i\ge 1$, $\mathbb{E}[Y_i | X_1,\hdots,X_{i-1}] = Y_{i-1}$, so the $Y_i$s are a martingale. Let $f_0$ denote the constant function $f_0 = 0$. By the \emph{Max change} condition,
	
	\begin{align*}
	|Y_i - Y_{i-1}| &= |f_i(X_1,X_2,\hdots,X_{i-1},X_i) - \mathbb{E}[f_i(X_1,X_2,\hdots,X_{i-1},X_i) | X_1,\hdots,X_{i-1}]|\\
	&\le \max_{a,b\in \mathcal X} |f_i(X_1,X_2,\hdots,X_{i-1},a) - f_i(X_1,X_2,\hdots,X_{i-1},b)|\\
	&\le \max_{a,b\in \mathcal X} (|f_i(X_1,X_2,\hdots,X_{i-1},a) - f_{i-1}(X_1,\hdots,X_{i-1})|\\
	&+ |f_{i-1}(X_1,\hdots,X_{i-1}) - f_i(X_1,X_2,\hdots,X_{i-1},b)|)\\
	&\le 2M_0\\
	\end{align*}
	Therefore, by Fact \ref{fact:Azuma},
	
	$$\Pr[Y_k < -t] \le \exp\left(-\frac{t^2}{8kM_0^2}\right)$$
	By the \emph{Expected increase} condition, for all integers $i\ge 1$,
	
	\begin{align*}
	Y_i &= f_i(X_1,X_2,\hdots,X_i) - \mathbb{E}[f_i(X_1,X_2,\hdots,X_i) | X_1,\hdots,X_{i-1}] + Y_{i-1}\\
	&= f_i(X_1,X_2,\hdots,X_i) - f_{i-1}(X_1,\hdots,X_{i-1})\\
	&+ f_{i-1}(X_1,\hdots,X_{i-1}) - \mathbb{E}[f_i(X_1,X_2,\hdots,X_i) | X_1,\hdots,X_{i-1}] + Y_{i-1}\\
	&\le f_i(X_1,X_2,\hdots,X_i) - f_{i-1}(X_1,\hdots,X_{i-1}) - M_1 + Y_{i-1}\\
	\end{align*}
	As a result,
	
	$$Y_k\le f_k(X_1,\hdots,X_k) - kM_1$$
	which means that
	
	$$\Pr[f_k(X_1,\hdots,X_k) < kM_1 - t] \le \Pr[Y_k < -t] \le \exp\left(-\frac{t^2}{8kM_0^2}\right)$$
	as desired.
\end{proof}

We now prove our main result by letting $f_i$ denote the size of the independent set after $i$ iterations of $\SORL$:

\begin{theorem}
	For any $c > 1$, there is an $(c)$-round CONGEST algorithm $\ORL(G)$ that, given a parameter $p \in (0,1)$ and an $n$-vertex graph $G$ with max degree $\Delta \le n/(256 \log(1/p)) - 1$, returns an independent set $I$ for which $|I| \ge n/(8(\Delta + 1))$ with probability at least $1 - p - 1/n^c$.
\end{theorem}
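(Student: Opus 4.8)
The plan is to combine the sequential-view equivalence (Proposition~\ref{prop:indep-equiv}) with the martingale concentration bound (Proposition~\ref{prop:nonshift}), instantiating the latter with $f_i$ equal to the size of the independent set built by $\SORL(G)$ after $i$ draws. First I would run $\SORL(G)$ but only look at its first $k = \lceil n/(2(\Delta+1))\rceil$ iterations; call $f_i$ the size of $I$ after $i$ draws, viewed as a function of the first $i$ sampled vertices $X_1,\dots,X_i$ (which are i.i.d.\ uniform on $V$, since $\SORL$ samples with replacement and only rejects repeats). The two hypotheses of Proposition~\ref{prop:nonshift} then need to be checked with $M_0 = 1$ and $M_1 = 1/2$. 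The \emph{Max change} condition is immediate: one additional draw adds at most one vertex to $I$, and removing/changing a single coordinate changes $f_i$ by at most $1$, so $|f_{i+1}-f_i|\le 1$ and $|f_1|\le 1$. The \emph{Expected increase} condition is the substantive point: conditioned on the first $i<k$ draws having produced whatever set of "seen" vertices and whatever partial independent set, a freshly drawn uniform vertex $X$ gets added to $I$ precisely when $X$ has not been seen yet and no already-seen vertex is a neighbor of $X$ in $G$; at most $i(\Delta+1)$ vertices are excluded for one of these reasons, so the conditional probability of adding $X$ is at least $1 - i(\Delta+1)/n \ge 1 - k(\Delta+1)/n \ge 1/2$ for all $i < k$, giving $\mathbb{E}[f_{i+1}\mid X_1,\dots,X_i]\ge f_i + 1/2$ and similarly $\mathbb{E}[f_1]\ge 1/2$.

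With these verified, Proposition~\ref{prop:nonshift} yields $\Pr[f_k < k/2 - t] \le \exp(-t^2/(8k))$. Taking $t = k/4$ gives $\Pr[|I| < k/4] \le \exp(-k/128)$, and since $k \ge n/(2(\Delta+1))$ we have $k/4 \ge n/(8(\Delta+1))$, so $\Pr[|I| < n/(8(\Delta+1))] \le \exp(-k/128) \le \exp\big(-n/(256(\Delta+1))\big)$. The hypothesis $\Delta \le n/(256\log(1/p)) - 1$ is exactly what makes $n/(256(\Delta+1)) \ge \log(1/p)$, hence $\exp(-n/(256(\Delta+1)))\le p$. So $\SORL(G)$ returns an independent set of size at least $n/(8(\Delta+1))$ with probability at least $1-p$.

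Finally I would transfer this back to the actual CONGEST algorithm $\ORL(G)$. By Proposition~\ref{prop:indep-equiv}, the output distribution of $\ORL(G)$ is within total variation distance $1/n^c$ of that of $\SORL(G)$, so the event $\{|I|\ge n/(8(\Delta+1))\}$ has probability at least $1 - p - 1/n^c$ under $\ORL(G)$ as well. The round complexity is $O(c)$ as already observed (each vertex picks a rank from a range of size $\mathrm{poly}(n)$, encodable in $O(c\log n)$ bits, then one round of comparison with neighbors suffices). Assembling these pieces proves the theorem.

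The main obstacle is getting the \emph{Expected increase} step exactly right: one must be careful that the relevant count of "blocked" vertices is genuinely at most $i(\Delta+1)$ regardless of the conditioning — i.e.\ the set of vertices that can no longer join is contained in the union of the $i$ drawn vertices and their neighborhoods — and that this holds as an almost-sure bound on the conditional expectation, as Proposition~\ref{prop:nonshift} requires, rather than merely in expectation. Everything else (the max-change bound, the choice of $t$, the translation of the degree hypothesis into the probability bound, and the total-variation transfer to $\ORL$) is routine once that is in place.
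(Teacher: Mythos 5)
Your proposal is correct and follows essentially the same route as the paper's proof: reduce to $\SORL$ via Proposition~\ref{prop:indep-equiv}, apply Proposition~\ref{prop:nonshift} with $f_i = |I_i|$, $k = n/(2(\Delta+1))$, $M_0=1$, $M_1=1/2$, $t=k/4$, and translate the degree hypothesis into $\exp(-n/(256(\Delta+1)))\le p$. The point you flag as the main obstacle (the almost-sure bound of $i(\Delta+1)$ blocked vertices under any conditioning) is exactly how the paper verifies the \emph{Expected increase} condition, so nothing is missing.
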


\begin{proof}
	$\ORL(G)$ is an $O(c)$-round algorithm in the CONGEST model. Furthermore, the set $I$ returned is an independent set because each vertex $v\in I$ has a strictly higher rank $r_v$ than its neighbors, which is not simultaneously possible for two adjacent vertices. Therefore, to prove the theorem, we just need to lower bound the size of the set $I$ returned by $\ORL(G)$. By Proposition \ref{prop:indep-equiv}, it suffices to show that the set $I$ returned by $\SORL(G)$ has size at least $n/(8(\Delta + 1))$ with probability at least $1 - p$.
	
	To lower bound the size of $I$, we apply Proposition \ref{prop:nonshift} with the following parameter settings:
	
	\begin{itemize}
		\item $k = n/(2(\Delta+1))$
		\item $\mathcal X = V(G)$
		\item $\mathcal D$: the uniform distribution over $\mathcal X$
		\item $X_i$: the vertex $u$ sampled during the $i$th iteration of the while loop in $\SORL(G)$.
		\item $f_i(x_1,x_2,\hdots,x_i)$: the function that maps a set of vertices $x_1,\hdots,x_i$ to $|I_i|$, where $I_i$ is the set $I$ between the $i$ and $(i+1)$th iterations of the while loop of $\SORL(G)$ with $u$ being $x_j$ in the $j$th while loop iteration.
		\item $M_0 = 1$
		\item $M_1 = 1/2$
		\item $t = k/4$
	\end{itemize}
	We now check the conditions of Proposition \ref{prop:nonshift} with each of these parameters. The \emph{Max change} condition follows immediately from the fact that $I_{i+1} = I_i$ or $I_{i+1} = I_i\cup \{X_{i+1}\}$ for all $i\ge 1$ and the fact that $|I_1|\le 1$, so we focus on the \emph{Expected increase} condition. Consider a set of choices $x_1,x_2,\hdots,x_i$ of the first $i$ while loop vertices $u$ and let $V_i = \{x_1,x_2,\hdots,x_i\}$ for all $i\in \{0,1,\hdots,k\}$. Let $X$ be a random variable denoting the $(i+1)$th vertex $u$ selected by the while loop from $U$. $X$ is uniformly chosen from $V$. By the if statement of the $\SORL$ algorithm, $X$ is added to $I$ if and only if $X$ is not equal to or adjacent to any vertex in $V_i$. There are at most $(\Delta + 1)|V_i|$ such vertices, so
	
	$$\Pr_{X\sim \mathcal D}[f_{i+1}(x_1,x_2,\hdots,x_i,X) \ne f_i(x_1,\hdots,x_i)] \ge 1 - \frac{(\Delta + 1)|V_i|}{n}$$
	Since $i\le k\le n/(2(\Delta + 1))$ and $|V_i| = i$, $\Pr_{X\sim \mathcal D}[f_{i+1}(x_1,x_2,\hdots,x_i,X) \ne f_i(x_1,\hdots,x_i)] \ge 1/2$ for all $i\in \{0,1,\hdots,k-1\}$. Furthermore,
	
	\begin{align*}
	\mathbb{E}_{X\sim \mathcal D} [f_{i+1}(x_1,x_2,\hdots,x_i,X)] &= \Pr[f_{i+1}(x_1,x_2,\hdots,x_i,X) \ne f_i(x_1,\hdots,x_i)] + f_i(x_1,x_2,\hdots,x_i)\\
	&\ge 1/2 + f_i(x_1,x_2,\hdots,x_i)
	\end{align*}
	
	Plugging in our lower bound on the probability shows that the \emph{Expected increase} condition is satisfied. Therefore, Proposition \ref{prop:nonshift} applies and shows that
	
	$$\Pr[|I_k| < k/4] \le \exp\left(-\frac{(k/4)^2}{8k}\right) = \exp(-k/128)$$
	In particular, since $|I| \ge |I_k|$,
	
	$$\Pr[|I| < n/(8(\Delta + 1))] \le \exp(-n/(256(\Delta + 1))) \le p$$
	Therefore, the independent set $I$ returned by $\SORL$ has the desired size with probability at least $1 - p$, as desired.
\end{proof}

\section{An Improved Approximation for Low-Arboricity Graphs}\label{sec:arb}

In this section we prove the following theorem. Recall that $\alpha$ is the arboricity of the input graph (see also Definition~\ref{def:arb}). Theorem~\ref{thm:mainArb} is a corollary of Theorems~\ref{thm:Wmain} and~\ref{thm:arb}.
\begin{theorem}\label{thm:arb}
	Let $\mathcal{A}$ be a $T$-round algorithm that finds a $(1+\epsilon)\Delta$-approximation for maximum-weight independent set in the CONGEST model. There is an $O(T\log n)$-round algorithm $\mathcal{A}'$ that finds a $8(1+\epsilon)\alpha$-approximation for maximum-weight independent set in the CONGEST model.
\end{theorem}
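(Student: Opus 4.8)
The plan is to mimic the local-ratio / stack framework of Section~\ref{sec:amplification}, but instead of iterating a fixed number of phases, I would iterate along the sparse degeneracy layers induced by the arboricity bound. The key structural fact is that in a graph of arboricity $\alpha$, at least half of the nodes (counted arbitrarily, but here I will want it weighted) have degree at most $4\alpha$: since $\sum_v \deg(v) = 2m_H \le 2\alpha(n_H-1) < 2\alpha n_H$, a Markov argument shows that fewer than half the nodes have degree $\ge 4\alpha$. I would like the weighted analogue: the set $L$ of nodes of degree at most $4\alpha$ carries at least half of the current total weight — but this is \emph{not} true in general, so instead I will peel layers and use the stack property to aggregate. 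Concretely: in phase $i$, let $L_i$ be the nodes of $G_{w_i}$ of degree at most $4\alpha$, run $\mathcal{A}$ on $G_{w_i}[L_i]$ to get an independent set $I_i$ which is a $(1+\epsilon)\cdot 4\alpha$-approximation \emph{in that induced subgraph}, push $I_i$ onto the stack, perform the local weight reductions in $N^+(v)$ for $v\in I_i$ (over the whole graph $G_{w_i}$), then delete $L_i$ and recurse on the remaining graph with the reduced weights. Since each phase removes at least half the remaining nodes, after $\lceil \log n\rceil$ phases the graph is empty, giving the $O(T\log n)$ round bound. Finally pop the stack greedily to build $I$, exactly as in Algorithm~\ref{alg:amplify}.

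The analysis has three ingredients. First, an analogue of Proposition~\ref{prop:stack} (the stack property): $w(I) \ge \sum_i w'_i(I_i)$, where $w'_i$ is the reduced weight at phase $i$; the proof is verbatim the same as Proposition~\ref{prop:stack}, since the pop-out/greedy construction is identical and only uses that every popped-but-rejected node has a later-phase neighbor in $I$. Second, an analogue of Proposition~\ref{prop:IndPhase}: $I_i$ is a $4(1+\epsilon)\alpha$-approximation for maximum-weight independent set in $G_{w'_i}$. Here I must be slightly careful: $\mathcal{A}$ runs on the induced subgraph $G_{w_i}[L_i]$ and returns a $(1+\epsilon)\Delta_i$-approximation where $\Delta_i \le 4\alpha$ is the max degree of that subgraph. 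The reduced weight function $w'_i$ assigns $w'_i(v) = w_i(v)$ for $v \in I_i$, $w'_i(u) = \sum_{v\in N(u)\cap I_i} w_i(v)$ for $u \notin I_i$, and a node $u$ can have positive $w'_i$-weight only if it is in $N^+(I_i) \subseteq N^+(L_i)$. An optimal $w'_i$-independent set $I^*_i$ may contain nodes outside $L_i$ (neighbors of $I_i$ of high degree), but the weight-charging argument of Proposition~\ref{prop:IndPhase} still works: $w'_i(I^*_i) = w'_i(I^*_i\cap I_i) + \sum_{v\in I^*_i\setminus I_i} \sum_{u \in N(v)\cap I_i} w_i(u)$, and each $u\in I_i$ is counted at most $\deg_{G_{w_i}[L_i]}(u) \le 4\alpha$ times among $v\in I^*_i\setminus I_i$ \emph{provided} $I^*_i$ only sees neighbors of $u$ within $L_i$ — but neighbors of $u\in L_i$ outside $L_i$ also contribute. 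So I must instead bound the total degree charge by the \emph{full} degree of $u$, which is not bounded. The fix: restrict attention to the maximum-weight independent set of $G_{w'_i}$ when computing $OPT(G_{w'_i})$, and note that any node $v$ of positive $w'_i$-weight with $v\notin I_i$ has $w'_i(v) = \sum_{u\in N(v)\cap I_i} w_i(u)$; summing over $v\in I^*_i\setminus I_i$ counts each $u\in I_i$ at most $\deg_{G_{w_i}}(u)$ times — unbounded. The clean resolution is to run $\mathcal{A}$ not on $G_{w_i}[L_i]$ but to charge against the degree \emph{in the induced subgraph}: since weight is only reduced at nodes adjacent (in $G_{w_i}$) to $I_i$, and $I_i\subseteq L_i$ has every node of $G_{w_i}$-degree $\le 4\alpha$, each $u\in I_i$ has at most $4\alpha$ neighbors \emph{total} in $G_{w_i}$, hence at most $4\alpha$ nodes $v$ with $u\in N(v)\cap I_i$. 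That \emph{is} the point: membership in $L_i$ bounds the \emph{full} $G_{w_i}$-degree of $u$, so the charge is $\le 4\alpha$ and $I_i$ is indeed a $4\alpha(1+\epsilon)$-approximation in $G_{w'_i}$.

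Third, I combine these with a case analysis as in Lemma~\ref{lem:amplify}. Let $w'(v) = \sum_i w'_i(v)$ be the total reduced weight; since after $\lceil\log n\rceil$ phases \emph{every} node has been deleted and hence had its weight fully reduced to $0$ (a node in $L_i$ gets $w'_i(v) = w_i(v)$; a node not chosen but deleted still must have weight $0$ after its layer is removed — wait, deletion does not zero the weight, only reduction does), I need: a node is deleted only after it is in some $L_i$, at which point either it is in $I_i$ (weight zeroed) or it is not, and its weight is simply \emph{carried along unreduced except for contributions from $I_i$-neighbors}. So $w'(V)$ need not equal $w(V)$. Instead I argue: $w'(V) = \sum_v (w(v) - w_{\text{final}}(v))$ where $w_{\text{final}}$ is the weight when a node is deleted; this can be much less than $w(V)$. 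To handle the leftover, I use exactly the two-case split: either the residual weight $\sum_i (\text{weight carried into the graph at deletion time})$ is small — at most $\frac{\epsilon}{1+\epsilon}OPT(G_w)$, say — in which case $OPT(G_{w'}) \ge OPT(G_w)/(1+\epsilon)$ and the stack property plus the per-phase approximation give $w(I)\ge \sum_i w'_i(I_i) \ge \sum_i OPT(G_{w'_i})/(4\alpha) \ge OPT(G_{w'})/(4\alpha) \ge OPT(G_w)/(4(1+\epsilon)\alpha)$; or it is large, in which case... here the clean argument of Lemma~\ref{lem:amplify} Case 2 does not directly transfer since we do not re-run on the whole graph. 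I would instead reorganize: rather than a global two-case split, I make the split \emph{per surviving node} — charge each unit of original weight either to a reduction event (captured by $w'$) or observe it belonged to a node deleted with that weight still attached, but every node \emph{is} eventually deleted, and a node $v$ deleted at phase $i$ with residual weight $w_i(v) > 0$ was in $L_i$ but not in $I_i$; the maximality/optimality of $\mathcal{A}$'s output on $G_{w_i}[L_i]$ together with the fact that $I_i$ is a $4\alpha(1+\epsilon)$-approximation there means $w_i(I_i) \ge \frac{1}{4\alpha(1+\epsilon)}\sum_{v\in L_i} w_i(v) \ge \frac{1}{4\alpha(1+\epsilon)} w_i(v)$ for that one node — summing these and using the stack property yields $w(I) \ge \frac{1}{4\alpha(1+\epsilon)}\big(\text{total weight ever present}\big) \ge \frac{1}{8\alpha(1+\epsilon)} w(V) \ge \frac{OPT(G_w)}{8(1+\epsilon)\alpha}$, where the factor-$2$ loss comes from a node's weight being possibly "double counted" across the $\le \log n$ phases via the reductions — this is exactly where the $8$ rather than $4$ in the theorem statement comes from. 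Making this last counting argument precise — accounting for how a node's original weight is distributed among (a) reductions charged to various $I_i$ and (b) the residual at deletion, without double-counting, to land exactly at the factor $8$ — is the main obstacle, and I expect it to go through by a telescoping argument over the reduced-weight functions $w'_i$ analogous to the telescoping in Lemma~\ref{lem:amplify}.
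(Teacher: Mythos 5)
Your high-level plan (peel the degree-$\le 4\alpha$ layer, run $\mathcal{A}$ on it, do local-ratio reductions, push onto a stack, recurse, and pop greedily) is the same as the paper's, and your observations that at least half the nodes are removed per phase and that the stack property (Proposition~\ref{prop:stack}) transfers verbatim are both correct. However, there is a genuine gap exactly where you sensed trouble: the residual weight of nodes that are deleted with their layer but were never touched by a reduction. Your proposed patch charges this residual via the inequality $w_i(I_i)\ge \frac{1}{4\alpha(1+\epsilon)}\sum_{v\in L_i} w_i(v)$, but this does \emph{not} follow from $\mathcal{A}$ being a $(1+\epsilon)\Delta$-approximation: an approximation guarantee is relative to $OPT$ of the induced subgraph, and $OPT(G_{w_i}[L_i])$ can itself be only a $\Theta(1/\alpha)$ fraction of $w_i(L_i)$ (e.g.\ near-uniform weights on a union of $(4\alpha+1)$-cliques), so this route only yields an $O(\alpha^2)$ bound. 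The inequality you want is the ``fraction of total weight'' guarantee of Corollary~\ref{cor:boost}, which is a strictly different hypothesis than the one in the theorem statement. On top of this, your final aggregation to the factor $8$ is explicitly left as ``I expect it to go through,'' which is precisely the step that needs the argument.

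The paper closes this gap by changing the weight-reduction rule rather than the accounting: in phase $i$ it sets $w_{i+1}(v)=0$ for \emph{every} $v\in V_i^{4\alpha}$ (not only for $v\in I_i$), so the reduced function $w'_i$ absorbs the full residual weight of the deleted layer and the telescoping $w(V)=\sum_i w'_i(V_i)$ is exact — no leftover weight to charge separately. The price is that one must now show $I_i$ is an $8(1+\epsilon)\alpha$-approximation with respect to this enlarged $w'_i$ (Proposition~\ref{prop:arbWeight}). That is done by splitting $w'_i(V_i)\le w'_i(V_i^{4\alpha}) + 4\alpha\, w'_i(I_i)$, hence $OPT(G^i_{w'_i})\le OPT(G^{4\alpha}_{w'_i}) + 4\alpha\, w'_i(I_i)$, and then invoking the approximation guarantee of $\mathcal{A}$ only against $OPT(G^{4\alpha}_{w'_i})$ — never against total weight. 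The factor $8$ emerges from adding these two contributions, which is the rigorous version of the ``double counting'' you gesture at. So your instinct about where the factor $2$ comes from is right, but as written the proof both relies on an unavailable hypothesis and leaves the decisive counting step unproved.
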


\paragraph{Description of algorithm $\mathcal{A}'$:} $\mathcal{A}'$ is formally given in Algorithm~\ref{alg:arb}.
The algorithm is similar to Algorithm~\ref{alg:amplify} that was presented in Section~\ref{sec:amplification}. The main difference is that we run $\mathcal{A}$ on subgraphs of maximum degree $4\alpha$, rather than on the input graph, which has maximum degree $\Delta$. We do this to ensure that the approximation factor we get is not too far from $4\alpha(1+\epsilon)$.
$\mathcal{A}'$ is formally given in Algorithm~\ref{alg:arb}. There are two stages, each consists of $\log n+1$ phases. 
\paragraph{First Stage:} Let $V^{4\alpha}_i$ be the set of nodes of degree at most $4\alpha$ in $G^i_{w_i}$, where $G^1_{w_1}=(V_1,E_1,w_1)=(V,E,w)$ is the original input graph, and for $i>1$, $G^i_{w_i}$ will be defined shortly. For $i=1$ to $\log n+1$ phases, in each phase, we run algorithm $\mathcal{A}$ on the subgraph induced by the nodes in $V^{4\alpha}_i$ and insert the resulting independent set $I_i$ into a stack $S$, that is initially empty. We then continue to the graph $G^{i+1}_{w_{i+1}}=(V_{i+1},E_{i+1},w_{i+1})$, that is defined as follows. Let $w_1=w$ be the original input weight function. We begin with defining $w_{i+1}$.

\begin{align*}
&w_{i+1}(v)=\begin{cases}
0 & \mbox{if } v\in V^{4\alpha}_i\\
w_i(v)-\sum_{u\in N(v)\cap I_i}w_i(u) & \mbox{otherwise}
\end{cases}
\end{align*}

Unlike the algorithm presented in Section~\ref{sec:amplification}, we reduce the total current weight, $w_i(v)$, from every $v\in V^{4\alpha}_i$, and not only from the nodes in $I_i$. For every node $v\notin V^{4\alpha}_i$, we reduce the total weight of the neighboring nodes in $I_i$. Similarly to Section~\ref{sec:amplification}, we define the following reduced weight function $w'_i(v)=w_i(v)-w_{i+1}(v)$, for any $i$. 


The idea behind the definition of $w_i$ and $w'_i$ is that $I_i$ is a good approximation, as to be shown later, to the graph $G^i_{w'_i}=(V_i,E_i,w'_i)$, which is the same graph as $G^i_{w_i}$ but with weight function $w'_i$ rather than $w_i$. Therefore, $I_1$ is a good approximation to the graph $G^1_{w'_1}$, which is the same graph as the input graph $G_{w}$, but with weight function $w'_1$ rather than $w$. Hence, intuitively, after finding $I_1$, we just need to find an independent set $I'$  that is a good approximation to the graph $(V,E,w-w'_1)$, and incorporate the two independent sets $I_1$ and $I'$ together to construct an independent set $I$ that is a good approximation to the original input graph $G=(V,E,w)$. The main issue with applying this approach recursively on $G_{w_2}=(V,E,w-w'_1)$ is that the set of nodes is the same as the set of nodes of $G_{w}$, and therefore, the set of nodes of maximum degree $4\alpha$ in $G_{w_2}$ is the same as in $G_{w}$, which implies that we don't make any progress by running $\mathcal{A}$ again on the same set of nodes. Therefore, in order to guarantee that our algorithm makes progress, we run $\mathcal{A}$ on $G^2_{w_2}=(V_2,E_2,w_2)=(V_2,E_2,w-w'_1)$, where we \emph{keep in $V_2$ only nodes of positive weight in $w_2$}. Since we can assume without loss of generality that $\mathcal{A}$ never picks nodes of non-positive weight, we can keep only the nodes of positive weight in $V_2$. Furthermore, by keeping the nodes of positive weight in $V_2$, we ensure that every $v\in V^{4\alpha}_1$ is not in $V_2$, and therefore our algorithm makes progress by running $\mathcal{A}$ again on the set of nodes of maximum degree $4\alpha$ in $G^2_{w_2}$, which is denoted by $V^{4\alpha}_2$, which is necessarily different than $V^{4\alpha}_1$. Hence, we define $V_{i+1}$ and $E_{i+1}$ as follows.

\begin{align*}
&V_{i+1}=\{v\in V_i\mid w_{i+1}(v)>0\}\\
&E_{i+1}=\{\{u,v\}\in E\mid u,v\in V_{i+1}\}\\
\end{align*}

This concludes the first stage.

\paragraph{Second Stage:} We construct an independent set $I$ as follows. For $i=0$ to $\log n$ phases, we pop out $I_{\log n+1-i}$ from the stack, and insert each $v\in I_{\log n+1-i}$ to $I$ unless $I$ already contains a neighbor of $v$. 

In Lemma~\ref{arb:approx}, we prove that $I$ is a $8(1+\epsilon)\alpha$-approximation for maximum-weight independent set in the original input graph $G_w$. We start with the following helper propositions.

\begin{proposition}\label{prop:arbRuntime}
	For any $i\geq 2$, it holds that $|V_{i+1}|\leq |V_i|/2$.
	\begin{proof}
		Since the arboricity of $G$ is $\alpha$, it holds that for any subgraph $H=(V',E')\subseteq G$ there are at least $|V'|/2$ nodes of degree at most $4\alpha$. Hence, for any $i\geq 1$, it holds that $|V_i^{4\alpha}|\geq |V_i|/2$. Since all the nodes in $V_i^{4\alpha}$ have zero weight in $w_{i+1}$, and $V_{i+1}=\{v\in V_i\mid w_{i+1}(v)>0\}$, the claim follows.
	\end{proof}
\end{proposition}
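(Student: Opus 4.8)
The plan is to obtain the proposition by combining a structural sparsity fact about low-arboricity graphs with the bookkeeping of how the weight function $w_{i+1}$ is defined. The first step is the sparsity fact: since $G$ has arboricity $\alpha$, Definition~\ref{def:arb} gives, for every subgraph $H=(V',E')\subseteq G$, that $|E'|\le \alpha(|V'|-1)<\alpha|V'|$, so the sum of degrees in $H$ is less than $2\alpha|V'|$ and the average degree is less than $2\alpha$. A Markov-type counting argument then shows that strictly fewer than $|V'|/2$ vertices of $H$ have degree at least $4\alpha$, hence at least $|V'|/2$ vertices of $H$ have degree at most $4\alpha$.

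The second step is to check that this applies to $G^i_{w_i}$. By construction $V_{i+1}\subseteq V_i\subseteq\cdots\subseteq V_1=V$, and $E_{i+1}$ is exactly the set of edges of $E$ with both endpoints in $V_{i+1}$, so each $G^i$ is the subgraph of $G$ induced by $V_i$; in particular its arboricity is at most $\alpha$. Applying the sparsity fact with $H=G^i$ and $V'=V_i$ yields $|V_i^{4\alpha}|\ge |V_i|/2$, which is the intermediate claim stated in the proof sketch (and it in fact holds for every $i\ge 1$, not just $i\ge 2$).

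The third step connects this to $V_{i+1}$ via the definition of the reduced weights. For every $v\in V_i^{4\alpha}$ the first case in the definition of $w_{i+1}$ gives $w_{i+1}(v)=0$, while $V_{i+1}$ is defined to contain only nodes of positive $w_{i+1}$-weight. Hence $V_{i+1}\cap V_i^{4\alpha}=\emptyset$, i.e. $V_{i+1}\subseteq V_i\setminus V_i^{4\alpha}$, and therefore $|V_{i+1}|\le |V_i|-|V_i^{4\alpha}|\le |V_i|-|V_i|/2=|V_i|/2$, as claimed.

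I do not expect a genuine obstacle here. The only points needing a little care are (i) justifying that $G^i$ really is an induced subgraph of $G$, so that Definition~\ref{def:arb} can be invoked with arboricity bound $\alpha$, and (ii) the (harmless) indexing gap: the argument establishes the halving for all $i\ge 1$, whereas the statement restricts to $i\ge 2$ because that is all that is needed downstream to bound the number of phases before $G^i$ becomes empty. I would also verify the edge-count/Markov step carefully, since it is the one non-bookkeeping ingredient, though it is entirely standard.
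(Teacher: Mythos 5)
Your proposal is correct and follows essentially the same route as the paper's proof: the arboricity bound gives that at least half the nodes of any (induced) subgraph have degree at most $4\alpha$, so $|V_i^{4\alpha}|\ge |V_i|/2$, and since those nodes receive weight zero under $w_{i+1}$ they are excluded from $V_{i+1}$. The paper states the average-degree/Markov step and the induced-subgraph observation without proof, whereas you spell them out; the content is identical.
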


	\begin{proposition}\label{prop:arbWeight}
			$I_i$ is a $8\alpha(1+\epsilon)$-approximation for maximum-weight independent set in $G^i_{w'_i}$
		\end{proposition}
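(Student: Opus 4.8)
The plan is to mirror the argument in Proposition~\ref{prop:IndPhase}, but to account for the fact that the weight reduction now applies to \emph{all} nodes of degree at most $4\alpha$ (i.e., all of $V^{4\alpha}_i$), not merely to those that ended up in $I_i$, and that $\mathcal{A}$ was run on the subgraph induced by $V^{4\alpha}_i$, which has maximum degree $4\alpha$ (not $\Delta$). First I would record the structure of $w'_i$ explicitly: for $v\in V^{4\alpha}_i$ we have $w'_i(v)=w_i(v)-w_{i+1}(v)=w_i(v)$, and for $v\notin V^{4\alpha}_i$ we have $w'_i(v)=w_i(N(v)\cap I_i)$. In particular, every node outside $V^{4\alpha}_i$ that has positive $w'_i$-weight is a neighbor of some node of $I_i$, and $I_i\subseteq V^{4\alpha}_i$.

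Next I would relate $OPT(G^i_{w'_i})$ to the optimum of the induced subgraph on $V^{4\alpha}_i$ under the same weights. Let $I^*$ be an optimal independent set for $G^i_{w'_i}$; without loss of generality $I^*$ contains only nodes of positive $w'_i$-weight, hence only nodes of $V^{4\alpha}_i\cup N(I_i)$. Split $w'_i(I^*)=w'_i(I^*\cap V^{4\alpha}_i)+w'_i(I^*\setminus V^{4\alpha}_i)$. For the first term, $I^*\cap V^{4\alpha}_i$ is an independent set inside the induced subgraph on $V^{4\alpha}_i$ where $\mathcal{A}$ ran, and where $I_i$ is a $(1+\epsilon)(4\alpha)$-approximation; so $w'_i(I^*\cap V^{4\alpha}_i)\le 4\alpha(1+\epsilon)\,w'_i(I_i)$. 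For the second term, each $v\in I^*\setminus V^{4\alpha}_i$ satisfies $w'_i(v)=\sum_{u\in N(v)\cap I_i}w_i(u)$, so summing and swapping the order of summation gives $\sum_{v\in I^*\setminus V^{4\alpha}_i} w'_i(v)=\sum_{u\in I_i}w_i(u)\cdot|N(u)\cap(I^*\setminus V^{4\alpha}_i)|\le 4\alpha\sum_{u\in I_i}w_i(u)=4\alpha\,w'_i(I_i)$, using that each $u\in I_i\subseteq V^{4\alpha}_i$ has degree at most $4\alpha$ in $G^i$, and $w_i(u)=w'_i(u)$ for $u\in I_i$. Combining the two bounds yields $w'_i(I^*)\le \bigl(4\alpha(1+\epsilon)+4\alpha\bigr)w'_i(I_i)\le 8\alpha(1+\epsilon)\,w'_i(I_i)$, which is exactly the claimed approximation guarantee.

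The one subtlety — and the step I expect to be the main obstacle — is ensuring the degree bound $4\alpha$ is used correctly in the second term: the bound $|N(v)\cap(I^*\setminus V^{4\alpha}_i)|\le\deg_{G^i}(u)$ must instead be applied as $|N(u)\cap(\text{anything})|\le\deg_{G^i}(u)\le 4\alpha$ for $u\in I_i$, which is legitimate since $I_i\subseteq V^{4\alpha}_i$ consists of nodes of $G^i$-degree at most $4\alpha$. I would double-check that the double-counting is carried out with the summation indexed by $u\in I_i$ (so the degree bound bites) rather than by $v\in I^*$, since $v\notin V^{4\alpha}_i$ may have arbitrarily large degree. Everything else is the routine local-ratio bookkeeping already set up for Proposition~\ref{prop:IndPhase}.
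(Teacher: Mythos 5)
Your proof is correct and follows essentially the same route as the paper: both decompose the weight of an optimal solution into the part supported on $V^{4\alpha}_i$ (bounded via the $(1+\epsilon)4\alpha$-approximation guarantee of $\mathcal{A}$ on the induced low-degree subgraph) and the part on $V_i\setminus V^{4\alpha}_i$ (bounded by $4\alpha\, w'_i(I_i)$ using that every node of $I_i\subseteq V^{4\alpha}_i$ has degree at most $4\alpha$). The only cosmetic difference is that the paper phrases the second bound via the total weight $w'_i(V_i)$ and an add-and-subtract of $OPT(G^{4\alpha}_{w'_i})$, whereas you bound $w'_i(I^*)$ directly; your handling of the double counting indexed by $u\in I_i$ is exactly the right way to make the degree bound bite.
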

		\begin{proof}
			First, observe that, for every $v\in V^{4\alpha}_i$, $w'_i(v)=w_i(v)$, and for every $v\in V_i\setminus  V^{4\alpha}_i$, $w'_i(v)= \sum_{u\in N(v)\cap I_i} w_i(u)$. In total, we have that,
			\begin{align*}
			&w_i'(V_i)= w'_i(V^{4\alpha}_i)+ w_i'(V_i\setminus V^{4\alpha}_i)
			= w'_i(V^{4\alpha}_i)+\sum_{v\in V_i\setminus V^{4\alpha}_i} w'_i(N(v)\cap I_i)
			\leq w'_i(V^{4\alpha}_i)+ 4\alpha w'_i(I_i)
			\end{align*}
			where the last  inequality holds because each node in $I_i$ has at most $4\alpha$ neighbors in $V_i$, and $w_i(v)=w'_i(v)$, for any $v\in I_i$. 
			Let $G^{4\alpha}_{w'_i}=(V_i^{4\alpha},E(V_i^{4\alpha}),w'_i)$ be the subgraph of $G^i_{w'_i}$ induced by the nodes in $V^{4\alpha}_i$ with weight function $w'_i$. Let $OPT(G^{4\alpha}_{w'_i})$ be the value of an optimal solution in $G^{4\alpha}_{w'_i}$. Recall that $G^i_{w'_i}=(V_i,E_i,w'_i)$. Observe that since $w'_i(V_i)\leq w'_i(V^{4\alpha}_{i})+4\alpha  w'_i(I_i)$, 
			it follows that 
			\begin{align*}
			&OPT(G^i_{w'_i})\leq OPT(G^{4\alpha}_{w'_i})+4\alpha w'_i(I_i)\\
			&\Rightarrow \frac{OPT(G^i_{w'_i})-OPT(G^{4\alpha}_{w'_i})}{4\alpha}\leq w'_i(I_i)
			\end{align*}
			
			Finally, since $\mathcal{A}$ is a $(1+\epsilon)\Delta$-approximation algorithm, and since the maximum degree of $G^{4\alpha}_{w'_i}=G^{4\alpha}_{w_i}$ is at most $4\alpha$, it holds that $I_i$ is a $4(1+\epsilon)\alpha$-approximation for $G^{4\alpha}_{w'_i}$. Hence,
			\begin{align*}
			 &w'_i(I_i)\geq \frac{OPT(G^i_{w'_i})-OPT(G^{4\alpha}_{w'_i})}{4\alpha}\\
			& \Rightarrow 2 w'_i(I_i)\geq \frac{OPT(G^i_{w'_i})-OPT(G^{4\alpha}_{w'_i})+OPT(G^{4\alpha}_{w'_i})}{4\alpha(1+\epsilon)}\\
			&\Rightarrow w'_i(I_i)\geq \frac{OPT(G^i_{w'_i})}{8(1+\epsilon)\alpha}
			\end{align*}
			As desired. 
		\end{proof}
\begin{lemma}\label{arb:approx}
	$I$ is a $8(1+\epsilon)\alpha$-approximation to $OPT(G_w)$.
	\begin{proof}
		First, by Proposition~\ref{prop:arbRuntime}, $V_{\log n+2}=\emptyset$. Therefore, for every $v\in V$, there is a phase $i_v\in [\log n+1]$ for which $v\in \bigcup_{i=1}^{i_v}V_i$ and $v\notin V_{i_v+1}$. Hence,
		\begin{align*}
		w(v)=\sum_{i=1}^{i_v}w'_{i}(v)=\sum_{i\in [\log n+1]: i\in V_i} w'_i(v)
		\end{align*}
		In total, we have that, $w(V)=\sum_{v\in V}\sum_{i\in [\log n+1]: i\in V_i} w'_i(v)=\sum_{i=1}^{\log n+1} w'_i(V_i)$.
		Observe that the second stage in Algorithm~\ref{alg:arb} is identical to the second stage of Algorithm~\ref{alg:amplify}. Therefore, we can apply Proposition~\ref{prop:stack} from Section~\ref{sec:amplification} to argue that $w(I)\geq \sum_{i=1}^{\log n+1}w'_i(I_i)$.
		Hence, 
		\begin{align*}
		&w(I)\geq\sum_{i=1}^{\log n+1}w'_i(I_i)
		\geq \sum_{i=1}^{\log n+1}\frac{OPT(G^i_{w'_i})}{8(1+\epsilon)\alpha}
		\geq \frac{OPT(G_w)}{8(1+\epsilon)\alpha}
		\end{align*}
		as desired, where the second inequality follows by Proposition~\ref{prop:arbWeight}.
		\end{proof}
\end{lemma}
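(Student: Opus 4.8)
The plan is to graft a low-arboricity peeling step onto the local-ratio stack machinery behind Theorem~\ref{thm:amplification} and Algorithm~\ref{alg:amplify}, so that the black box $\mathcal{A}$ is never invoked on a graph of maximum degree close to $\Delta$. The enabling fact is Definition~\ref{def:arb}: a graph of arboricity $\alpha$ has at most $\alpha(n'-1)$ edges on any $n'$ vertices, hence average degree below $2\alpha$, so fewer than half of the vertices of any subgraph have degree $\ge 4\alpha$. Accordingly, $\mathcal{A}'$ runs in $\log n + 1$ phases: in phase $i$, with current graph $G^i_{w_i}$ (and $G^1_{w_1} = G_w$), let $V^{4\alpha}_i$ be the set of vertices of degree at most $4\alpha$, run $\mathcal{A}$ on the subgraph they induce to obtain an independent set $I_i$, push $I_i$ onto a stack, perform the local-ratio reduction that zeroes \emph{every} vertex of $V^{4\alpha}_i$ and subtracts $\sum_{u \in N(v)\cap I_i} w_i(u)$ from each remaining $v$, and move to $G^{i+1}_{w_{i+1}}$, the graph obtained by deleting all vertices whose weight has become $0$. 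Finally, pop the stack and build $I$ greedily, exactly as in the second stage of Algorithm~\ref{alg:amplify}.

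The running time is immediate once we observe $|V^{4\alpha}_i| \ge |V_i|/2$: since precisely those vertices are deleted, $|V_{i+1}| \le |V_i|/2$, so the graph is empty after $\log n + 1$ phases, each of which is one call to $\mathcal{A}$ ($T$ rounds) plus $O(1)$ local bookkeeping, for $O(T\log n)$ rounds total. A design point worth stressing is that $V_{i+1}$ retains only the \emph{strictly positive-weight} vertices rather than reusing the full vertex set; otherwise the same low-degree set would reappear and the recursion would not progress. This is legitimate because we may assume $\mathcal{A}$ never picks a non-positive-weight vertex.

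For the approximation, write $w'_i := w_i - w_{i+1}$. The core lemma (Proposition~\ref{prop:arbWeight}) is that $I_i$ is an $8(1+\epsilon)\alpha$-approximation to the maximum-weight independent set of $G^i_{w'_i} = (V_i, E_i, w'_i)$. To see this, unwind the reduction: $w'_i(v) = w_i(v)$ for $v \in V^{4\alpha}_i$ and $w'_i(v) = \sum_{u \in N(v)\cap I_i} w_i(u)$ otherwise; since each vertex of $I_i \subseteq V^{4\alpha}_i$ has at most $4\alpha$ neighbours, summing gives $w'_i(V_i) \le w'_i(V^{4\alpha}_i) + 4\alpha\, w'_i(I_i)$, hence $OPT(G^i_{w'_i}) \le OPT(G^{4\alpha}_{w'_i}) + 4\alpha\, w'_i(I_i)$, where $G^{4\alpha}_{w'_i}$ is the degree-$\le 4\alpha$ induced subgraph. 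On $G^{4\alpha}_{w'_i}$, $\mathcal{A}$ guarantees $w'_i(I_i) \ge OPT(G^{4\alpha}_{w'_i})/(4(1+\epsilon)\alpha)$; adding this to $w'_i(I_i) \ge (OPT(G^i_{w'_i}) - OPT(G^{4\alpha}_{w'_i}))/(4(1+\epsilon)\alpha)$ and dividing by $2$ yields $w'_i(I_i) \ge OPT(G^i_{w'_i})/(8(1+\epsilon)\alpha)$. Since the pop-out stage is identical to Algorithm~\ref{alg:amplify}, Proposition~\ref{prop:stack} applies verbatim and gives $w(I) \ge \sum_i w'_i(I_i)$. Telescoping, $w = \sum_i w'_i$ (each vertex is zeroed at some phase, so no residual weight survives), so any independent set $I^*$ of $G_w$ obeys $w(I^*) = \sum_i w'_i(I^* \cap V_i) \le \sum_i OPT(G^i_{w'_i})$; chaining, $w(I) \ge \sum_i w'_i(I_i) \ge \sum_i OPT(G^i_{w'_i})/(8(1+\epsilon)\alpha) \ge OPT(G_w)/(8(1+\epsilon)\alpha)$. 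Combined with Theorem~\ref{thm:Wmain}, this gives Theorem~\ref{thm:mainArb}.

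The step I expect to be the main obstacle is the factor-$2$ slack in Proposition~\ref{prop:arbWeight}: the optimum of $G^i_{w'_i}$ may place weight on the high-degree vertices that $\mathcal{A}$ never processed, and one must argue that this excess is absorbed into $4\alpha\, w'_i(I_i)$. This works \emph{only} because the reduction zeroes all of $V^{4\alpha}_i$, not just $I_i$, so every unit of residual weight surviving at a high-degree vertex was charged to some neighbour in $I_i$; pinning down the reduction rule exactly, and carefully tracking which vertices still carry weight in each $G^i_{w_i}$, is where the genuine care lies --- everything else is routine bookkeeping on top of Proposition~\ref{prop:stack} and the arboricity bound.
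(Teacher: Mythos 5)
Your proof is correct and follows essentially the same route as the paper: the per-phase guarantee of Proposition~\ref{prop:arbWeight}, the stack property of Proposition~\ref{prop:stack}, and the telescoping $w=\sum_i w'_i$ to chain everything together. Your justification of the final inequality $\sum_i OPT(G^i_{w'_i})\ge OPT(G_w)$ via $w(I^*)=\sum_i w'_i(I^*\cap V_i)\le \sum_i OPT(G^i_{w'_i})$ is in fact slightly more explicit than the paper's, which asserts that step without comment.
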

	\begin{algorithm}[H]\label{alg:arb}
	\SetAlgoLined
	\DontPrintSemicolon
	\KwData{a graph $G=(V,E,w)$}
	\KwResult{an independent set $I$}
	
	$I\gets \emptyset$\\
	$S\gets \emptyset$\tcp*{$S$ is the Stack}
	$V_1=V$\\
	$E_1=E$\\
	$w_1=w$\\

	\For{$i=1$ to $\log n+1$ phases}{
		Let $V^{4\alpha}_i=\{v\in V_i\mid deg(v)\leq 4\alpha\}$\\
		$E^{4\alpha}_i=\{\{u,v\}\in E\mid u,v\in V^{4\alpha}_i\}$\\
		Run $\mathcal{A}$ on $G^{4\alpha}_{w_i}=(V^{4\alpha}_i,E^{4\alpha}_i,w_i)$\\
		Let $I_i\gets \mathcal{A}(G^{4\alpha}_{w_i})$\\
		Insert all the nodes in $I_i$ into $S$\\
		$\forall v\in V_i:$\\
		\begin{align*}
		&w_{i+1}(v)=\begin{cases}
		0 & \mbox{if } v\in V^{4\alpha}_i\\
		w_i(v)-\sum_{u\in N(v)\cap I_i}w_i(u) & \mbox{otherwise}
		\end{cases}
		\end{align*}
		
		$V_{i+1}=\{u\in V_i\mid w_{i+1}(u)> 0\}$\\
		$E_{i+1}=\{\{u,v\}\in E\mid u,v\in V_{i+1}\}$\\
	}
	\For{$i=0$ to $\log n$ rounds}{
		
		Pop $I_{\log n+1-i}$ from $S$\\
		\For{$v\in I_{\log n+1-i}$}{  \If{$N(v)\cap I=\emptyset$}
			{add $v$ to $I$}
		}
	}
	\Return $I$\\
	\caption{$8(1+\epsilon)\alpha$-approx}
\end{algorithm}
\section{Lower Bound}\label{sec:LB}

In this section, we prove Theorem \ref{thm:LB}. We do this by a reduction to Naor's \cite{Naor91} lower bound (Theorem \ref{fact:mis-lb}). Our reduction is described by the following lemma:

\begin{lemma}\label{lem:reduction}
	Suppose that there exists a $T(n)$-round algorithm $\AIS(G)$ in the LOCAL model that outputs an independent set containing at least $n/(c\Delta)$ vertices in an $n$-vertex graph $G$ with probability at least $1 - p(n)$, where $p$ is a decreasing function. Then, for any integer $n_1$, there is an $O(cT(n_0n_1))$-round algorithm $\LMIS(C)$ in the LOCAL model that outputs a maximal independent set of an $n_0$-vertex cycle graph $C$ with probability at least $1 - n_0 p(n_1)$ as long as $n_1\ge n_0$.
\end{lemma}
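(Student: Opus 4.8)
The plan is to realize $\LMIS(C)$ by simulating $\AIS$ on a blown-up graph. Let $v_1,\dots,v_{n_0}$ be the cyclic order of $C$. I would build the \emph{cycle of cliques} $C_1$: replace each $v_i$ by a clique $D(v_i)$ on $n_1$ new vertices (with identifiers derived from that of $v_i$, hence distinct and of length $O(\log(n_0n_1))$), and join $D(v_i)$ to $D(v_{i+1})$ by a complete bipartite graph. Then $C_1$ has $N=n_0n_1$ vertices and maximum degree $\Delta=3n_1-1<3n_1$. Since the only cliques adjacent to $D(v_i)$ are $D(v_{i-1})$ and $D(v_{i+1})$, the node $v_i$ of $C$ can locally maintain the states of all $n_1$ vertices of $D(v_i)$, so one round of $\AIS$ on $C_1$ is carried out in one round on $C$ by exchanging these states with the two cyclic neighbours (this uses unbounded messages, and is the only place the LOCAL model is essential). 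Running $\AIS$ on $C_1$ for $T(n_0n_1)$ rounds yields an independent set $I_1$ of $C_1$; since $D(v_i)$ is a clique and $D(v_i)\cup D(v_{i+1})$ is a clique, $I_1$ meets each clique in at most one vertex and never meets two consecutive cliques, so declaring $v_i$ \emph{selected} exactly when $I_1\cap D(v_i)\neq\emptyset$ produces an independent set $I$ of $C$.

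The heart of the proof is to show that, with probability at least $1-n_0p(n_1)$, every two consecutive selected vertices of $C$ are within distance $L:=\Theta(cT(n_0n_1))$ of one another. One may assume $2<L<n_0$, since if $L\ge n_0$ then $n_0=O(cT(n_0n_1))$ and a maximal independent set of $C$ is computed directly by gathering the whole cycle in $O(n_0)$ rounds. Fix an arc of $L$ consecutive cliques $W=(D(v_i),\dots,D(v_{i+L-1}))$ and call $D(v_j)$ \emph{interior to $W$} when $i+T\le j\le i+L-1-T$. For an interior vertex $x\in D(v_j)$, every vertex within distance $T$ of $x$ lies in a clique $D(v_k)$ with $|j-k|\le T$, hence inside $W$; and because $L<n_0$ there is no wrap-around edge, so the radius-$T$ ball of $x$ is the same vertex-labelled graph whether read off from $C_1$ or from the standalone path-of-cliques $P_L$ induced on $W$. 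Consequently the restriction of $\AIS(C_1)$ to the interior cliques is distributed exactly as the restriction of $\AIS(P_L)$ to them. Applying the size guarantee to the $Ln_1$-vertex graph $P_L$: with probability at least $1-p(Ln_1)\ge 1-p(n_1)$ the output of $\AIS(P_L)$ has at least $Ln_1/(c\Delta)\ge L/(3c)$ vertices; choosing $L$ with $L/(3c)>2T$, more independent-set vertices appear than the $2T$ boundary cliques can hold, so some interior clique of $W$ is hit. Union-bounding this failure event over the $n_0$ choices of $W$, with probability at least $1-n_0p(n_1)$ every $L-2T=\Theta(cT(n_0n_1))$ consecutive cliques contain a selected vertex, i.e.\ $I$ has no gap longer than $L$.

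It remains to complete $I$ to a maximal independent set of $C$ deterministically. Between consecutive selected vertices $v_a$ and $v_b$ (clockwise), the two vertices $v_{a+1}$ and $v_{b-1}$ are blocked by their selected neighbours, and on the remaining sub-path $v_{a+2},\dots,v_{b-2}$, of length $O(cT(n_0n_1))$, every vertex can learn the whole sub-path and its endpoints' selected neighbours within $O(cT(n_0n_1))$ rounds and then add itself according to a fixed canonical maximal independent set of that sub-path. The union of $I$ with all these additions is independent (no two selected vertices are adjacent; the additions live in disjoint open sub-paths separated by blocked vertices; additions avoid $v_{a+1},v_{b-1}$) and maximal (every unselected, unadded vertex is adjacent to a selected vertex or to an addition within its own gap). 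The total cost is $T(n_0n_1)$ for the simulation plus $O(cT(n_0n_1))$ for learning and filling gaps, so $O(cT(n_0n_1))$; the only randomness used is that of $\AIS(C_1)$, hence the algorithm succeeds with probability at least $1-n_0p(n_1)$, as claimed.

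I expect the main obstacle to be the locality argument in the second paragraph: one must be careful that the radius-$T$ neighbourhood of an interior clique is \emph{literally} the same vertex-labelled graph in $C_1$ and in $P_L$ — so that the two conditional output distributions coincide and the size guarantee for $P_L$ transfers to $C_1$ — and one must fix $L=\Theta(cT)$ large enough that the $\ge L/(3c)$ guaranteed independent-set vertices cannot all hide among the $2T$ boundary cliques. The blow-up simulation, the clique-to-cycle correspondence, and the deterministic gap-filling are then routine.
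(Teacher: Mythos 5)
Your proposal is correct and follows essentially the same route as the paper: simulate $\AIS$ on the cycle of cliques $C_1$, use a radius-$T$ indistinguishability argument plus the size guarantee on a short path of cliques to show every $\Theta(cT)$-window of cliques is hit (the paper's $C_v$ is exactly your $P_L$ centered at a vertex, with the same ``independent set has at most one vertex per clique, so the $O(T)$ boundary cliques cannot absorb the guaranteed $\Omega(T)$ output'' counting), union-bound over $n_0$ windows, map back to $C$, and fill the $O(cT)$-length gaps with a canonical MIS.
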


Let $C$ be a cycle graph of $n_0$ nodes. Algorithm $\LMIS(C)$ runs $\mathcal{A}$ on a graph $C_1$ which is a cycle of cliques, and will be formally defined shortly. After $\mathcal{A}$ finds an independent set of size $n/c\Delta$ in $C_1$, it uses the resulting independent set to find a maximal independent set in $C$. We now formally define the graph $C_1$. For an $n_0$-vertex cycle $C$ consisting of vertices $u_1,u_2,\hdots,u_{n_0}$ in that order, let $C_1$ be a graph on $n_0n_1$ vertices $\{\{v_{ij}\}_{j=1}^{n_1}\}_{i=1}^{n_0}$. There is an edge between two vertices $v_{ij},v_{i'j'}$ in $C_1$ if and only if $|i' - i|\le 1$ or $i' = n_0$ and $i = 1$. The ID of a vertex $v_{ij}$ in $C_1$ is the concatenation of the ID for $u_i$ in $C$ and the number $j$. Notice that these IDs have length at most $\log (n_0n_1)$. $C_1$ is a cycle of cliques, with a biclique between two adjacent cliques. This graph is depicted in Figure 1.

\begin{figure}\label{fig:clique-cycle}
\begin{center}
\includegraphics[width=\linewidth,height=18cm]{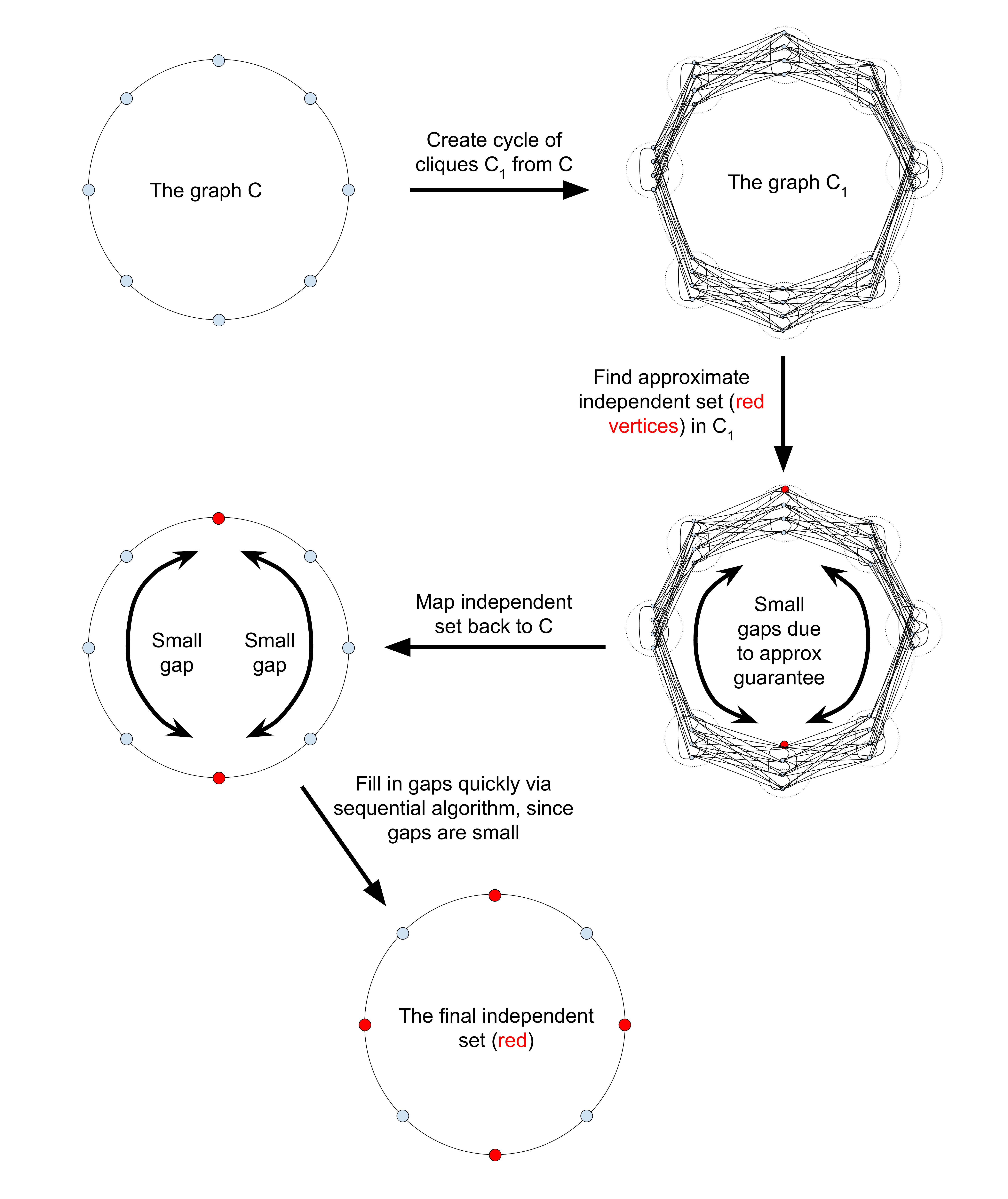}
\end{center}
\caption{An illustration for our reduction to prove the lower bound. To find a maximal independent set in a cycle $C$, the nodes run an approximate-$\MaxIS$ algorithm to find an independent set $I_1$ in $C_1$, which is obtained from $C$ as follows. Each node $v\in C$ is replaced by a large clique of size $\approx 2^{|C|}$, denoted by $D(v)$, where every two adjacent cliques are connected by a bi-clique. Using the independent set $I_1$ in $C_1$, the nodes map it to find an independent set $I$ in $C$, as follows. Every $v\in C$ joins $I$ if and only if $D(v)$ contains a node in $I_1$. Due to the approximation guarantee, the gap between any two nodes in $I_1$ in $C_1$ is small, and therefore the gap between any two nodes in $I$ in $C$ is also small. Finally, the nodes run a greedy $\MIS$ algorithm to ``fill in" the gaps, and find an $\MIS$ in $C$.}
\end{figure}

$\LMIS(C)$ starts by computing an independent set $I_1$ using $\AIS(C_1)$. This randomized LOCAL algorithm can be implemented in the LOCAL model on $C$, with each vertex $u_i$ simulating all of the actions performed by $\AIS(C_1)$ on the vertices $\{v_{ij}\}_{j=1}^{n_1}$. The set $I_1$ maps to an independent set $I$ in $C$ since each $u_i$ maps to a clique in $C_1$. Because the $T(n_0n_1)$-neighborhood of each vertex in $C_1$ is a $\ge n_1$-vertex graph and the algorithm $\AIS$ is distributed, a $O(T(n_0n_1))$-neighborhood of a vertex in $C$ must contain a vertex in $I$ with probability at least $1 - p(n_1)$. By a union bound over all cliques in $C_1$, the distance between any two consecutive vertices in $I$ along the cycle is at most $O(T(n_0n_1))$ with probability at least $1 - n_0 p(n_1)$. Therefore, all connected components of $C\setminus I$ have size at most $O(T(n_0n_1))$, so sequentially finding a maximal independent set in each component simultaneously takes $O(T(n_0n_1))$ time to extend $I$ to an MIS for $C$.

In the proof of Lemma \ref{lem:reduction}, we crucially exploit two properties of the algorithm $\AIS$ that follow from its correctness:

\begin{enumerate}
	\item $\AIS$ is \emph{globally consistent} in the sense that $\AIS(G)$ returns an \emph{independent set} of $G$ with high probability (see Proposition \ref{prop:max-correct}).
	\item $\AIS$ is \emph{locally present} in the sense that a $O(cT(n_0n_1))$-neighborhood of any vertex intersects $\AIS(C_1)$ with high probability (see Proposition \ref{prop:ngbrhood}).
\end{enumerate}

Notice that the second property does not hold for one round of Boppanna's algorithm and that the first property does not hold for a $o(\log^* n)$-time greedy algorithm.

We now implement the algorithm $\LMIS$, which (1) calls $\AIS$ on $C_1$, (2) maps the found independent set back to $C$ and (3) finds a maximal independent set in the connected components between consecutive independent set vertices:

\begin{algorithm}[H]
	\SetAlgoLined
	\DontPrintSemicolon
	\KwData{an $n_0$-vertex cycle graph $C$}
	\KwResult{an maximal independent set $S$ of $C$}
	
	$S\gets \emptyset$\;
	
	$I_1\gets \AIS(C_1)$ (implemented in LOCAL model on $C$ as stated in Proposition \ref{prop:mis-runtime} proof) \tcp*{step (1)}\;
	
	$I\gets \{u_i\in V(C) \text{ for which there exists $j$ with } v_{ij}\in I_1\}$\; \tcp*{step (2)}
	
	Add $I$ to $S$\;
	
	$J\gets \{u\in V(C): u\in I \text{ or $u$ is adjacent to a vertex in $I$}\}$\;
	
	$C_2\gets C\setminus J$\;
	
	\For{each connected component $D$ of $C_2$ in parallel}{
		
		Add a fixed maximal independent set of $D$ to $S$\; \tcp*{step (3)}
		
	}
	
	\Return $S$\;
	
	\caption{$\LMIS(C)$}
\end{algorithm}

We start by showing the correctness of the algorithm.

\begin{lemma}\label{prop:max-correct}
	$\LMIS(C)$ outputs a maximal independent set of the $n_0$-vertex cycle graph $C$ with probability at least $1 - p(n_1)$.
\end{lemma}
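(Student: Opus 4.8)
The plan is to show that $\LMIS(C)$ outputs a \emph{correct maximal independent set} of $C$ whenever the underlying call to $\AIS(C_1)$ succeeds, which happens with probability at least $1 - p(n_1)$ by the success guarantee of $\AIS$ on the $n_0 n_1$-vertex graph $C_1$ (note $n_0 n_1 \ge n_1$, so $p(n_0 n_1) \le p(n_1)$ since $p$ is decreasing; actually the cleaner statement is that $\AIS$ run on $C_1$ succeeds except with probability $p(n_0n_1) \le p(n_1)$). The strategy is therefore to condition on the event that $\AIS(C_1)$ returns a genuine independent set $I_1$ of $C_1$ of size at least $(n_0 n_1)/(c\Delta_{C_1})$, and argue deterministically from there.

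First I would verify that the set $I \subseteq V(C)$ computed in step (2) is an independent set of $C$. Suppose $u_i, u_{i'} \in I$ with $u_i$ adjacent to $u_{i'}$ in $C$, i.e.\ $|i - i'| \le 1$ cyclically. By definition of $I$ there are $j, j'$ with $v_{ij}, v_{i'j'} \in I_1$. But by the construction of $C_1$, any two vertices $v_{ij}, v_{i'j'}$ with $|i-i'|\le 1$ (cyclically) are adjacent in $C_1$ — this includes the case $i = i'$ (same clique) as well as adjacent cliques (the biclique). This contradicts $I_1$ being independent in $C_1$. Hence $I$ is independent in $C$. Next, in step (3), for each connected component $D$ of $C_2 = C \setminus J$ the algorithm adds a maximal independent set of $D$; since the components $D$ are pairwise non-adjacent in $C$ (every vertex adjacent to $I$ was removed into $J$, so no edge of $C$ joins $I$ to $C_2$, and distinct components of $C_2$ are non-adjacent by definition of connected component), the union $S = I \cup \bigcup_D \mathrm{MIS}(D)$ is an independent set of $C$.

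Then I would check maximality of $S$. Take any $u \in V(C)$. Either $u \in J$ or $u \in V(C_2)$. If $u \in J$ then either $u \in I \subseteq S$, or $u$ is adjacent to some vertex of $I \subseteq S$ — in both cases $u$ cannot be added to $S$. If $u \in V(C_2)$, it lies in some component $D$, and since the algorithm added a \emph{maximal} independent set of $D$ to $S$, either $u$ is in it or $u$ has a neighbor within $D$ that is in it; either way $u$ is blocked. Hence $S$ is maximal. This establishes that conditioned on $\AIS(C_1)$ returning an independent set, $\LMIS(C)$ returns a maximal independent set of $C$, and the failure probability is at most $p(n_1)$.

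The main obstacle — though it is more a bookkeeping point than a deep one — is making the conditioning precise: I must be careful that the "success" event of $\AIS$ I rely on for \emph{this} lemma is only the global-consistency property (that $I_1$ is actually an independent set of $C_1$), not the size guarantee; the size guarantee is what gets used later in Proposition~\ref{prop:ngbrhood} to bound the runtime, not here in the correctness proof. So the clean way to phrase it is: by Proposition~\ref{prop:max-correct}'s hypothesis on $\AIS$, with probability at least $1 - p(n_0 n_1) \ge 1 - p(n_1)$ the output $I_1$ is an independent set of $C_1$, and on that event the deterministic argument above shows $S$ is a maximal independent set of $C$. I would also note in passing that $C_2$, being an induced subgraph of the cycle $C$, is a disjoint union of paths, so "a fixed maximal independent set of $D$" is well-defined and can be computed locally once component sizes are bounded — but that bound belongs to the runtime proposition, not to this correctness lemma.
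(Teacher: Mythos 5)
Your proposal is correct and follows essentially the same route as the paper's proof: condition on $\AIS(C_1)$ returning a genuine independent set (probability at least $1-p(n_0n_1)\ge 1-p(n_1)$), deduce that $I$ is independent in $C$ via the biclique structure of $C_1$, and then verify independence and maximality of $S$ by the case split $u\in J$ versus $u\in V(C_2)$. Your added remark that only the global-consistency property of $\AIS$ (and not the size guarantee) is used here is a fair clarification, consistent with the paper deferring the size guarantee to the runtime analysis.
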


\begin{proof}
	By definition of the algorithm $\AIS$, $I_1$ is an independent set of $C_1$ with probability at least $1 - p(n_0n_1)\ge 1 - p(n_1)$. For the rest of the proof, assume that $I_1$ is an independent set $C_1$ (the complement happens with probability at most $p(n_1)$). We now show that $I$ is an independent set. Suppose, instead, that there exist adjacent $u_i,u_{i+1}\in I$. By definition of $I$, there exist vertices $v_{ij},v_{(i+1)j'}\in I_1$. By construction of $C_1$, $v_{ij}$ and $v_{(i+1)j'}$ are adjacent vertices in $C_1$, a contradiction to the fact that $I_1$ is an independent set in $C_1$. Therefore, $I$ must be an independent set in $C$.
	
	No vertices in $C_2$ are adjacent to vertices in $I$ within $C$ by definition of $C_2$. Therefore, $S$ is an independent set in $C$. Furthermore, each vertex in $J$ is adjacent to a vertex in $I$, while each vertex in $V(C)\setminus J$ is adjacent to a vertex in the maximal independent set computed for $C_2$. Therefore, $S$ is also maximal at the end of the algorithm. Therefore, $S$ is a maximal independent set if $I$ is an independent set, which happens with probability at least $1 - p(n_1)$, as desired.
\end{proof}

The rest of the analysis focuses on the runtime. To bound the runtime, we need to exploit the fact that $\AIS$ is a distributed algorithm to show that the independent set returned has small gaps with high probability. This is shown by using the fact that $\AIS$, in the neighborhood of a vertex $v$, cannot distinguish between $C_1$ and an $O(T(n_0n_1))$-length cycle of cliques containing $v$. We formalize this in Proposition \ref{prop:ngbrhood-1}. Let $R_{\text{large}} = (100c + 1)T(n_0n_1) + 2$ and $R_{\text{small}} = 100cT(n_0n_1)$. Let $L_v$ denote the set of vertices $u\in V(C_1)$ for which the distance from $u$ to $v$ is at most $R_{\text{large}}$. Let $S_v$ denote the set of vertices $u\in V(C_1)$ for which the distance from $u$ to $v$ is at most $R_{\text{small}}$. Let $C_v$ denote the induced subgraph of $C_1$ with respect to the set of vertices $L_v$. Recall that $\AIS$ is a randomized algorithm that outputs a distribution over sets of vertices in the input graph. We now show the following property of this distribution:

\begin{proposition}\label{prop:ngbrhood-1}
	For any vertex $v\in V(C_1)$, $S_v\cap \AIS(C_1)$ has the same distribution as $S_v\cap \AIS(C_v)$.
\end{proposition}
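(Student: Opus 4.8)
The plan is to prove this via a standard indistinguishability (locality) argument for distributed algorithms. The key observation is that $\AIS$ runs in $T(n_0 n_1)$ rounds on a graph with $n_0 n_1$ vertices, so when it is executed at a vertex $u \in S_v$, its output bit (whether $u \in \AIS$) depends only on the $T(n_0n_1)$-radius ball around $u$ together with the random bits of vertices in that ball. Since every vertex of $S_v$ lies within distance $R_{\text{small}} = 100cT(n_0n_1)$ of $v$, and since $R_{\text{large}} = (100c+1)T(n_0n_1) + 2$, the $T(n_0n_1)$-radius ball around any $u \in S_v$ is contained in $L_v$; indeed, $R_{\text{small}} + T(n_0n_1) = 100cT(n_0n_1) + T(n_0n_1) < R_{\text{large}}$, so the closed ball $B_{T(n_0n_1)}(u) \subseteq L_v = V(C_v)$. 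Crucially, the \emph{induced subgraph} structure on this ball is identical in $C_1$ and in $C_v$, because $C_v$ is precisely the subgraph induced by $L_v$ and the ball $B_{T(n_0n_1)}(u)$ together with all its induced edges lies entirely inside $L_v$ (no edge of the ball leaves $L_v$, since a leaving edge would connect a vertex at distance $\le R_{\text{small}} + T(n_0n_1) - 1 < R_{\text{large}}$ from $v$ to one at distance exactly that, still inside $L_v$). Hence an $r$-round distributed algorithm at $u$ with $r = T(n_0n_1)$ sees the same local view in both graphs.

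The concrete steps I would carry out are: (1) Model $\AIS$ as a function that, given each vertex its ID, its random string, and its $T(n_0n_1)$-hop neighborhood topology (with IDs), deterministically decides membership; this is the standard normal form for LOCAL algorithms. (2) Fix a coupling of the randomness: let each vertex $u \in L_v$ use the \emph{same} random string in the execution of $\AIS(C_1)$ and in the execution of $\AIS(C_v)$. Since these strings are i.i.d. in both runs, this is a valid coupling of the two executions restricted to $L_v$. (3) Argue that under this coupling, for every $u \in S_v$, the event $u \in \AIS(C_1)$ coincides with $u \in \AIS(C_v)$: both are determined by the view $B_{T(n_0n_1)}(u)$ (topology, IDs, randomness), which is identical in the two graphs by the containment argument above — note in particular that the IDs of vertices $v_{ij}$ in $C_1$ are inherited unchanged in $C_v$. (4) Conclude that $S_v \cap \AIS(C_1) = S_v \cap \AIS(C_v)$ holds pointwise under the coupling, hence the two random sets have the same distribution.

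The main obstacle — really the only place requiring care — is the arithmetic verifying that the $T(n_0n_1)$-ball around every $u \in S_v$, \emph{together with all induced edges}, stays within $L_v$, so that the local view truly agrees. One has to be slightly careful about the ``$+2$'' slack in $R_{\text{large}}$ and about edges versus vertices: a vertex $u$ at distance $\le R_{\text{small}}$ from $v$ has every neighbor at distance $\le R_{\text{small}}+1$, and every vertex within $T(n_0n_1)$ hops at distance $\le R_{\text{small}} + T(n_0n_1) = (100c+1)T(n_0n_1) \le R_{\text{large}} - 2 < R_{\text{large}}$, so the entire ball is in $L_v$; moreover any edge incident to the ball connects two vertices both at distance $< R_{\text{large}}$, so it is present in the induced subgraph $C_v$. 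Once this containment is pinned down, the distributional equality is immediate from the locality of $\AIS$ and the coupling of randomness, and no further work (in particular no concentration or probabilistic estimate) is needed for this particular proposition.
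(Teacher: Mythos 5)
Your proposal is correct and follows essentially the same route as the paper's proof: both argue that membership of each $u\in S_v$ in the output is determined by the randomness, IDs, and edges within the $T(n_0n_1)$-ball around $u$, and that the union of these balls (and their incident edges) lies within distance $R_{\text{small}}+T(n_0n_1)=R_{\text{large}}-2$ of $v$, hence inside $C_v$. Your explicit coupling of the random strings and the edge-containment check are just slightly more spelled-out versions of the same locality argument the paper uses.
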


\begin{proof}
	For any vertex $u\in V(C_1)$, let $f_u(C_1) = \textbf{1}_{u\in \AIS(C_1)}$; that is, the indicator function of $u$'s presence in the independent set $\AIS(C_1)$. Let $U_u$ be the set of vertices in $C_1$ with distance at most $T(n_0n_1)$ from $u$. Since $\AIS$ is a $T(n_0n_1)$-round algorithm in the LOCAL model, $f_u$ is only a function of the randomness, IDs, and edges incident with vertices in $U_u$ for any $u\in V(C_1)$. By definition of $f_u$,
	
	$$S_v\cap \AIS(C_1) = \{u\in S_v: f_u(C_1) = 1\}$$
	The set $S_v\cap \AIS(C_1)$ is therefore only a function of the randomness, IDs, and edges incident with vertices in $\cup_{u\in S_v} U_u$. All of this information is contained in the graph $C_v$, since any vertex in the set $\cup_{u\in S_v} U_u$ is within distance $R_{\text{small}} + T(n_0n_1) = R_{\text{large}} - 2$ of $v$. Therefore, $S_v\cap \AIS(C_1)$ is only a function of randomness, IDs, and edges in the graph $C_v$, which means that $S_v\cap \AIS(C_1)$ is identically distributed to $S_v\cap \AIS(C_v)$, as desired.
\end{proof}

As a result, to show that $S_v$ contains a vertex of the independent set with high enough probability, it suffices to think about $C_v$ instead of $C_1$. In this proposition, we exploit the fact that $\AIS$ returns an independent set with size at least $n/(c\Delta)$ on $n$-vertex graphs with maximum degree higher than $\Omega(n/\log^* n)$:

\begin{proposition}\label{prop:ngbrhood-2}
	For any vertex $v\in V(C_1)$, $S_v\cap \AIS(C_1)\ne \emptyset$ with probability at least $1 - p(n_1)$.
\end{proposition}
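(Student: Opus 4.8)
The plan is to pass from the real run $\AIS(C_1)$ to the hypothetical run $\AIS(C_v)$ and then exploit the clique structure of $C_v$ via a counting argument. By Proposition~\ref{prop:ngbrhood-1}, $S_v\cap\AIS(C_1)$ and $S_v\cap\AIS(C_v)$ are identically distributed, so it suffices to show $\Pr[\AIS(C_v)\cap S_v\ne\emptyset]\ge 1-p(n_1)$. The advantage of working with $C_v$ is that it is an explicit small graph: assuming $2R_{\text{large}}+1<n_0$ (the wraparound case $2R_{\text{large}}+1\ge n_0$, where $C_v=C_1$, is handled the same way, see below), $C_v$ is a path of $2R_{\text{large}}+1$ cliques of size $n_1$ each, with a biclique joining consecutive cliques, since a vertex in the $k$-th clique away from $v$'s clique is at distance exactly $k$ from $v$ for $k\ge 1$ and distance $\le 1$ for $k=0$. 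Hence $|V(C_v)|=(2R_{\text{large}}+1)n_1$ and $\Delta(C_v)\le 3n_1-1<3n_1$; the inherited identifiers have length $\le\log(n_0n_1)\le 2\log n_1\le 2\log|V(C_v)|$ since $n_1\ge n_0$, so $\AIS$'s guarantee is being invoked within its intended identifier range.

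Next I would apply the guarantee of $\AIS$: with probability at least $1-p(|V(C_v)|)\ge 1-p(n_1)$ (using that $p$ is decreasing and $|V(C_v)|\ge n_1$), the set $I':=\AIS(C_v)$ is an independent set of $C_v$ with $|I'|\ge|V(C_v)|/(c\,\Delta(C_v))>(2R_{\text{large}}+1)/(3c)>2R_{\text{large}}/(3c)$. Condition on this event. The only structural fact needed is that an independent set of $C_v$ contains at most one vertex from each of the $2R_{\text{large}}+1$ cliques (a clique is a complete subgraph). Now $S_v$ is exactly the union of the $2R_{\text{small}}+1$ central cliques, so if $I'$ were disjoint from $S_v$ it would be supported on the remaining $(2R_{\text{large}}+1)-(2R_{\text{small}}+1)=2(R_{\text{large}}-R_{\text{small}})$ cliques, forcing $|I'|\le 2(R_{\text{large}}-R_{\text{small}})$.

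To finish I would check that the constants contradict this. Since $R_{\text{large}}=(100c+1)T(n_0n_1)+2$ and $R_{\text{small}}=100cT(n_0n_1)$, we get $R_{\text{large}}-R_{\text{small}}=T(n_0n_1)+2$, while $2R_{\text{large}}/(3c)\ge 200T(n_0n_1)/3$, and $200T/3>2(T+2)$ for every $T\ge 1$. So $|I'|>2(R_{\text{large}}-R_{\text{small}})$, contradicting $I'\cap S_v=\emptyset$; hence $I'\cap S_v\ne\emptyset$ on the good event, which gives $\Pr[\AIS(C_v)\cap S_v\ne\emptyset]\ge 1-p(n_1)$, and the proposition follows from Proposition~\ref{prop:ngbrhood-1}. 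For the wraparound case $C_v=C_1$: if also $2R_{\text{small}}+1\ge n_0$ then $S_v=V(C_1)$ and we are done because $\AIS(C_1)$ is a nonempty independent set with high probability; otherwise $n_0>2R_{\text{small}}+1=200cT(n_0n_1)+1\ge 6c(T(n_0n_1)+2)$ for $T(n_0n_1)\ge 1$, and the same counting argument applied to the cycle of $n_0$ cliques yields the contradiction. The case $T(n_0n_1)=0$ is immediate.

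The main obstacle is not conceptual but arithmetic bookkeeping: correctly matching the distance balls $S_v\subseteq L_v$ to contiguous blocks of cliques of $C_1$, and verifying the two-sided estimate $2(R_{\text{large}}-R_{\text{small}})<2R_{\text{large}}/(3c)\le|I'|$ with the specific choices of $R_{\text{large}},R_{\text{small}}$ (the constant $100c$ is tuned precisely so that the annulus $L_v\setminus S_v$ is much thinner than a $1/c$ fraction of the radius). The one genuinely load-bearing idea is that high degree helps the adversary here: a single clique of size $n_1$ can host at most one independent-set vertex, so a large independent set is forced to spread over many cliques and therefore into $S_v$.
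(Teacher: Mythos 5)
Your proof is correct and follows essentially the same route as the paper's: reduce to $\AIS(C_v)$ via Proposition~\ref{prop:ngbrhood-1}, lower-bound $|\AIS(C_v)|$ by the output guarantee, and observe that the annulus $L_v\setminus S_v$ consists of only $2(R_{\text{large}}-R_{\text{small}})$ cliques, each of which can host at most one independent-set vertex. Your explicit treatment of the wraparound and $T=0$ edge cases is slightly more careful than the paper's, but the argument is the same.
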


\begin{proof}
	By Proposition \ref{prop:ngbrhood-1}, $S_v\cap \AIS(C_v)$ is identically distributed to $S_v\cap \AIS(C_1)$, so it suffices to lower bound the probability that $S_v\cap \AIS(C_v)$ is empty. Let $I_v := \AIS(C_v)$. The maximum degree of vertices in $C_v$ is $3n_1$. Furthermore, $|V(C_v)| = (2R_{\text{large}} + 1)n_1 \ge 200cT(n_0n_1)n_1$. By the output guarantee of $\AIS$, $I_v$ is an independent set and $|I_v|\ge |V(C_v)|/(c (3n_1))\ge 60T(n_0n_1)$ with probability at least $1 - p(n_1)$. The vertices on $V(C_v)\setminus S_v$ are a union of $2(R_{\text{large}} - R_{\text{small}}) \le 4T(n_0n_1)$ cliques on $n_1$ vertices. Therefore, since $I_v$ is an independent set, $|I_v\cap (V(C_v)\setminus S_v)|\le 4T(n_0n_1)$. Therefore, $|I_v\cap S_v| \ge 60T(n_0n_1) - 4T(n_0n_1) > 0$ with probability at least $1 - p(n_1)$, as desired.
\end{proof}

Now, we union bound to prove the desired property for all intervals with width $2R_{\text{small}}$:

\begin{proposition}\label{prop:ngbrhood}
	Let $I$ be the output of $\AIS(C_1)$. With probability at least $1 - n_0 p(n_1)$, $S_v\cap I\ne\emptyset$ for all $v\in V(C_1)$.
\end{proposition}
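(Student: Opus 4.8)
The plan is to reduce the union bound over all $n_0 n_1$ vertices of $C_1$ to a union bound over only $n_0$ events, one per clique. The crucial observation is that the ball $S_v$ of radius $R_{\text{small}}$ around $v$ in $C_1$ depends only on the index $i$ of the clique $D(u_i) = \{v_{ij}\}_{j=1}^{n_1}$ containing $v$, and not on which vertex of that clique $v$ is. First I would verify the relevant distance computation in $C_1$: for $v = v_{ij}$ and $v' = v_{i'j'}$, the distance between $v$ and $v'$ in $C_1$ equals the cyclic distance between $i$ and $i'$ in $C$ when $i\ne i'$ (each step from one clique to an adjacent one costs exactly one edge, by the biclique structure), and it equals $1$ when $i = i'$ and $j\ne j'$. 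In either case it is independent of $j$ and $j'$. Since $R_{\text{small}}\ge 1$, this yields
$$
S_{v_{ij}} = \bigcup_{i' \,:\, \mathrm{dist}_C(i,i') \le R_{\text{small}}} D(u_{i'})
$$
for every $j$, so there are at most $n_0$ distinct sets among $\{S_v : v\in V(C_1)\}$.

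Next I would fix one representative vertex $v^{(i)}$ in each clique $D(u_i)$ for $i=1,\hdots,n_0$. By the observation above, the bad event $\{\,\exists\, v\in V(C_1): S_v\cap I=\emptyset\,\}$ coincides with the event $\{\,\exists\, i\in[n_0]: S_{v^{(i)}}\cap I=\emptyset\,\}$. Applying Proposition~\ref{prop:ngbrhood-2} to each representative $v^{(i)}$ gives $\Pr[S_{v^{(i)}}\cap I = \emptyset]\le p(n_1)$, and a union bound over $i\in[n_0]$ yields
$$
\Pr\big[\exists\, v\in V(C_1): S_v\cap I=\emptyset\big] \;\le\; n_0\,p(n_1).
$$
Taking complements gives the claimed bound $\Pr[\,S_v\cap I\ne\emptyset \text{ for all } v\in V(C_1)\,]\ge 1 - n_0 p(n_1)$.

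The only step requiring any care is the distance argument showing that $S_v$ is determined by the clique index of $v$; everything else is a direct union bound on top of Proposition~\ref{prop:ngbrhood-2}. In particular, the whole point is that the number of distinct radius-$R_{\text{small}}$ balls in $C_1$ is $n_0$ rather than $n_0 n_1$, which is exactly the saving needed to obtain the $1 - n_0 p(n_1)$ guarantee (a naive per-vertex union bound would only give $1 - n_0 n_1 p(n_1)$, too weak for Lemma~\ref{lem:reduction}). I do not expect any genuine obstacle here; the biclique structure of $C_1$ makes the distance computation routine.
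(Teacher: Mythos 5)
Your proposal is correct and follows essentially the same route as the paper's proof: the paper likewise observes that $S_{v_{ij}} = S_{v_{i1}}$ for all $j$ and applies Proposition~\ref{prop:ngbrhood-2} with a union bound over the $n_0$ clique indices. Your explicit verification of the distance computation in $C_1$ is a welcome (if routine) elaboration of the step the paper states without proof.
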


\begin{proof}
	By Proposition \ref{prop:ngbrhood-2} and a union bound over all $i\in \{1,2,\hdots,n_0\}$, $S_{v_{i1}}\cap I\ne\emptyset$ for all $i\in \{1,2,\hdots,n_0\}$ with probability at least $1 - n_0 p(n_1)$. For any $j\in \{1,2,\hdots,n_1\}$, $S_{v_{ij}} = S_{v_{i1}}$. Since every vertex in $C_1$ is equal to $v_{ij}$ for some $i\in \{1,2,\hdots,n_0\}$ and $j\in \{1,2,\hdots,n_1\}$, $S_v\cap I\ne\emptyset$ for all $v\in V(C_1)$ with probability at least $1 - n_0 p(n_1)$, as desired.
\end{proof}

We now prove a runtime bound:

\begin{proposition}\label{prop:mis-runtime}
	Given an $n_0$-vertex cycle graph $C$, $\LMIS(C)$ runs in $O(T(n_0n_1))$ time with probability at least $1 - n_0 p(n_1)$.
\end{proposition}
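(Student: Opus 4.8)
The plan is to bound the round complexity of each of the three stages of $\LMIS(C)$ separately, measuring rounds on $C$ (not on $C_1$), and to observe that the only place where randomness enters the \emph{runtime} is stage (3), which is short on the high-probability event already supplied by Proposition~\ref{prop:ngbrhood}.

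First I would handle stage (1): simulating $\AIS(C_1)$ on $C$, with $u_i$ responsible for the clique $D(u_i)=\{v_{ij}\}_{j=1}^{n_1}$, costs only $T(n_0n_1)$ rounds on $C$. The key observation is that the metric of $C_1$ collapses onto that of $C$: for $i\ne i'$, every path in $C_1$ from a vertex of $D(u_i)$ to a vertex of $D(u_{i'})$ crosses one clique boundary per step, so $\mathrm{dist}_{C_1}(v_{ij},v_{i'j'})$ equals the cyclic distance between $u_i$ and $u_{i'}$ in $C$, while for $i=i'$ the distance is at most $1$. Hence the radius-$T(n_0n_1)$ ball of any $v_{ij}$ in $C_1$ is contained in $\bigcup_{\mathrm{dist}_C(u_i,u_{i'})\le T(n_0n_1)}D(u_{i'})$, which touches at most $2T(n_0n_1)+1$ cliques. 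Since the IDs and the clique/biclique edge set of this ball are determined by the construction, $u_i$ needs only the private randomness of those nearby cliques; in the LOCAL model (unbounded messages) these are collected in $T(n_0n_1)$ rounds, after which $u_i$ computes $\AIS(C_1)$ on each of its vertices locally. Stage (2) is purely local post-processing (each $u_i$ checks whether some $v_{ij}\in I_1$) and costs $0$ rounds.

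The heart of the argument is stage (3). I would invoke Proposition~\ref{prop:ngbrhood}: with probability at least $1-n_0 p(n_1)$, for every $v\in V(C_1)$ we have $S_v\cap I_1\ne\emptyset$, where $S_v$ is the radius-$R_{\text{small}}$ ball around $v$ with $R_{\text{small}}=100cT(n_0n_1)$. Applying this at $v=v_{i1}$ and again using $\mathrm{dist}_{C_1}(v_{i1},v_{i'j'})=\mathrm{dist}_C(u_i,u_{i'})$, I obtain that for every clique index $i$ there is a clique $D(u_{i'})$ with $\mathrm{dist}_C(u_i,u_{i'})\le R_{\text{small}}$ containing an element of $I_1$, i.e.\ $u_{i'}\in I$. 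Consequently every vertex of $C$ has an $I$-vertex within cyclic distance $R_{\text{small}}$, so the arc between any two consecutive vertices of $I$ along $C$ has at most $2R_{\text{small}}$ interior vertices (otherwise its middle vertex would be $I$-free within radius $R_{\text{small}}$). Since $C_2=C\setminus J$ deletes $I$ together with its neighbors, each connected component of $C_2$ is a subpath of one of these arcs and thus has $O(T(n_0n_1))$ vertices. Finally, a fixed maximal independent set of a path on $\ell$ vertices is computable in $O(\ell)$ LOCAL rounds — running stage (3) for $\Theta(R_{\text{small}})$ rounds lets every vertex gather its whole component and output its membership in that component's canonical MIS — so stage (3) takes $O(T(n_0n_1))$ rounds on this event. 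Summing the three stages yields the claimed $O(T(n_0n_1))$ bound, with failure probability $n_0 p(n_1)$ inherited entirely from Proposition~\ref{prop:ngbrhood}.

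I expect the main obstacle to be bookkeeping rather than any genuinely hard probabilistic step: making precise the distance/ball/``gap'' translations between $C_1$ and $C$, in particular that $S_{v_{i1}}\cap I_1\ne\emptyset$ really forces an $I$-vertex within cyclic distance $R_{\text{small}}$ in $C$, that the inclusive-neighborhood deletion defining $C_2$ only shrinks components, and that stage (1)'s simulation genuinely fits in $T(n_0n_1)$ rounds on $C$. All of the concentration is already packaged in Proposition~\ref{prop:ngbrhood}, so no new tail bound is needed here.
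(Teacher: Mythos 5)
Your proposal is correct and follows essentially the same route as the paper: simulate $\AIS(C_1)$ on $C$ in $T(n_0n_1)$ rounds by having each $u_i$ handle its clique's randomness, compute $I$, $J$, $C_2$ in $O(1)$ rounds, and use Proposition~\ref{prop:ngbrhood} to bound each component of $C_2$ by $O(T(n_0n_1))$ vertices so that the final MIS-filling stage runs in $O(T(n_0n_1))$ rounds. Your write-up is in fact somewhat more explicit than the paper's about the distance correspondence between $C_1$ and $C$ and about translating ``$S_v\cap I_1\ne\emptyset$ for all $v$'' into a bound on the gaps in $C$, but these are elaborations of the same argument rather than a different one.
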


\begin{proof}
	We go through the $\LMIS$ algorithm line by line. The call to $\AIS(C_1)$ can be implemented in the LOCAL model on $C$ as follows. Any $T$-round LOCAL algorithm can be viewed as independently flipping coins at each vertex and sending the IDs and randomness of a vertex $u$ to each vertex $v$ in its $T$-neighborhood, followed by no additional communication. This communication can be done in $C$ by having $u_i$ generate the randomness used by all $v_{ij}$s in $\AIS(C_1)$. Then, $u_i$ sends this randomness and the IDs of all $v_{ij}$s to each vertex in the $T(n_0n_1)$-neighborhood of $u_i$ in $C$. Finally, the $\AIS$ algorithm's execution on $v_{ij}$ can be run on $u_i$ instead. Thus, the call to $\AIS(C_1)$ takes at most $T(n_0n_1)$ rounds.
	
	$I$, $J$, and $C_2$ can each be computed in at most two rounds. By Proposition \ref{prop:ngbrhood}, $C_2$ has connected components with size at most $O(T(n_0n_1))$ with probability at least $1 - n_0 p(n_1)$. Thus, for each connected component $D$ of $C_2$, the vertices $u\in D$ can be sent $D$ in $O(T(n_0n_1))$ rounds. With no futher communication, the vertices $u\in D$ each use the same algorithm to compute a maximal independent set of $D$. This completes all lines of the algorithm. Therefore, the algorithm takes $O(T(n_0n_1))$ time with probability at least $1 - n_0 p(n_1)$, as desired.
\end{proof}

\begin{proof}[Proof of Lemma \ref{lem:reduction}]
	Follows immediately from Propositions \ref{prop:max-correct} ($S$ is an MIS) and \ref{prop:mis-runtime} (for runtime).
\end{proof}

Given Lemma \ref{lem:reduction}, we can now prove that any algorithm for approximate independent set that succeeds with arbitrarily high probability must take $\Omega(\log^* n)$ rounds:

\begin{theorem}\label{thm:full-lb}
	For any constant $b$, any randomized $o(\log^* n)$-time algorithm that computes an independent set with size greater than $\Omega(n/\Delta)$ in an $n$-vertex graph succeeds with probability at most $1 - 1/(10\log^{(b)} n)$, where $\log^{(b)}(x)$ is the function defined recursively as $\log^{(0)}(x) = x$ and $\log^{(b)}(x) = \log \log^{(b-1)}(x)$. 
\end{theorem}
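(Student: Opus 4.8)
The plan is to argue by contradiction: feed a hypothetical fast and reliable approximate-$\MaxIS$ algorithm into the reduction of Lemma~\ref{lem:reduction} to produce a fast and reliable $\MIS$ algorithm for the cycle, which Naor's lower bound (Theorem~\ref{fact:mis-lb}) forbids. Fix the constant $b$ and suppose, toward a contradiction, that there is a randomized $o(\log^* n)$-round LOCAL algorithm $\AIS$ that, on every $n$-vertex graph, returns an independent set of size at least $n/(c\Delta)$ (where $c$ is the constant implicit in the hypothesized $\Omega(n/\Delta)$) with probability strictly greater than $1 - 1/(10\log^{(b)} n)$. Write $T(n) = o(\log^* n)$ for its round complexity. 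Its failure probability on $n$-vertex inputs is $< 1/(10\log^{(b)} n)$; letting $p(n)$ be the supremum of that failure probability over all input sizes $\ge n$, the function $p$ is non-increasing and still satisfies $p(n) < 1/(10\log^{(b)} n)$ (since $m\mapsto 1/(10\log^{(b)} m)$ is itself non-increasing), so Lemma~\ref{lem:reduction} applies to $\AIS$ with this $p$.

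Given an $n_0$-vertex cycle $C$, invoke Lemma~\ref{lem:reduction} with $n_1 := E_b(n_0)$, where $E_0(x)=x$ and $E_i(x) = 2^{E_{i-1}(x)}$ is the $i$-fold iterated exponential; then $n_1$ is a positive integer, $n_1 \ge n_0$, and $\log^{(b)} n_1 = n_0$. This produces an algorithm $\LMIS(C)$ that outputs a maximal independent set of $C$ in $O(c\,T(n_0 n_1))$ rounds, with probability at least $1 - n_0\,p(n_1) > 1 - \frac{n_0}{10\log^{(b)} n_1} = \frac{9}{10}$.

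It remains to place this round complexity below Naor's threshold. The crucial structural fact is that $\log^*$ is essentially insensitive to a bounded number of exponentiations: $\log^*(E_b(x)) = b + \log^*(x)$ and $\log^*(y^2) \le \log^*(y) + 1$, so $\log^*(n_0 n_1) \le \log^*(n_1^2) \le \log^*(n_1) + 1 = \log^* n_0 + b + 1$. Since $b$ is constant and $T(\cdot) = o(\log^*(\cdot))$, we get $T(n_0 n_1) = o(\log^*(n_0 n_1)) = o(\log^* n_0)$, hence $O(c\,T(n_0 n_1)) = o(\log^* n_0)$; in particular, for all sufficiently large $n_0$ it is strictly less than $\frac{1}{2}(\log^* n_0) - 4$. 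But then $\LMIS$ computes a maximal independent set of the $n_0$-cycle in fewer than $\frac{1}{2}(\log^* n_0) - 4$ rounds with success probability at least $9/10 > 1/2$, contradicting Theorem~\ref{fact:mis-lb}. This contradiction proves the theorem; choosing $b = 2$ recovers Theorem~\ref{thm:LB}.

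The main obstacle is the simultaneous control of two competing quantities in the choice of $n_1$. To defeat the union bound over the $n_0$ cliques of $C_1$ (each of which can fail with probability $p(n_1)$), one must take $n_1$ astronomically large — a height-$b$ tower of exponentials in $n_0$ — yet the reduction runs $\AIS$ on an $n_0 n_1$-vertex instance, and every increase in $\log^*(n_0 n_1)$ is subtracted directly from the $\Omega(\log^* n_0)$ round budget the contradiction relies on. The argument goes through precisely because a constant-height tower inflates $\log^*$ by only $O(b) = O(1)$; a tower of non-constant height, which demanding failure probability $1/\poly(n)$ would force, breaks the reduction — consistent with the $O(1/\epsilon)$-round, high-probability algorithm for low-degree graphs (Theorem~\ref{thm:LD}).
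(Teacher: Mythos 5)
Your proposal is correct and follows essentially the same route as the paper's proof: contradiction via Lemma~\ref{lem:reduction} with $n_1$ chosen as a height-$b$ tower of exponentials over $n_0$ (so that $\log^{(b)} n_1 = n_0$ kills the union bound while $\log^*(n_0 n_1) = \log^* n_0 + O(b)$ preserves the round budget), concluding with Naor's bound. Your added care about the monotonicity of $p$ and the explicit $\log^*$ arithmetic only makes explicit details the paper leaves implicit.
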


\begin{proof}
	Suppose, for the sake of contradiction, that there exists a LOCAL algorithm $\AIS(G)$ that, when given an $n$-vertex graph $G$, takes $o(\log^* n)$ time and outputs an $\Omega(n/\Delta)$-vertex independent set of $G$ with probability at least $1 - 1/(10\log^{(b)} n)$. For some value of $n_0$, define $n_1$ as follows. Let $n_1^{(0)} = n_0$, let $n_1^{(i)} = 2^{n_1^{(i-1)}}$ for all $i > 0$, and let $n_1 = n_1^{(b)}$. By Lemma \ref{lem:reduction}
	and the fact that $n_1\ge n_0$, there is an $o(\log^* (n_0n_1)) = o(b + \log^* n_0) = o(\log^* n_0)$-round LOCAL algorithm $\LMIS(C)$ that, given an $n_0$-vertex cycle graph $C$, outputs a maximal independent set of $C$ with probability at least $1 - n_0 (1/(10\log^{(b)}(n_1))) = 9/10$. The existence of such an algorithm contradicts Theorem \ref{fact:mis-lb}, as desired.
\end{proof}

\begin{proof}[Proof of Theorem \ref{thm:LB}]
Theorem \ref{thm:LB} is implied immediately by Theorem \ref{thm:full-lb} with $b = 2$.
\end{proof}

\section{Discussion}\label{sec:dis}
There are many interesting open questions that are left unsolved following this work. The first obvious open question is to close the gap between our upper and lower bounds. Here, we provide some more open questions. First, we showed that in the randomized case, when we usually look for algorithms that succeed with high probability, finding an $O(\Delta)$-approximation to MaxIS is strictly easier than MIS. An immediate interesting open question in whether the same holds for the deterministic case. The following is open for both the LOCAL and CONGEST models.

\begin{oq}
	Let $T_{\alpha}(\MaxIS)$ be the running time of finding an $\alpha$-approximation to $\MaxIS$ deterministically in the CONGEST model, and let $T(MIS)$ be the running time of finding a maximal independent set in the CONGEST model. Prove or disprove that there is a constant $c$ for which it holds that $$T_{c\Delta}(\MaxIS)=o(T(\MIS))$$
\end{oq}

Another interesting implication of our result is the following one. Observe that in the sequential setting, one can find a $(\Delta+1)$-approximation for $\MaxIS$ by finding a $(\Delta+1)$-vertex-colouring. This is because we can simply take the colour class of maximum weight, which is a $(\Delta+1)$-approximation for $\MaxIS$. However, in the distributed setting, it not clear how to use a colouring to find a good approximation for $\MaxIS$. This is because finding the colour class of maximum weight requires $\Omega(D)$ rounds, where $D$ is the diameter of the network, which can be as large as $n-1$. 

Interestingly, our upper and lower bounds for $O(\Delta)$-approximation for $\MaxIS$ match the best currently known upper and lower bounds for distributed $(\Delta+1)$-colouring~\cite{ChangLP18,linial1992locality}. While this doesn't necessarily imply any connection between the two problems in the distributed setting, it might hint for a possible one. The following is open for both the LOCAL and CONGEST models.

\begin{oq}
	Prove or disprove: Given a $T$ rounds algorithm for $(\Delta+1)$-colouring, it is possible to find an $O(\Delta)$-approximation to $\MaxIS$ in $O(T)$ rounds. 
\end{oq}

\paragraph{Acknowledgments:} We would like to thank Amir Abboud, Shafi Goldwasser, Siqi Liu, Sidhanth Mohanty, Omri Shmueli, Rotem Tsabary and Richard Zhang for fruitful discussions.

\bibliographystyle{plain}
\bibliography{paper}

\appendix

\section{Deferred Proofs from Section~\ref{sec:warmup}}\label{app:warm}

\begin{proof}[\textbf{Proof of Lemma~\ref{lem:MISSIZE}}]
	The idea is to pad $H$ with more vertices and then to run an algorithm for maximal independent set on the new graph. In fact, the easiest way to see this is to argue that $\mathcal{A}$ finds a maximal independent set with high probability on the graph $H'$ obtained by adding $n-n_H$ isolated nodes to $H$ with unique identifiers. Since any maximal independent set in $H'$ induces a maximal independent set in $H$, the claim follows. However, some of the algorithms in the CONGEST model assume that the input graph is connected\footnote{This assumption is usually made for global problems such as computing the diameter or all-pairs-shortest-paths. This is because global problems admit an $\Omega(D)$ lower bound, where $D$ is the diameter of the network, which is $\infty$ for disconnected graphs. While assuming connectivity might not seem reasonable for the MIS problem, for completeness, we want our reduction to hold even for algorithms that make this assumption.}. To get around the connectivity issue, we define the graph $H'$ obtained by adding a path of $\poly(n)$ nodes with unique $\Theta(\log n)$-bit identifiers to each node that is \emph{local minimum} in $H$ (with respect to the identifiers). Each node that is added to a path connected to a local minimum $u\in V_H$, is given a unique identifier starting with the $c\log n$ bits of the identifier of  $u$ as the LSB's (least significant bits), followed by another $c\log n$ bits to ensure that the identifier is unique with respect to the other nodes on the same path. Observe that $H'$ is a graph of $\poly(n)$ nodes, with unique identifiers of $\Theta(\log n)$ bits. Hence, $H'$ is an appropriate input to the CONGEST model. Furthermore, given a maximal independent set $I'$ of $H'$, one can easily find a maximal independent set in $H$, as follows. Let $I=I'\cap V_H$. Each node that is a local minimum in $H$ joins $I$ if none of its neighbors in $H$ is in $I$. It holds that $I$ (after adding the additional nodes) is a maximal independent set in $H$. Since the nodes in $H$ can easily simulate a maximal independent set algorithm in $H'$, without any additional communication cost, it follows that the total running time is $\MIS(|V_{H'}|,\Delta_{H'})+1$, where $V_{H'}$ and $\Delta_{H'}$ are the set of nodes and maximum degree in $H'$, respectively. Since $\Delta_{H'}\leq \Delta_H+1$, and $|V_{H'}|=\poly(n)$, it holds that $\MIS(|V_{H'}|,\Delta_{H'})=\MIS(\poly(n),\Delta_H)$. Moreover, for any $n$ we know that for the specific problem of finding a maximal independent set it holds that $\MIS(\poly(n),\Delta)=O(\MIS(n,\Delta))$. This is because the round-complexity of finding a maximal independent set is at most logarithmic in the number of nodes. Finally, since a maximal independent set algorithm in $H'$ succeeds with probability $1-1/\poly(|V_H'|)\geq 1-1/\poly(n)$, the claim follows.
\end{proof}

\end{document}